\documentclass[11pt,a4paper]{article}

\usepackage[margin=1in]{geometry}
\usepackage{setspace}
\usepackage[T1]{fontenc}
\usepackage[utf8]{inputenc}
\usepackage{lmodern}

\usepackage{amsmath,amssymb,amsfonts,amsthm}
\usepackage{mathtools}
\usepackage{bbm}
\usepackage{bm}
\usepackage{subcaption}

\usepackage{graphicx}
\usepackage{tikz}
\usetikzlibrary{arrows.meta}
\usepackage{booktabs}
\usepackage{array}
\usepackage{multirow}

\usepackage{xcolor}
\usepackage[numbers,sort&compress]{natbib}
\usepackage[
    colorlinks=true,
    linkcolor=blue,
    citecolor=blue,
    urlcolor=blue
]{hyperref}
\usepackage[nameinlink,capitalise,noabbrev]{cleveref}
\usetikzlibrary{positioning}

\usepackage{enumitem}
\setlist[itemize]{leftmargin=2em}
\setlist[enumerate]{leftmargin=2em}

\theoremstyle{plain}
\newtheorem{theorem}{Theorem}[section]
\newtheorem{proposition}[theorem]{Proposition}
\newtheorem{lemma}[theorem]{Lemma}
\newtheorem{corollary}[theorem]{Corollary}

\theoremstyle{definition}
\newtheorem{definition}[theorem]{Definition}
\newtheorem{assumption}[theorem]{Assumption}
\newtheorem{example}[theorem]{Example}

\theoremstyle{remark}
\newtheorem{remark}[theorem]{Remark}

\title{A Mathematical Framework for Topological Causal Data Analysis}

\author{
Hugo Gobato Souto\thanks{Dell Technologies. Email: \texttt{hugo.souto@dell.com}}
\and
Ioannis Diamantis\thanks{Department of Data Analytics and Digitalisation, Maastricht University, Maastricht, The Netherlands. Email: \texttt{i.diamantis@maastrichtuniversity.nl}}
}

\date{}

\begin{document}

\maketitle

\begin{abstract}
Many modern outcomes, including images, point clouds, networks, and spatial fields, are structured objects for which \(Y^1-Y^0\) may be undefined or scientifically inadequate. We introduce \emph{Topological Causal Data Analysis} (TCDA), a framework separating the observation space, causal-model class, topological representation, and causal query. Topology does not define interventions; it supplies stable, shape-sensitive summaries after causal assumptions have been specified. We distinguish outcome-level TCDA, which transforms individual potential outcomes, from distribution-level TCDA, which transforms interventional outcome laws, and characterize when outcome and distribution level contrasts agree. Building on recent outcome-level theory, we formulate identification and doubly robust representations for Banach-space-valued summaries. At the distribution level, we identify targets through the standard causal \(g\)-formula and derive stability-transfer bounds and plug-in consistency. We also place target-specific topological ignorability within the framework, clarifying when a covariate-standardized coarse effect can be identified without identifying the full interventional laws. Finally, we delimit the role of observational topology in causal discovery: it can assist diagnosis on restricted model classes but cannot by itself identify causal structure.
\end{abstract}

%\tableofcontents

%%%%%%%%%%%%%%%%%%%%%%%%%%%%%%%%%%%%%%%%%%%%%%%%%%%%%%%%

\section{Introduction}
\label{sec:introduction}

Many causal questions concern outcomes that are not naturally scalar. A treatment may alter the shape of a tumour, the organization of a brain network, the geometry of a molecular conformation, or the spatial structure of a climate field. For a real-valued outcome, the average treatment effect compares
\(\mathbb E[Y^1]\) and \(\mathbb E[Y^0]\). For shapes, images, networks, point clouds, and other structured objects, however, subtraction may be undefined or scientifically uninformative, while reducing the outcome to a single number may discard the structure of interest.

Topological data analysis provides multiscale summaries of such structure. A filtration examines an object across a range of scales, and persistent homology records when connected components, loops, cavities, and higher-dimensional features appear and disappear~\citep{carlsson2009,edelsbrunnerharer2010}. Persistence diagrams are metric-space-valued objects. Common summaries, including persistence landscapes, silhouettes, Betti curves, and persistence images, map them into function or vector spaces and often satisfy quantitative stability guarantees on suitable diagram classes~\citep{bubenik2015,chazal2014stochastic,adams2017}.

These constructions do not by themselves have a causal interpretation. A persistent feature may describe observational geometry without corresponding to an intervention, and a topological summary cannot remove confounding or determine treatment assignment. Causal meaning must come from a potential-outcome model, structural causal model, experimental design, or another set of causal assumptions. Topology can then specify the feature on which a causal contrast is evaluated.

This paper develops a mathematical framework for this combination, which we call \emph{Topological Causal Data Analysis} (TCDA). Its guiding principle is that four layers must be specified separately: the observation space, causal-model class, topological representation, and causal query. We formalize this architecture through
\(\mathfrak P_{\mathrm{TCDA}}=(S,\mathfrak M(G),T,C)\). The separation prevents three distinct questions from being conflated: which intervention is considered, which causal object is identified from the observed data, and which topological feature of that object is scientifically relevant.

% FIGURE PLACEHOLDER:
% A schematic of the four TCDA layers and the two branches:
% causal model -> potential outcomes/interventional laws;
% outcome branch: Y^a -> T_out(Y^a) -> expected contrast;
% distribution branch: L(Y^a) -> T_dist(L(Y^a)) -> law-level contrast.

Topology can enter this framework at two principal levels. In
\emph{outcome-level TCDA}, let \(Y^a\) denote the potential outcome under treatment \(a\in\{0,1\}\), and let
\(T_{\mathrm{out}}:\mathcal Y\to\mathcal B\) be a measurable representation into a separable Banach space. If
\(T_{\mathrm{out}}(Y^a)\) is Bochner integrable, the outcome-level effect is \(\operatorname{TATE}_{\mathrm{out}} = \mathbb E[T_{\mathrm{out}}(Y^1)] - \mathbb E[T_{\mathrm{out}}(Y^0)]\). It compares the expected topological summaries of individual potential outcomes. For example, if \(Y^a\) is a tumour shape and \(T_{\mathrm{out}}\) is a persistence landscape, the effect describes how treatment changes the expected persistent structure of individual shapes across filtration scales.

In \emph{distribution-level TCDA}, one first forms the interventional law \(P_Y^a=\mathcal L(Y^a)\) and then applies a representation
\(T_{\mathrm{dist}}: \mathcal P_\star(\mathcal Y)
\longrightarrow \mathcal Z_{\mathrm{dist}}\). If \(\mathcal Z_{\mathrm{dist}}=\mathcal B\) is a Banach space, the distribution-level effect is \(\Delta_{\mathrm{dist}} = T_{\mathrm{dist}}(P_Y^1) - T_{\mathrm{dist}}(P_Y^0)\). If the target is only metric, one instead considers the scalar discrepancy \(\delta_{\mathrm{dist}} = d_{\mathrm{dist}}\!\left(
T_{\mathrm{dist}}(P_Y^1), T_{\mathrm{dist}}(P_Y^0)
\right)\). A suitable density or mass sensitive representation can therefore detect a change from one persistent population cluster to two separated clusters even when the interventional laws have the same mean.

The distinction is substantive rather than notational. Define the affine law functional \(\mathcal A(P) = \int_{\mathcal Y}T_{\mathrm{out}}(y)\,dP(y)\), whenever the integral exists. Then the outcome-level effect is \(\mathcal A(P_Y^1)-\mathcal A(P_Y^0)\), whereas the distribution-level effect applies \(T_{\mathrm{dist}}\) directly to the laws. When the two maps share a common codomain, their contrasts agree for every pair of laws exactly when \(T_{\mathrm{dist}}-\mathcal A\) is constant on the relevant class. In particular, a non-affine \(T_{\mathrm{dist}}\) cannot agree universally with the outcome-level construction. Thus the order of topological transformation and population aggregation is part of the definition of the causal target.

The outcome-level branch builds directly on Kim and Lee~\citep{kimlee2026}. They define a function-valued causal effect using the power-weighted silhouette of each potential outcome's persistence diagram, identify it under standard causal assumptions, construct an efficient doubly robust estimator, establish functional weak convergence, and develop simultaneous inference and a test of no topological effect. Their silhouette estimand is a concrete instance of
\(\operatorname{TATE}_{\mathrm{out}}\), and their inferential and silhouette-specific stability results are cited rather than reproved here.

Causal effects for outcomes in general metric spaces were studied by Shin et al.~\citep{shinetal2024} through Fr\'echet means and geometric medians; their theory motivates the vectorization-free diagram comparisons considered below. At the level of interventional laws, Saki and Faghihi~\citep{sakifaghihi2026} study topological causal contrasts and target-specific topological ignorability. Their results explain how a selected covariate-standardized topological summary may be identified without identification of the complete interventional laws, while also showing why conditional topological ignorability does not generally identify marginal topology.

The purpose of the present paper is to place these constructions within a common architecture and develop the links between causal identification, topological representation, and stability. Its main contributions are as follows.

\begin{enumerate}
\item \emph{A unified TCDA framework.}
We separate the observation space, causal-model class, topological representation, and causal query, and distinguish outcome-level, marginal distribution-level, and covariate-standardized distribution-level targets.

\item \emph{A Banach-space formulation of outcome-level TCDA.}
We state the measurability, Bochner-integrability, identification, inverse-probability, augmented-weighting, and product-rate conditions needed for Banach-space-valued topological outcomes. This separates the general functional causal machinery from results specific to the power-weighted silhouette.

\item \emph{Distribution-level effects and noncommutation.}
We apply topology to interventional laws identified by standard causal arguments and characterize when outcome- and distribution-level contrasts agree. Universal agreement is governed by constancy of
\(T_{\mathrm{dist}}-\mathcal A\), while non-affinity of
\(T_{\mathrm{dist}}\) provides a general obstruction.

\item \emph{Regularity, stability, and plug-in guarantees.}
We transfer Lipschitz stability of topological representations to their causal contrasts, using the diagram metric appropriate to the chosen vectorization. Sublevel-set, Vietoris--Rips, and distance-to-measure persistence provide concrete instances. For distance-to-measure persistence, \(W_2\)-error in the estimated interventional laws directly controls the error of the plug-in topological effect.

\item \emph{Target-specific identification and discovery limits.}
With explicit attribution, we place vectorization-free effects and topological ignorability within the same framework. We also formalize the separation property required for observational topology to distinguish restricted causal-model classes, while showing that topology alone neither orients causal relations nor replaces causal assumptions.
\end{enumerate}

Topological ignorability remains a target-dependent and generally untestable assumption. For a non-injective representation it can be strictly weaker than weak conditional exchangeability only on model classes containing distinct relevant laws in the same fiber of the representation. It does not provide generic robustness to hidden confounding or identify the complete interventional laws.

Section~\ref{sec:preliminaries} fixes the notation and regularity conditions, and Section~\ref{sec:tcda-framework} introduces the general framework. Sections~\ref{sec:outcome} and
\ref{sec:distribution-level-tcda} develop its outcome- and distribution-level branches. Section~\ref{sec:stability} establishes stability and plug-in bounds. Sections~\ref{sec:topological-identifiability} and
\ref{sec:topological-ignorability} examine topology-assisted discovery and target-specific ignorability, respectively. Section~\ref{sec:minimal-examples} gives brief illustrations before the conclusion.

%%%%%%%%%%%%%%%%%%%%%%%%%%%%%%%%%%%%%%%%%%%%%%%%%%%%%%%%%

\section{Preliminaries}
\label{sec:preliminaries}

This section fixes the causal, topological, and analytic notation used throughout the paper. We state only the background needed to define the TCDA framework; standard results are cited without proof.

\subsection{Causal setup}
\label{subsec:causal-setup}

Let \((\mathcal X,\Sigma_{\mathcal X})\) and
\((\mathcal Y,\Sigma_{\mathcal Y})\) be standard Borel spaces. We observe \(O=(X,A,Y)\), where \(X\in\mathcal X\) denotes pre-treatment covariates, \(A\in\{0,1\}\) a binary treatment, and \(Y\in\mathcal Y\) an outcome. The standard-Borel assumption ensures the existence of the regular conditional distributions used below.

For \(a\in\{0,1\}\), let \(Y^a\) denote the potential outcome under the intervention \(A=a\)~\citep{rubin1974,pearl2009,imbensrubin2015}. When
\(Y^0\) and \(Y^1\) are real-valued and integrable, the classical average treatment effect is
\(\operatorname{ATE} = \mathbb E[Y^1]-\mathbb E[Y^0]\). For shape, image, graph, point-cloud, or function-valued outcomes, subtraction may be undefined or scientifically uninformative. TCDA replaces the raw outcome, or its interventional law, by a topological representation.

\begin{assumption}[Consistency]
\label{ass:consistency}
For \(a\in\{0,1\}\), \(Y=Y^a\) almost surely on \(\{A=a\}\);
equivalently, \(Y=Y^A\) almost surely.
\end{assumption}

\begin{assumption}[Conditional exchangeability]
\label{ass:exchangeability}
\((Y^0,Y^1)\perp A\mid X\).
\end{assumption}

For \(a\in\{0,1\}\), define \(e_a(x)=\mathbb P(A=a\mid X=x)\). Thus \(e_1(x)=e(x)\), where \(e\) is the propensity score, and \(e_0(x)=1-e(x)\).

\begin{assumption}[Strict positivity]
\label{ass:positivity-strict}
For \(a\in\{0,1\}\), \(e_a(X)>0\) almost surely; equivalently,
\(0<e(X)<1\) almost surely.
\end{assumption}

\begin{assumption}[Strong positivity]
\label{ass:positivity-uniform}
There exists \(\eta\in(0,1/2]\) such that
\[
\eta\leq e(X)\leq1-\eta
\qquad\text{almost surely}.
\]
\end{assumption}

Strict positivity ensures that inverse-probability weights are almost surely finite, whereas strong positivity bounds them uniformly by \(1/\eta\). Unless stated otherwise, \emph{positivity} refers to strict positivity.

Under consistency, conditional exchangeability, and strict positivity, the interventional laws are identified by the standard causal \(g\)-formula~\citep{robins1986,hernanrobins2020}. Let
\[
Q_a(x,\cdot)
:=
\mathcal L(Y\mid A=a,X=x)
\]
denote a version of the observed conditional outcome law. Then
\[
\mathcal L(Y^a)
=
\int_{\mathcal X}Q_a(x,\cdot)\,dP_X(x),
\qquad a\in\{0,1\}.
\]
Equivalently, for every bounded measurable
\(h:\mathcal Y\to\mathbb R\),
\(\mathbb E[h(Y^a)] = \mathbb E\!\left[
\mathbb E\{h(Y)\mid A=a,X\}
\right]\). All conditional laws and conditional expectations are understood up to the appropriate almost-sure equivalence.

\subsection{Topological representations}
\label{subsec:topological-representations}

Let \(\mathcal Y\) be a space of structured outcomes. A filtration assigned to \(y\in\mathcal Y\) is a nested family
\[
\mathcal F(y)=\{K_t(y)\}_{t\in I},
\qquad
K_s(y)\subseteq K_t(y)\quad\text{for }s\leq t,
\]
indexed by \(I\subseteq\mathbb R\). Point clouds may be represented by Vietoris--Rips or \v{C}ech filtrations, images by cubical filtrations, functions by sublevel- or superlevel-set filtrations, and probability laws by density, support, kernel-smoothed, or distance-to-measure
filtrations~\citep{edelsbrunnerharer2010,oudot2015,
boissonnatchazalyvinec2018,chazalmichel2021}.

Fix a coefficient field \(\Bbbk\) and a homological degree \(k\geq0\). Applying homology to \(\mathcal F(y)\) gives the persistence module
\[
\operatorname{PH}_k(\mathcal F(y))
=
\bigl\{
H_k(K_t(y);\Bbbk),\iota_{s,t}^{(k)}
\bigr\}_{s\leq t},
\]
where \(\iota_{s,t}^{(k)}\) is induced by \(K_s(y)\hookrightarrow K_t(y)\). The groups \(H_0,H_1,H_2\)
respectively describe connected components, loops or tunnels, and cavities; the Betti number \(\beta_k(K_t(y))=\dim_{\Bbbk}H_k(K_t(y);\Bbbk)\) counts independent \(k\)-dimensional homology classes.

Under the finiteness conditions stated below, the persistence module has a persistence diagram \(D_k(y) = \operatorname{Dgm}_k(\mathcal F(y))\). A finite point \(u=(b,d)\in D_k(y)\) represents a feature born at \(b\) and dying at \(d\), with persistence \(\operatorname{pers}(u)=d-b\). Unless stated otherwise, diagram metrics and vectorizations are applied to the finite part of the diagram; essential classes with death \(+\infty\) are removed through an appropriate convention or treated separately.

Persistence diagrams are metric rather than linear objects. Let \((\mathsf D_k,d_{\mathsf D})\) denote a specified metric space of degree-\(k\) diagrams, equipped, for example, with the bottleneck distance \(d_B\) or a \(p\)-Wasserstein distance \(W_p\) \citep{cohensteineredelsbrunnerharer2007,mileyko2011,turner2014}. The admissible diagram class is always chosen so that the relevant metric is finite.

A degree-\(k\) topological summary is a map
\(\Phi: (\mathsf D_k,d_{\mathsf D})
\longrightarrow \mathcal Z\), where \(\mathcal Z\) is a metric space. When expectations and linear contrasts are required, \(\mathcal Z=\mathcal B\) is a Banach space. Typical single-degree choices include persistence landscapes, silhouettes, persistence images, Betti curves, and kernel-based representations~\citep{bubenik2015,bubenikdlotko2017,chazal2014stochastic,adams2017,kwitt2015,carriere2017,kusano2018,le2018}. Euler characteristic curves and other summaries combining homological degrees are obtained from a finite tuple
\((D_j)_{j\in J}\) rather than from a single \(D_k\); we suppress this straightforward notational variant. The diagram metric, target norm or metric, and any restrictions on the admissible diagrams form part of the specification of \(\Phi\).

For an outcome-level pipeline, write
\(T_{\mathrm{out}} = \Phi\circ\operatorname{Dgm}_k\circ\mathcal F\), and
\(T_{\mathrm{out}}: \mathcal Y\longrightarrow\mathcal Z_{\mathrm{out}}\). If \(\Phi\) is the identity, the representation is diagram-valued and must be analysed using metric-space methods rather than linear expectations.

A distribution-level representation instead acts on a specified metric class \((\mathcal P_\star(\mathcal Y),d_{\mathcal P})\) of Borel probability laws:
\[
T_{\mathrm{dist}}
=
\Phi\circ\operatorname{Dgm}_k\circ\mathcal F,
\qquad
T_{\mathrm{dist}}:
\mathcal P_\star(\mathcal Y)
\longrightarrow
\mathcal Z_{\mathrm{dist}}.
\]
Here the filtration is applied to a probability law rather than to an individual outcome. The notation is schematic: the filtration, homological degree, summary map, and target space need not be the same in the outcome- and distribution-level pipelines.

Continuity or Lipschitz stability of \(\Phi\) must be stated with respect to the same diagram metric used to control the persistence diagram. For example, continuity in \(W_p\) does not follow from bottleneck stability without additional assumptions. The precise metric compatibility used for the standard vectorizations is recorded in Section~\ref{sec:stability}.

The filtration and summary map are part of the scientific modelling choice. A Vietoris--Rips filtration emphasizes metric proximity, a cubical filtration emphasizes spatial intensity structure, and a distance-to-measure filtration emphasizes the geometry of local probability mass. TCDA does not select this representation automatically; the choice must be justified by the outcome modality and the causal question.

\subsection{Regularity of the persistence pipeline}
\label{subsec:pipeline-regularity}

Three regularity requirements are needed before a topological
representation can be used as a causal outcome: the persistence diagram must exist, the representation must be measurable, and any Banach-space-valued random element whose expectation is taken must be Bochner integrable.

\paragraph{\(q\)-tameness.}
A one-parameter persistence module
\(V=\{V_t,\varphi_{s,t}\}_{s\leq t}\) is \(q\)-tame if
\(\operatorname{rank}(\varphi_{s,t})<\infty\) for every \(s<t\). A \(q\)-tame module admits a well-defined persistence diagram~\citep{oudot2015,chazaldesilvaglisseoudot2016}. No stronger interval-decomposition statement is needed below.

\begin{assumption}[\(q\)-tameness]
\label{ass:q-tameness}
For each homological degree used in the analysis:

\begin{enumerate}
\item \(\operatorname{PH}_k(\mathcal F(y))\) is \(q\)-tame for every relevant outcome \(y\);

\item \(\operatorname{PH}_k(\mathcal F(P))\) is \(q\)-tame for every law \(P\) in the domain of \(T_{\mathrm{dist}}\).
\end{enumerate}
\end{assumption}

This condition is automatic for filtrations of finite simplicial or cubical complexes and for the standard filtrations of finite point clouds. Infinite, compact-set, and population-level filtrations require their own \(q\)-tameness justification. In particular, finite second
moment and Lipschitz regularity of a distance-to-measure function do not, by themselves, imply \(q\)-tameness of every induced sublevel-set persistence module.

\paragraph{Diagram spaces.}
When \(W_p\) is used with \(1\leq p<\infty\), we work in the space \(\mathsf D_{k,p}\) of diagrams with finite degree-\(p\) total persistence. The space \((\mathsf D_{k,p},W_p)\) is complete and separable~\citep{mileyko2011}. When the bottleneck distance is used, the domain is a specified metric subspace on which the diagram map and the selected vectorization are defined.

\paragraph{Measurability and Bochner integration.}
The following standard conditions are imposed whenever the
corresponding objects are used:

\begin{enumerate}
\item the diagram maps \( y\longmapsto\operatorname{Dgm}_k(\mathcal F(y)),
\ P\longmapsto\operatorname{Dgm}_k(\mathcal F(P))\) are Borel measurable in their stated metrics;

\item the vectorization \(\Phi\) is Borel measurable;

\item whenever expectations are taken, the target \(\mathcal B\) is a separable Banach space and
\[
\mathbb E\|T_{\mathrm{out}}(Y^a)\|_{\mathcal B}<\infty.
\]
\end{enumerate}

Continuity of the diagram map and of \(\Phi\) is a sufficient condition for the first two requirements. Their composition is then Borel measurable, and hence strongly measurable when the target Banach space is separable. These are standard consequences of the Pettis measurability theorem~\citep{diesteluhl1977}.

A strongly measurable Banach-space-valued random element \(U\) is Bochner integrable exactly when \(\mathbb E\|U\|_{\mathcal B}<\infty\). Its expectation is linear and
satisfies
\(\|\mathbb E[U]\|_{\mathcal B} \leq \mathbb E\|U\|_{\mathcal B}\). Bochner conditional expectations satisfy the corresponding tower and conditional Jensen properties~\citep{diesteluhl1977}.

A useful standard sufficient condition is the following. If
\(T_{\mathrm{out}}\) is \(L\)-Lipschitz and, for some
\(y_0\in\mathcal Y\), \(\mathbb E[d_{\mathcal Y}(Y^a,y_0)]<\infty\), then
\[
\mathbb E\|T_{\mathrm{out}}(Y^a)\|_{\mathcal B}
\leq
\|T_{\mathrm{out}}(y_0)\|_{\mathcal B}
+
L\,\mathbb E[d_{\mathcal Y}(Y^a,y_0)]
<
\infty.
\]
Uniform bounds on the number and persistence of off-diagonal diagram points give alternative sufficient conditions for standard summaries. When such bounds are required later, they are stated for the specific filtration and vectorization being used.

Finally, bounded linear operators commute with Bochner integration. If \(\mathcal T\) is a compact metric space, then
\(C(\mathcal T)\), equipped with the supremum norm, is a separable Banach space, and each point-evaluation map is bounded and linear. Hence
\[
\mathbb E[U](t)
=
\mathbb E[U(t)],
\qquad
t\in\mathcal T,
\]
for every Bochner-integrable \(C(\mathcal T)\)-valued random element \(U\). By contrast, point evaluation is not well defined on the equivalence classes forming \(L^p(\mathcal T)\). Pointwise identities in \(L^p\) therefore require jointly measurable representatives and the appropriate Fubini--Tonelli assumptions.

%%%%%%%%%%%%%%%%%%%%%%%%%%%%%%%%%%%%%%%%%%%%%%%%%%%%%%%%

\section{The TCDA framework}
\label{sec:tcda-framework}

The outcome-level branch of TCDA builds directly on Kim and
Lee~\citep{kimlee2026}, who define and estimate a function-valued causal effect based on power-weighted persistence silhouettes. At the level of interventional laws, Saki and Faghihi \citep{sakifaghihi2026} study topological causal contrasts and target-specific topological ignorability. The purpose of the present section is to place these and related constructions within a common mathematical architecture.

The central principle is that four layers must be specified
separately: the observation space, the causal-model class, the
topological representation, and the causal query. A topological construction does not define an intervention; a causal graph does not determine a filtration; and a persistence diagram does not by itself identify a causal effect. TCDA combines these layers without conflating their mathematical roles.

\subsection{Formal definition}
\label{subsec:formal-definition}

Let \(V_{\mathrm{obs}}\) be a finite set of observed variables. For each \(v\in V_{\mathrm{obs}}\), let \((S_v,\Sigma_v)\) be a standard Borel space, and define
\[
(S,\Sigma)
=
\left(
\prod_{v\in V_{\mathrm{obs}}}S_v,
\bigotimes_{v\in V_{\mathrm{obs}}}\Sigma_v
\right).
\]
An observation is a random element \(O=(O_v)_{v\in V_{\mathrm{obs}}}\in S\). In the binary-treatment
setting, \(O=(X,A,Y)\), where \(X\) denotes pre-treatment covariates, \(A\in\{0,1\}\) treatment, and \(Y\in\mathcal Y\) the outcome.

Let \(G\) be a causal graph, possibly containing latent variables, and let \(\mathfrak M(G)\) be a specified class of causal models compatible with \(G\). Each \(M\in\mathfrak M(G)\) determines an observational law \(P_M^{\mathrm{obs}}\) and the potential-outcome or interventional objects required by the analysis. In particular, whenever \(\operatorname{do}(A=a)\) is defined, write \(P_{Y,M}^a = \mathcal L_M(Y^a)\) for the corresponding interventional outcome law.

An outcome-level topological representation is a measurable map \(T_{\mathrm{out}}: \mathcal Y\longrightarrow\mathcal Z_{\mathrm{out}}\), where \(\mathcal Z_{\mathrm{out}}\) is a specified metric space. When expectations and linear contrasts are used, \(\mathcal Z_{\mathrm{out}}=\mathcal B\) is taken to be a separable Banach space.

Let \(\mathcal P^\star(\mathcal Y)\subseteq\mathcal P(\mathcal Y)\) be a specified metric class of Borel probability laws containing \(P_{Y,M}^0\) and \(P_{Y,M}^1\) for every model under consideration. A distribution-level topological representation is a measurable map
\(T_{\mathrm{dist}}: \mathcal P^\star(\mathcal Y)
\longrightarrow \mathcal Z_{\mathrm{dist}}\), where \(\mathcal Z_{\mathrm{dist}}\) is a specified metric space. Whenever \(\mathcal Z_{\mathrm{dist}}=\mathcal B\) is a Banach space and vector contrasts are used, it is equipped with the norm-induced metric \(d_{\mathrm{dist}}(u,v)=\|u-v\|_{\mathcal B}\).

In persistent-homology applications, either representation typically has the form \(T = \Phi\circ\operatorname{Dgm}_k\circ\mathcal F\), where \(\mathcal F\) is applied either to an individual outcome or to a probability law. The filtration, diagram metric, vectorization, and target space are all part of the specification of \(T\).

\begin{definition}[TCDA problem]
\label{def:tcda-problem}
A \emph{topological causal data analysis problem} is a tuple
\[
\mathfrak P_{\mathrm{TCDA}}
=
(S,\mathfrak M(G),T,C),
\]
where:

\begin{enumerate}
\item \(S\) is the measurable observation space, equipped with any additional structure required by the chosen topological representation;

\item \(\mathfrak M(G)\) is the class of causal models under
consideration;

\item \(T\) is a specified measurable topological representation of outcomes, probability laws, or another object determined by the causal model;

\item \(C\) is the causal query, represented as a map
\(C:\mathfrak M(G)\longrightarrow\mathcal E\) into a specified target space \(\mathcal E\).
\end{enumerate}
\end{definition}

Definition~\ref{def:tcda-problem} separates the definition of the causal target from its identification. The object \(C(M)\) is defined at the level of the causal model; whether it is determined by \(P_M^{\mathrm{obs}}\) is a separate question addressed in later sections.

\subsection{Topological potential outcomes}
\label{subsec:topological-potential-outcomes}

Let \(T_{\mathrm{out}}:\mathcal Y\to\mathcal B\) be measurable, where \(\mathcal B\) is a separable Banach space. Whenever expectations are used, assume
\(\mathbb E_M \bigl[ \|T_{\mathrm{out}}(Y^a)\|_{\mathcal B}
\bigr] <\infty,\ a=0,1\).

\begin{definition}[Topological potential outcome]
\label{def:topological-potential-outcome}
The \emph{topological potential outcome} under treatment level
\(a\in\{0,1\}\) is \(Z^a=T_{\mathrm{out}}(Y^a)\). The observed topological outcome is \(Z=T_{\mathrm{out}}(Y)\).
\end{definition}

These transformed potential outcomes inherit the standard causal assumptions in the expected direction. If \(Y=Y^A\) almost surely, then \(Z=Z^A\) almost surely. If \((Y^0,Y^1)\perp A\mid X\), then \((Z^0,Z^1)\perp A\mid X\), because conditional independence is preserved under measurable transformations. Positivity is unchanged,
since the transformation affects neither \(A\) nor \(X\).

The converses need not hold when \(T_{\mathrm{out}}\) is
non-injective. Exchangeability of transformed potential outcomes may therefore contain less information than exchangeability of the original outcomes. This elementary observation should not be confused with the distribution-level topological-ignorability condition studied in Section~\ref{sec:topological-ignorability}.

More generally, if \(s:\mathcal B\to\mathbb R\) is measurable, then \(s(Z^a)\) may be treated as a scalar potential outcome whenever it is integrable. Examples include norms, integrals, and measurable scalar statistics of persistence-based representations.

\subsection{Basic topological causal estimands}
\label{subsec:topological-causal-estimands}

The framework gives rise to two principal causal targets.

\paragraph{Outcome-level effects.}
Assume that \(Z^0\) and \(Z^1\) are Bochner integrable.

\begin{definition}[Outcome-level topological average treatment effect]
\label{def:outcome-level-tate}
The \emph{outcome-level topological average treatment effect} is
\[
\operatorname{TATE}_{\mathrm{out}}
=
\mathbb E[Z^1]-\mathbb E[Z^0]
=
\mathbb E[T_{\mathrm{out}}(Y^1)]
-
\mathbb E[T_{\mathrm{out}}(Y^0)]
\in\mathcal B.
\]
\end{definition}

This estimand measures the causal effect on the information retained by the chosen representation \(T_{\mathrm{out}}\). It is therefore representation-dependent: different filtrations and vectorizations generally define different causal targets.

When \(\mathcal B=C(\mathbb T)\) for compact \(\mathbb T\), point evaluation is bounded and linear, so
\[
\operatorname{TATE}_{\mathrm{out}}(t)
=
\mathbb E[Z^1(t)]-\mathbb E[Z^0(t)],
\qquad t\in\mathbb T.
\]
For \(\mathcal B=L^p(I)\), the corresponding identity holds for almost every  \(t\), after choosing jointly measurable representatives and under the Fubini conditions stated in
Subsection~\ref{subsec:pipeline-regularity}.

Taking \(T_{\mathrm{out}}\) to be the power-weighted persistence silhouette recovers the effect curve \(\psi_k\) of Kim and Lee~\citep{kimlee2026}. Their identification, efficient estimation, functional weak convergence, testing, and silhouette-stability results remain their results and are invoked with attribution in Section~\ref{sec:outcome}.

A scalar outcome-level effect can be obtained from a measurable functional \(s:\mathcal B\to\mathbb R\):

\begin{definition}[Scalar topological average treatment effect]
\label{def:scalar-tate}
If \(\mathbb E|s(Z^a)|<\infty\) for \(a=0,1\), define
\[
\operatorname{TATE}_{s}
=
\mathbb E[s(Z^1)]-\mathbb E[s(Z^0)].
\]
\end{definition}

For nonlinear \(s\), this generally differs from
\(s(\operatorname{TATE}_{\mathrm{out}})\). If
\(s\in\mathcal B^\ast\) is bounded and linear, however, then
\(\operatorname{TATE}_{s} = s(\operatorname{TATE}_{\mathrm{out}})\). Thus the distinction between scalarizing before and after averaging arises from nonlinearity.

\paragraph{Distribution-level effects.}
Let
\(T_{\mathrm{dist}}:\mathcal P^\star(\mathcal Y)\to\mathcal Z_{\mathrm{dist}}\) be defined at the two interventional laws. If \(\mathcal Z_{\mathrm{dist}}=\mathcal B\) is a Banach space, the natural contrast is
\(T_{\mathrm{dist}}(P_{Y,M}^1) - T_{\mathrm{dist}}(P_{Y,M}^0)\). If the target is only a metric space
\((\mathcal Z_{\mathrm{dist}},d_{\mathrm{dist}})\), one may instead use the scalar discrepancy
\[
d_{\mathrm{dist}}\!\left(
T_{\mathrm{dist}}(P_{Y,M}^1),
T_{\mathrm{dist}}(P_{Y,M}^0)
\right),
\]
which records magnitude but not direction. These effects are defined formally in Section~\ref{sec:distribution-level-tcda}.

The outcome-level quantity
\(\mathbb E_M[T_{\mathrm{out}}(Y^a)]\) and the distribution-level quantity \(T_{\mathrm{dist}}(P_{Y,M}^a)\) answer different causal questions and may belong to different target spaces. Even when directly comparable, they need not agree. Their relationship is studied in Subsection~\ref{subsec:noncommutation}.

%%%%%%%%%%%%%%%%%%%%%%%%%%%%%%%%%%%%%%%%%%%%%%%%%%%%%%%%
%%%%%%%%%%%%%%%%%%%%%%%%%%%%%%%%%%%%%%%%%%%%%%%%%%%%%%%%

\section{Outcome-level topological causal effects}
\label{sec:outcome}

Let \(T_{\mathrm{out}}:\mathcal Y\to\mathcal B\) be a measurable outcome-level representation into a separable Banach space, and write \(Z=T_{\mathrm{out}}(Y), \ Z^a=T_{\mathrm{out}}(Y^a),\ a\in\{0,1\}\). We assume that \(Z^0\) and \(Z^1\) are Bochner integrable. Because \(T_{\mathrm{out}}\) is deterministic and measurable, consistency and conditional exchangeability for \(Y^a\) imply the corresponding properties for \(Z^a\). Thus the topological representation changes the outcome being compared, but not the logic of causal identification.

\subsection{Identification and observed-data representations}
\label{subsec:identification-functional-tate}

Define \(m_a(x)=\mathbb E[Z\mid A=a,X=x], \ e_a(x)=\mathbb P(A=a\mid X=x)\), where the first conditional expectation is Bochner valued. Thus \(e_1=e\) and \(e_0=1-e\), where \(e(x)=\mathbb P(A=1\mid X=x)\).

\begin{theorem}[Identification of the outcome-level topological ATE]
\label{thm:identification-outcome-tate}
Assume consistency, conditional exchangeability, and strict positivity. If \(\mathbb E\|Z^a\|_{\mathcal B}<\infty\) for \(a=0,1\), then
\[
\mathbb E[Z^a]
=
\mathbb E[m_a(X)]
=
\mathbb E\left[
\frac{\mathbf 1\{A=a\}}{e_a(X)}Z
\right].
\]
Consequently,
\[
\operatorname{TATE}_{\mathrm{out}}
=
\mathbb E[m_1(X)-m_0(X)]
=
\mathbb E\left[
\left\{
\frac{A}{e(X)}
-
\frac{1-A}{1-e(X)}
\right\}Z
\right].
\]
All expectations above are Bochner expectations.
\end{theorem}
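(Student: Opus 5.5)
The plan is to transfer the scalar identification argument to the Banach setting via the tower property of Bochner conditional expectations. We never argue pointwise in \(\mathcal B\).

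\textbf{Regression form.} Since \(Z^a\) is Bochner integrable, the conditional expectation \(\mathbb E[Z^a\mid X]\) exists, and the tower property gives \(\mathbb E[Z^a]=\mathbb E\{\mathbb E[Z^a\mid X]\}\). Conditional exchangeability transfers from \((Y^0,Y^1)\) to \((Z^0,Z^1)\), as noted after Definition~\ref{def:topological-potential-outcome}. Hence \(\mathbb E[Z^a\mid X]=\mathbb E[Z^a\mid A=a,X]\) almost surely on the set where \(e_a(X)>0\), and strict positivity makes this set have probability one.

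To justify the exchangeability step rigorously in \(\mathcal B\), I would pair with functionals. For every \(\ell\in\mathcal B^\ast\) and every bounded measurable \(g(X)\), scalar exchangeability gives
\[
\mathbb E[\mathbf 1\{A=a\}g(X)\ell(Z^a)]=\mathbb E[e_a(X)g(X)\ell(Z^a)].
\]
Since \(\mathcal B\) is separable, a countable norming family of functionals separates points, so this identifies the Bochner conditional expectation. By consistency, \(Z=Z^a\) on \(\{A=a\}\). Therefore \(\mathbb E[Z^a\mid A=a,X]=\mathbb E[Z\mid A=a,X]=m_a(X)\), which yields \(\mathbb E[Z^a]=\mathbb E[m_a(X)]\).

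\textbf{Weighting form.} First check integrability of the weighted term. By consistency and the tower property applied to norms,
\[
\mathbb E\Bigl\|\tfrac{\mathbf 1\{A=a\}}{e_a(X)}Z\Bigr\|_{\mathcal B}
=\mathbb E\Bigl[\tfrac{\mathbf 1\{A=a\}}{e_a(X)}\|Z^a\|\Bigr]
=\mathbb E\Bigl[\tfrac{1}{e_a(X)}\mathbb E\{\mathbf 1\{A=a\}\mid X\}\,\mathbb E\{\|Z^a\|\mid X\}\Bigr]
=\mathbb E\|Z^a\|<\infty .
\]
The middle step uses scalar conditional independence of \(A\) and \(\|Z^a\|\) given \(X\), and the last step uses \(\mathbb E\{\mathbf 1\{A=a\}\mid X\}=e_a(X)\), which cancels the weight. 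This bound is the main subtlety: only strict positivity is assumed, so the weights need not be bounded, and integrability must come from this cancellation rather than from a uniform bound.

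With integrability in hand, condition on \((A,X)\). Because \(\mathbf 1\{A=a\}/e_a(X)\) is a scalar that is measurable with respect to \((A,X)\), it pulls out of the Bochner conditional expectation, giving
\[
\mathbb E\Bigl[\tfrac{\mathbf 1\{A=a\}}{e_a(X)}Z\Bigr]
=\mathbb E\Bigl[\tfrac{\mathbf 1\{A=a\}}{e_a(X)}m_a(X)\Bigr].
\]
Conditioning again on \(X\) replaces \(\mathbf 1\{A=a\}\) by \(e_a(X)\), which cancels the denominator and leaves \(\mathbb E[m_a(X)]\).

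\textbf{Contrast.} The stated formulas for \(\operatorname{TATE}_{\mathrm{out}}\) follow from linearity of the Bochner integral, using \(e_1=e\), \(e_0=1-e\), and \(\mathbf 1\{A=1\}=A\). The only delicate step throughout is the weak or functional-pairing justification of the conditional-expectation identities, and separability of \(\mathcal B\) supplies it.
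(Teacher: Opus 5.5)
Your proposal is correct and follows essentially the same argument as the paper: exchangeability plus consistency give \(\mathbb E[Z^a\mid X]=m_a(X)\) under strict positivity, and Bochner integrability of the weighted element comes from the cancellation \(\mathbb E[\mathbf 1\{A=a\}\|Z\|_{\mathcal B}/e_a(X)]=\mathbb E\|Z^a\|_{\mathcal B}\), not from bounded weights. The differences are cosmetic: you spell out the functional-pairing justification that the paper leaves implicit, and you route the weighting identity through \(m_a(X)\) by conditioning on \((A,X)\), whereas the paper conditions on \(X\) and uses exchangeability to reach \(\mathbb E[Z^a\mid X]\) directly.
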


\begin{proof}
Fix \(a\in\{0,1\}\). Conditional exchangeability of \(Y^a\) and measurability of \(T_{\mathrm{out}}\) imply \(Z^a\perp A\mid X\), while consistency gives \(Z=Z^a\) on \(\{A=a\}\). Hence, under strict positivity,
\[
\mathbb E[Z^a\mid X]
=
\mathbb E[Z^a\mid A=a,X]
=
\mathbb E[Z\mid A=a,X]
=
m_a(X)
\]
almost surely. Taking expectations proves the standardization identity.

For the inverse-probability identity, consistency and conditional exchangeability give
\[
\begin{aligned}
\mathbb E\left[
\left.
\frac{\mathbf 1\{A=a\}}{e_a(X)}Z
\right|X
\right]
&=
\frac{1}{e_a(X)}
\mathbb E\left[
\mathbf 1\{A=a\}Z^a\mid X
\right] \\
&=
\mathbb E[Z^a\mid X].
\end{aligned}
\]
Moreover,
\[
\mathbb E\left[
\frac{\mathbf 1\{A=a\}}{e_a(X)}
\|Z\|_{\mathcal B}
\right]
=
\mathbb E\|Z^a\|_{\mathcal B}<\infty,
\]
so the weighted random element is Bochner integrable. Taking
expectations and subtracting the two treatment-specific identities completes the proof.
\end{proof}

These are the usual standardization and inverse-probability identities, applied to a Banach-space-valued outcome. In particular, strict positivity is sufficient for the population identities; uniformly bounded inverse weights are needed only when imposed by subsequent estimation or asymptotic arguments. For the power-weighted silhouette,
point evaluation recovers the corresponding identification formulas of Kim and Lee~\citep{kimlee2026}.

\paragraph{Conditional effects.}
Let \((\mathcal V,\Sigma_{\mathcal V})\) be a standard Borel space, let
\(\nu:\mathcal X\to\mathcal V\) be measurable, and define
\(V=\nu(X)\). Thus \(V\) is a prespecified measurable summary of the covariates.

\begin{definition}[Conditional topological average treatment effect]
\label{def:tcate}
For \(P_V\)-almost every \(v\), the \emph{conditional topological average treatment effect} is
\[
\operatorname{TCATE}_{\mathrm{out}}(v)
=
\mathbb E[Z^1-Z^0\mid V=v].
\]
\end{definition}

Under the conditions of Theorem~\ref{thm:identification-outcome-tate},
\[
\operatorname{TCATE}_{\mathrm{out}}(v)
=
\mathbb E[m_1(X)-m_0(X)\mid V=v],
\qquad P_V\text{-almost everywhere},
\]
and
\(\operatorname{TATE}_{\mathrm{out}} = \mathbb E\!\left[
\operatorname{TCATE}_{\mathrm{out}}(V)\right]\).

The regressions remain conditional on the full adjustment set \(X\); \(V\) indexes effect heterogeneity and need not itself control all confounding. In general,
\[
\operatorname{TCATE}_{\mathrm{out}}(v)
\neq
\mathbb E[Z\mid A=1,V=v]
-
\mathbb E[Z\mid A=0,V=v].
\]

\subsection{Augmented representation and estimation}
\label{subsec:aipw-outcome}

Let \(\widetilde m_a:\mathcal X\to\mathcal B\) and
\(\widetilde e:\mathcal X\to(0,1)\) be candidate nuisance functions, with \(\widetilde e_1=\widetilde e, \ \ \widetilde e_0=1-\widetilde e\). Define
\[
\Psi_a(\widetilde m_a,\widetilde e_a)
=
\mathbb E\left[
\widetilde m_a(X)
+
\frac{\mathbf 1\{A=a\}}{\widetilde e_a(X)}
\{Z-\widetilde m_a(X)\}
\right].
\]

\begin{proposition}[Augmented remainder and product-rate bound]
\label{prop:product-rate-bias}
Assume the conditions of Theorem~\ref{thm:identification-outcome-tate}. Suppose that \(\mathbb E\|\widetilde m_a(X)\|_{\mathcal B}<\infty\) and, for some
\(\varepsilon>0\), \(\widetilde e_a(X)\geq\varepsilon\) almost surely. Then
\[
\Psi_a(\widetilde m_a,\widetilde e_a)-\mathbb E[Z^a]
=
\mathbb E\left[
\frac{\widetilde e_a(X)-e_a(X)}
     {\widetilde e_a(X)}
\{\widetilde m_a(X)-m_a(X)\}
\right].
\]
Consequently,
\(\Psi_a(\widetilde m_a,\widetilde e_a)=\mathbb E[Z^a]\) if either
\(\widetilde m_a=m_a\) or \(\widetilde e_a=e_a\) almost surely.

If the two nuisance errors are square integrable, then
\[
\begin{aligned}
\left\|
\Psi_a(\widetilde m_a,\widetilde e_a)-\mathbb E[Z^a]
\right\|_{\mathcal B}
\leq
\frac{1}{\varepsilon}
\|\widetilde e_a-e_a\|_{L^2(P_X)}
\left(
\mathbb E
\|\widetilde m_a(X)-m_a(X)\|_{\mathcal B}^{2}
\right)^{1/2}.
\end{aligned}
\]
The corresponding bound for the treatment contrast is obtained by summing the two arm-specific bounds.
\end{proposition}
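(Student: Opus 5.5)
The plan is to compute the augmented functional by conditioning on \(X\), reduce it to a pointwise identity in \(\mathcal B\), and then bound the remainder with a Cauchy--Schwarz step that separates the scalar propensity error from the Banach-valued regression error.

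\emph{Integrability.} First I will check that every term inside \(\Psi_a\) is Bochner integrable. The term \(\widetilde m_a(X)\) is integrable by hypothesis. For the weighted residual, \(\widetilde e_a(X)\ge\varepsilon\) gives
\[
\mathbb E\bigl\|\mathbf 1\{A=a\}\widetilde e_a(X)^{-1}\{Z-\widetilde m_a(X)\}\bigr\|_{\mathcal B}
\le \varepsilon^{-1}\bigl(\mathbb E\|Z\|_{\mathcal B}+\mathbb E\|\widetilde m_a(X)\|_{\mathcal B}\bigr).
\]
Here \(\mathbb E\|Z\|_{\mathcal B}\le\mathbb E\|Z^0\|_{\mathcal B}+\mathbb E\|Z^1\|_{\mathcal B}<\infty\) by consistency. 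So all expectations exist, and the tower property for Bochner conditional expectations may be used.

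\emph{Conditioning on \(X\).} The factors \(\widetilde e_a(X)\) and \(\widetilde m_a(X)\) are \(\sigma(X)\)-measurable, so they can be pulled out of \(\mathbb E[\,\cdot\mid X]\). We have \(\mathbb E[\mathbf 1\{A=a\}\mid X]=e_a(X)\). Also \(\mathbb E[\mathbf 1\{A=a\}Z\mid X]=e_a(X)m_a(X)\), which is just the definition of \(m_a\) as a conditional expectation given \(A=a,X\) combined with strict positivity. Together these give
\[
\mathbb E[\,\cdot\mid X]=\widetilde m_a(X)+\frac{e_a(X)}{\widetilde e_a(X)}\{m_a(X)-\widetilde m_a(X)\}.
\]
By Theorem~\ref{thm:identification-outcome-tate}, \(\mathbb E[Z^a]=\mathbb E[m_a(X)]\). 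Subtracting \(m_a(X)\) inside the expectation yields
\[
(\widetilde m_a-m_a)(X)\Bigl(1-\frac{e_a(X)}{\widetilde e_a(X)}\Bigr),
\]
which is exactly the claimed remainder. Both terms \(\mathbb E[m_a(X)]\) and \(\mathbb E[\widetilde m_a(X)]\) are finite, so splitting the expectation is legitimate.

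\emph{Double robustness.} If \(\widetilde m_a=m_a\) almost surely, or \(\widetilde e_a=e_a\) almost surely, the integrand vanishes almost surely, so \(\Psi_a(\widetilde m_a,\widetilde e_a)=\mathbb E[Z^a]\).

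\emph{Product-rate bound.} I will use \(\|\mathbb E U\|_{\mathcal B}\le\mathbb E\|U\|_{\mathcal B}\), then the bound \(|\widetilde e_a-e_a|/\widetilde e_a\le\varepsilon^{-1}|\widetilde e_a-e_a|\), and then scalar Cauchy--Schwarz applied to the product \(|\widetilde e_a-e_a|\cdot\|\widetilde m_a-m_a\|_{\mathcal B}\). This gives the stated bound. For the treatment contrast, the triangle inequality over the two arms gives the sum of the two arm-specific bounds.

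\emph{Main obstacle.} The only delicate step is justifying the conditional-expectation manipulations for \(\mathcal B\)-valued variables: pulling out \(\sigma(X)\)-measurable scalar and vector factors, and identifying \(\mathbb E[\mathbf 1\{A=a\}Z\mid X]=e_a(X)m_a(X)\). I would handle this by testing against bounded linear functionals \(\ell\in\mathcal B^\ast\), which reduces each identity to the scalar case, and then invoking separability of \(\mathcal B\) so that agreement under all \(\ell\) gives almost-sure equality.
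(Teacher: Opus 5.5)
Your proposal is correct and follows essentially the same route as the paper. It uses the same integrability bound via consistency, the same conditioning on \(X\) with \(\mathbb E[\mathbf 1\{A=a\}Z\mid X]=e_a(X)m_a(X)\), the same subtraction of \(m_a(X)\) combined with \(\mathbb E[m_a(X)]=\mathbb E[Z^a]\), and Bochner--Jensen followed by Cauchy--Schwarz for the product bound. Your extra step of reducing the Banach-valued conditional-expectation manipulations to the scalar case via bounded linear functionals and separability justifies a point the paper simply asserts.
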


\begin{proof}
Define \(R_a = \widetilde m_a(X) + \frac{\mathbf 1\{A=a\}}{\widetilde e_a(X)} \{Z-\widetilde m_a(X)\}\). By consistency,
\[
\mathbb E\|Z\|_{\mathcal B}
=
\mathbb E\!\left[
\mathbf 1\{A=0\}\|Z^0\|_{\mathcal B}
+
\mathbf 1\{A=1\}\|Z^1\|_{\mathcal B}
\right] \\
\leq
\mathbb E\|Z^0\|_{\mathcal B}
+
\mathbb E\|Z^1\|_{\mathcal B}
<\infty.
\]
The augmented random element \(R_a\) is therefore Bochner integrable because
\[
\|R_a\|_{\mathcal B}
\leq
\|\widetilde m_a(X)\|_{\mathcal B}
+
\varepsilon^{-1}
\left\{
\|Z\|_{\mathcal B}
+
\|\widetilde m_a(X)\|_{\mathcal B}
\right\}.
\]
Conditioning on \(X\) and using
\(\mathbb E[\mathbf 1\{A=a\}Z\mid X]=e_a(X)m_a(X)\) gives
\[
\mathbb E[R_a\mid X]
=
\widetilde m_a(X)
+
\frac{e_a(X)}{\widetilde e_a(X)}
\{m_a(X)-\widetilde m_a(X)\}.
\]
Subtracting \(m_a(X)\), taking expectations, and using
\(\mathbb E[m_a(X)]=\mathbb E[Z^a]\) proves the exact remainder identity. Double robustness follows by setting either nuisance error to zero.

Finally, the Bochner--Jensen and Cauchy--Schwarz inequalities give
\[
\|\Psi_a-\mathbb E[Z^a]\|_{\mathcal B}
\leq
\frac{1}{\varepsilon}
\mathbb E\left[
|\widetilde e_a-e_a|
\|\widetilde m_a-m_a\|_{\mathcal B}
\right] 
\leq
\frac{1}{\varepsilon}
\|\widetilde e_a-e_a\|_{L^2(P_X)}
\left(
\mathbb E\|\widetilde m_a-m_a\|_{\mathcal B}^{2}
\right)^{1/2}.
\]
\end{proof}

This is the standard doubly robust remainder, written for a
Banach-space-valued outcome. The product bound controls the population bias; it does not by itself establish a central limit theorem in an arbitrary Banach space.

\paragraph{Cross-fitted estimator.}
\label{subsec:sample-estimator-outcome}

Let \(O_i=(X_i,A_i,Y_i)\), \(i=1,\ldots,n\), be independent and identically distributed observations, and define
\(Z_i=T_{\mathrm{out}}(Y_i)\). Let \(I_1,\ldots,I_K\) be a fixed partition of the observations, and let \(k(i)\) denote the fold containing observation \(i\). For each fold \(k\), fit \(\widehat m_a^{(-k)}\) and a propensity estimator
\(\widehat e^{(-k)}\) without using the observations in \(I_k\), and set
\[
\widehat e_1^{(-k)}=\widehat e^{(-k)},
\qquad
\widehat e_0^{(-k)}=1-\widehat e^{(-k)}.
\]
Assume that, for some fixed \(\varepsilon>0\), the fitted propensity scores are restricted so that
\(\varepsilon \leq \widehat e^{(-k)}(x)
\leq 1-\varepsilon\) for every fold \(k\) and \(P_X\)-almost every \(x\), with probability tending to one.

The cross-fitted estimator is
\[
\widehat\mu_a^{\mathrm{cf}}
=
\frac1n\sum_{i=1}^n
\left[
\widehat m_a^{(-k(i))}(X_i)
+
\frac{\mathbf 1\{A_i=a\}}
     {\widehat e_a^{(-k(i))}(X_i)}
\left\{
Z_i-\widehat m_a^{(-k(i))}(X_i)
\right\}
\right],
\]
with
\(\widehat{\operatorname{TATE}}_{\mathrm{out}}^{\mathrm{cf}}
= \widehat\mu_1^{\mathrm{cf}} - \widehat\mu_0^{\mathrm{cf}}\).

Conditional on the nuisance-training folds, Proposition~\ref{prop:product-rate-bias} controls the arm-specific bias by the product of the two nuisance errors. In particular, if, for each \(a\),
\(\|\widehat e_a-e_a\|_{L^2(P_X)} = o_p(n^{-1/4})\) and
\(\left( \mathbb E\| \widehat m_a(X)-m_a(X)
\|_{\mathcal B}^2 \right)^{1/2} = o_p(n^{-1/4})\), then the corresponding bias remainder is \(o_p(n^{-1/2})\). Root-\(n\) inference still requires representation-specific moment, tightness, and weak-convergence conditions~\citep{chernozhukov2018,kennedy2022}.

\subsection{Available functional inference for silhouettes}
\label{subsec:inference}

A general central limit theorem does not follow merely from
separability of \(\mathcal B\). Functional inference is currently available for the power-weighted silhouette through the theory of Kim and Lee~\citep{kimlee2026}.

For a finite diagram \(D\), a point \(p=(b_p,d_p)\in D\), and \(r>0\), define
\[
\Lambda_p(t)
=
\max\{0,\min\{t-b_p,d_p-t\}\},
\]
and \(S_r(D) = \sum_{p\in D}(d_p-b_p)^r\). Whenever \(S_r(D)>0\), the power-weighted silhouette is
\[
\phi(t;D,r)
=
\frac{
\sum_{p\in D}(d_p-b_p)^r\Lambda_p(t)
}{
S_r(D)
}.
\]

Let \(\mathbb T\subset\mathbb R\) be compact, and write
\(D_k = \operatorname{Dgm}_k(\mathcal F(Y)),
\ \ D_k^a = \operatorname{Dgm}_k(\mathcal F(Y^a))\). Assume
\(S_r(D_k^a)>0\) almost surely, \(a=0,1\), as required for the normalized silhouette used by Kim and Lee. If empty diagrams are to be permitted, a separate convention and verification of the imported stability and inference results are required.

Define \(Z_k(t)=\phi(t;D_k,r),\ \ Z_k^a(t)=\phi(t;D_k^a,r),
\qquad t\in\mathbb T\), and \(m_{a,k}(t,x) = \mathbb E[Z_k(t)\mid A=a,X=x]\). Writing \(\eta_k=(e,m_{0,k},m_{1,k})\), their uncentered efficient-influence-function signal is
\[
\begin{aligned}
\varphi_k(t,O;\eta_k)
={}&
m_{1,k}(t,X)-m_{0,k}(t,X) \\
&+
\frac{A}{e(X)}
\{Z_k(t)-m_{1,k}(t,X)\}
-
\frac{1-A}{1-e(X)}
\{Z_k(t)-m_{0,k}(t,X)\}.
\end{aligned}
\]
Its expectation equals the silhouette effect curve
\(\psi_k(t) = \mathbb E[Z_k^1(t)-Z_k^0(t)]\).

Under their Assumptions~(A1)--(A4) and either (A5) or (A5\(^{\prime}\)), Kim and Lee~\citep[Theorem~5.2]{kimlee2026} establish weak convergence of their sample-split augmented estimator in \(\ell^\infty(\mathbb T)\). In particular, the estimator is pointwise asymptotically efficient and supports simultaneous inference when a valid approximation to the supremum distribution is available. Under Assumptions~(A1)--(A6), and under the additional condition in their Corollary~5.4 that a Gaussian- or Rademacher-multiplier bootstrap consistently approximates the distribution of the supremum of the limiting Gaussian process, Kim and Lee obtain a multiplier-bootstrap test based on
\(\sqrt n\|\widehat\psi_k\|_\infty\).

Their diagram-level null,
\(W_1(D_k^1,D_k^0)=0\) almost surely, implies \(\psi_k\equiv0\), but the converse need not hold: silhouettes are not injective, and heterogeneous effects may cancel in expectation. The procedure is therefore sensitive to the expected silhouette contrast, not to every possible change in the potential diagrams. Extensions to landscapes, persistence images, Betti curves, or general \(K\)-fold cross-fitting require separate asymptotic arguments and are not claimed here.

\subsection{Vectorization-free effects}
\label{subsec:frechet-effects}

Vectorization is not necessary when the causal contrast is defined directly through the metric geometry of persistence diagrams. Let \(D_k^a=\operatorname{Dgm}_k(\mathcal F(Y^a))\) and \(\nu^a=\mathcal L(D_k^a)\). On a specified diagram model
\((\mathsf D_{k,p}^{\star},W_p)\), assume that \(\nu^a\) has finite second \(W_p\)-moment and that the following minimizers exist and are unique. Define the Fr\'echet mean and geometric median by
\[
\bar D^a
\in
\arg\min_{D\in\mathsf D_{k,p}^{\star}}
\int W_p(D,D')^2\,d\nu^a(D')
\quad
\text{and}
\quad 
\widetilde D^a
\in
\arg\min_{D\in\mathsf D_{k,p}^{\star}}
\int W_p(D,D')\,d\nu^a(D'),
\]
respectively. The corresponding absolute effects are
\[
\operatorname{AATE}_{\mathrm{top}}
=
W_p(\bar D^1,\bar D^0),
\qquad
\operatorname{AMTE}_{\mathrm{top}}
=
W_p(\widetilde D^1,\widetilde D^0).
\]
These are persistence-diagram specializations of the metric-space causal estimands introduced by Shin et al.~\citep{shinetal2024}.

Under consistency, conditional exchangeability, and positivity, the diagram laws are identified by
\(\nu^a(B) = \mathbb E\left[ \mathbb P\{D_k\in B\mid A=a,X\}
\right]\). Hence any unique Fr\'echet mean or median, and the resulting absolute effect, is identified. Estimation may use the stratification-weighted Fr\'echet procedures of Shin et al.~\citep{shinetal2024} when their finite-stratification, moment, uniqueness, and proper-space conditions hold.

The properness restriction is substantive: the unrestricted
Wasserstein persistence-diagram space is generally not proper, so strong consistency cannot be inferred from Polishness alone. Likewise, bootstrap validity requires additional regularity and does not follow from consistency. These metric-space estimands are therefore retained
as an available extension, not as new estimation or inferential theory of the present paper.

%%%%%%%%%%%%%%%%%%%%%%%%%%%%%%%%%%%%%%%%%%%%%%%%%%%%%%
%%%%%%%%%%%%%%%%%%%%%%%%%%%%%%%%%%%%%%%%%%%%%%%%%%%%%%

\section{Distribution-level topological causal effects}
\label{sec:distribution-level-tcda}

Outcome-level TCDA applies topology to each potential outcome and then averages: \(\operatorname{TATE}_{\mathrm{out}} = \mathbb E[T_{\mathrm{out}}(Y^1)] - \mathbb E[T_{\mathrm{out}}(Y^0)]\). Distribution-level TCDA reverses this order. It first forms the interventional law \(P_Y^a=\mathcal L(Y^a)\) and then applies a topological representation \(T_{\mathrm{dist}}\) to that law. The resulting estimands describe treatment-induced changes in population-level geometry, such as clustering, connectivity, or persistent holes.

The two constructions generally answer different questions. The outcome-level effect depends on an interventional law through the average of an individual-outcome representation, whereas the distribution-level effect applies a generally nonlinear transformation to the law itself. Law-level topological contrasts and covariate-standardized topological effects are also studied by Faghihi and Saki~\citep{sakifaghihi2026}. The present section places these targets within the TCDA framework, distinguishes them from outcome-level effects, identifies them under standard causal assumptions, and characterizes when the two levels can agree.

\subsection{Interventional laws and distribution-level estimands}
\label{subsec:interventional-laws}

Let \((\mathcal Y,\Sigma_{\mathcal Y})\) be a standard Borel outcome space, equipped with any additional metric or geometric structure needed for the chosen filtration. Let
\((\mathcal P^\star(\mathcal Y),d_{\mathcal P})\) be a specified class of probability measures containing the interventional laws \(P_Y^a=\mathcal L(Y^a),\  a\in\{0,1\}\). A distribution-level topological representation is a measurable map \(T_{\mathrm{dist}}: \mathcal P^\star(\mathcal Y) \longrightarrow \mathcal Z_{\mathrm{dist}}\), where \((\mathcal Z_{\mathrm{dist}},d_{\mathrm{dist}})\) is a metric
space of topological summaries. Typical examples apply persistent homology to the support, density level sets, a smoothed population function, or the distance-to-measure function of \(P\). Schematically,
\(T_{\mathrm{dist}}(P) = \Phi\!\left(
\operatorname{Dgm}_k( \mathcal F_{\mathrm{dist}}(P))
\right)\). The filtration and the metric on \(\mathcal Y\) are part of the estimand: a probability law has no unique intrinsic topological representation.

\begin{definition}[Interventional topological representation]
\label{def:interventional-topological-representation}
The distribution-level interventional topological representation under treatment \(a\) is
\[
\Theta_{\mathrm{dist}}^a
=
T_{\mathrm{dist}}(P_Y^a)
=
T_{\mathrm{dist}}\bigl(\mathcal L(Y^a)\bigr).
\]
\end{definition}

For a fixed causal model, \(\Theta_{\mathrm{dist}}^a\) is a deterministic functional of the interventional law, not a unit-level random outcome.

\begin{definition}[Distribution-level topological treatment effect]
\label{def:distribution-level-effect}
If \(\mathcal Z_{\mathrm{dist}}=\mathcal B\) is a Banach space, define
\(\Delta_{\mathrm{dist}} = T_{\mathrm{dist}}(P_Y^1)
- T_{\mathrm{dist}}(P_Y^0) \in\mathcal B\). If the target is only metric, define instead
\(\delta_{\mathrm{dist}} = d_{\mathrm{dist}}\!\left(
T_{\mathrm{dist}}(P_Y^1), T_{\mathrm{dist}}(P_Y^0)
\right)\).
\end{definition}

The vector \(\Delta_{\mathrm{dist}}\) retains the direction of the contrast, whereas \(\delta_{\mathrm{dist}}\) records only its magnitude. Because \(T_{\mathrm{dist}}\) may be non-injective, \(\delta_{\mathrm{dist}}=0\) means equality of the selected representations, not necessarily equality of the interventional laws.

\subsection{Identification by the causal
\texorpdfstring{\(g\)}{g}-formula}
\label{subsec:distribution-identification}

Let \(O=(X,A,Y)\). For \(a\in\{0,1\}\), write
\(P_{a,x} := \mathcal L(Y^a\mid X=x)\) for the causal conditional law and recall the observed conditional
outcome kernel \(Q_a(x,\cdot) := \mathcal L(Y\mid A=a,X=x)\). The first kernel is defined \(P_X\)-almost everywhere, whereas the second is initially defined \(P_{X\mid A=a}\)-almost everywhere. Under strict positivity, \(Q_a\) may also be regarded as defined \(P_X\)-almost everywhere.

\begin{theorem}[Identification of the interventional outcome law]
\label{thm:identification-interventional-law}
Assume consistency, conditional exchangeability, and strict positivity. Then
\[
P_Y^a
=
\int_{\mathcal X}Q_a(x,\cdot)\,dP_X(x),
\qquad a\in\{0,1\}.
\]
Equivalently, for every bounded measurable
\(h:\mathcal Y\to\mathbb R\),
\[
\int_{\mathcal Y}h(y)\,dP_Y^a(y)
=
\mathbb E\left[
\mathbb E\{h(Y)\mid A=a,X\}
\right].
\]
\end{theorem}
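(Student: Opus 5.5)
The plan is to prove the equivalent functional form first and then recover the measure identity by taking indicators. Fix \(a\in\{0,1\}\) and a bounded measurable \(h:\mathcal Y\to\mathbb R\).

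\emph{Step 1: apply the outcome-level identification result.} Set \(T_{\mathrm{out}}=h\) with \(\mathcal B=\mathbb R\). This is a separable Banach space, and \(\mathbb E|h(Y^a)|\le\|h\|_\infty<\infty\), so Theorem~\ref{thm:identification-outcome-tate} applies directly. It gives
\[
\mathbb E[h(Y^a)]=\mathbb E\bigl[\mathbb E\{h(Y)\mid A=a,X\}\bigr].
\]
The left side equals \(\int h\,dP_Y^a\), which is the second display.

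For a self-contained argument, the chain would instead be
\[
\mathbb E[h(Y^a)\mid X]=\mathbb E[h(Y^a)\mid A=a,X]=\mathbb E[h(Y)\mid A=a,X].
\]
The first equality uses exchangeability, since \(h(Y^a)\perp A\mid X\). The second uses consistency on \(\{A=a\}\). Taking expectations finishes the step.

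\emph{Step 2: express the right side through the kernel.} Because \(\mathcal X\) and \(\mathcal Y\) are standard Borel, a regular version \(Q_a\) exists. For this version,
\[
\mathbb E\{h(Y)\mid A=a,X=x\}=\int h(y)\,Q_a(x,dy)
\]
for \(P_{X\mid A=a}\)-a.e.\ \(x\).

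\emph{Step 3: specialise to indicators and identify the mixture.} Take \(h=\mathbf 1_B\) for \(B\in\Sigma_{\mathcal Y}\). This gives
\[
P_Y^a(B)=\int Q_a(x,B)\,dP_X(x).
\]
The map \(x\mapsto Q_a(x,B)\) is measurable by the kernel property, so the right side defines a probability measure on \(\mathcal Y\). The mixture identity therefore holds setwise. Conversely, the mixture identity implies the \(h\)-identity for simple \(h\), and hence for all bounded measurable \(h\) by uniform approximation. This establishes the stated equivalence.

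\emph{Main obstacle: almost-everywhere bookkeeping in Step 2.} The identity in Step 2 holds only \(P_{X\mid A=a}\)-a.e., but it is integrated against \(P_X\). I will use strict positivity to show that \(P_X\ll P_{X\mid A=a}\). Indeed, \(P_{X\mid A=a}(N)=0\) means \(\mathbb E[e_a(X)\mathbf 1_N(X)]=0\). Since \(e_a>0\) a.s., this forces \(P_X(N)=0\).

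It follows that any two versions of \(Q_a\) agree \(P_X\)-a.e. Hence the integral against \(P_X\) is well defined, independent of the version, and equals \(\mathbb E[\mathbb E\{h(Y)\mid A=a,X\}]\), where the inner conditional expectation is read as the function \(x\mapsto\int h\,dQ_a(x,\cdot)\) evaluated at \(X\). This is exactly the remark preceding the theorem that \(Q_a\) may be regarded as defined \(P_X\)-a.e.
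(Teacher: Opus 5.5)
Your proof is correct. The paper gives no separate proof here and simply cites the standard law-valued $g$-formula. Your route is exactly that standard argument: apply Theorem~\ref{thm:identification-outcome-tate} with $T_{\mathrm{out}}=h$ and $\mathcal B=\mathbb R$, specialise to indicators $h=\mathbf 1_B$ through a regular kernel $Q_a$, and use strict positivity to get $P_X\ll P_{X\mid A=a}$. That last step makes integration of the $P_{X\mid A=a}$-a.e.\ defined kernel against $P_X$ version-independent, which the paper only asserts in the sentence preceding the theorem.
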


This is the standard law-valued causal \(g\)-formula
\citep{robins1986,hernanrobins2020}; no new causal identification principle is required.

\begin{corollary}[Identification of distribution-level topological objects]
\label{cor:identification-distribution-tcda}
If \(T_{\mathrm{dist}}\) is known and well defined at \(P_Y^0\) and \(P_Y^1\), then \(\Theta_{\mathrm{dist}}^a\), and hence \(\Delta_{\mathrm{dist}}\) or \(\delta_{\mathrm{dist}}\), is identified.
\end{corollary}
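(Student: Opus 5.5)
The plan is to combine Theorem~\ref{thm:identification-interventional-law} with the fact that identification is preserved under composition with a known map. First I will make precise what ``identified'' means in the framework of Definition~\ref{def:tcda-problem}. A causal query \(C:\mathfrak M(G)\to\mathcal E\) is identified on a model class when there is a map \(\Gamma\), defined on observational laws, such that \(C(M)=\Gamma(P_M^{\mathrm{obs}})\) for every model in the class. An equivalent formulation is that \(P_{M}^{\mathrm{obs}}=P_{M'}^{\mathrm{obs}}\) implies \(C(M)=C(M')\). I will work with the model class on which consistency, conditional exchangeability and strict positivity hold, and on which \(P_{Y,M}^0,P_{Y,M}^1\in\mathcal P^\star(\mathcal Y)\).

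The first step is to show that \(M\mapsto P_{Y,M}^a\) is identified. By Theorem~\ref{thm:identification-interventional-law}, \(P_{Y,M}^a=\int Q_a(x,\cdot)\,dP_X(x)\). The kernel \(Q_a\) and \(P_X\) are functionals of \(P_M^{\mathrm{obs}}\), and by strict positivity \(Q_a\) is determined \(P_X\)-almost everywhere. The only delicate point is the choice of versions. Two versions of \(Q_a\) agree \(P_{X\mid A=a}\)-almost everywhere, and strict positivity gives \(P_X\ll P_{X\mid A=a}\), so they also agree \(P_X\)-almost everywhere. Hence the mixture \(\int Q_a\,dP_X\) does not depend on the version chosen. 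I therefore define \(\Gamma_a(P^{\mathrm{obs}}):=\int Q_a\,dP_X\), so that \(P_{Y,M}^a=\Gamma_a(P_M^{\mathrm{obs}})\).

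Next I compose with \(T_{\mathrm{dist}}\). Because \(T_{\mathrm{dist}}\) is a known, fixed map (part of the specification \(T\) in the TCDA problem, not depending on \(M\)), and \(P_{Y,M}^a\) lies in its domain by assumption, we get \(\Theta_{\mathrm{dist}}^a=T_{\mathrm{dist}}(\Gamma_a(P_M^{\mathrm{obs}}))\), which is a function of \(P_M^{\mathrm{obs}}\) alone. The Banach-valued contrast \(\Delta_{\mathrm{dist}}\) is obtained by applying subtraction in \(\mathcal B\), and the metric contrast \(\delta_{\mathrm{dist}}\) by applying \(d_{\mathrm{dist}}\). Both are deterministic maps of the pair \((\Theta^1_{\mathrm{dist}},\Theta^0_{\mathrm{dist}})\), so both are identified as well.

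The main obstacle is not analytic. It is the version issue just discussed, together with well-definedness: the g-formula law must land in \(\mathcal P^\star(\mathcal Y)\), where \(T_{\mathrm{dist}}\) is defined. This follows because the g-formula law equals \(P_Y^a\), and \(P_Y^a\) is assumed to lie in that domain. Measurability of \(T_{\mathrm{dist}}\) plays no role in identification itself; it matters only later, for plug-in estimation.
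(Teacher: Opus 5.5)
Your proposal is correct and follows the paper's route. The paper treats the corollary as immediate from Theorem~\ref{thm:identification-interventional-law}: the \(g\)-formula expresses \(P_Y^a\) as a functional of the observational law, and composing with the fixed, known map \(T_{\mathrm{dist}}\), then subtracting or applying \(d_{\mathrm{dist}}\), gives identification. Your additional care with version-independence (via \(P_X\ll P_{X\mid A=a}\) under strict positivity) and your remark that measurability or continuity of \(T_{\mathrm{dist}}\) matters only for estimation both match the paper's own comments.
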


Thus topology is applied after the causal identification step.
Continuity of \(T_{\mathrm{dist}}\) is not required for identification, but it is essential for stable estimation.

\paragraph{A within-stratum causal estimand.}
Recall that \(P_{a,x}\) denotes the treatment-\(a\)
potential-outcome law within covariate stratum \(x\). Unlike the marginal distribution-level effect, the following estimand applies the topological representation before averaging over \(X\).

\begin{definition}[Covariate-standardized within-stratum effect]
\label{def:tau-dist}
Assume that \(P_{a,x}\in\mathcal P^\star(\mathcal Y)\) for
\(a=0,1\) and \(P_X\)-almost every \(x\), and that the following integrand is measurable and integrable. Define
\[
\tau_{\mathrm{dist}}
=
\mathbb E_X\!\left[
d_{\mathrm{dist}}\!\left(
T_{\mathrm{dist}}(P_{1,X}),
T_{\mathrm{dist}}(P_{0,X})
\right)
\right].
\]
When the target is a Banach space,
\(\tau_{\mathrm{dist}} = \int_{\mathcal X}
\left\| T_{\mathrm{dist}}(P_{1,x})
- T_{\mathrm{dist}}(P_{0,x}) \right\|_{\mathcal B}
\,dP_X(x)\).
\end{definition}

\begin{proposition}[Identification of \(\tau_{\mathrm{dist}}\)]
\label{prop:identification-tau-dist}
Assume consistency, conditional exchangeability, and strict positivity. Then
\[
P_{a,x}
=
\mathcal L(Y^a\mid X=x)
=
Q_a(x,\cdot)
\]
for \(a=0,1\) and \(P_X\)-almost every \(x\). Consequently,
\[
\tau_{\mathrm{dist}}
=
\mathbb E_X\!\left[
d_{\mathrm{dist}}\!\left(
T_{\mathrm{dist}}(Q_1(X,\cdot)),
T_{\mathrm{dist}}(Q_0(X,\cdot))
\right)
\right],
\]
and hence \(\tau_{\mathrm{dist}}\) is identified from the observational law.
\end{proposition}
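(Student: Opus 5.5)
The plan is to prove the kernel equality \(P_{a,x}=Q_a(x,\cdot)\) for \(P_X\)-almost every \(x\) by testing both kernels against a countable generating class. Identification of \(\tau_{\mathrm{dist}}\) then follows by substitution.

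Fix \(a\) and a bounded measurable \(h:\mathcal Y\to\mathbb R\). By conditional exchangeability, \(Y^a\perp A\mid X\), so
\[
\mathbb E[h(Y^a)\mid X]=\mathbb E[h(Y^a)\mid A=a,X]
\]
almost surely on \(\{A=a\}\). Strict positivity, \(e_a(X)>0\), is what lets us regard this identity as holding \(P_X\)-almost everywhere. Concretely, I would verify
\[
\mathbb E[\mathbf 1\{A=a\}h(Y^a)\mid X]=e_a(X)\,\mathbb E[h(Y^a)\mid X]
\]
and divide by \(e_a(X)\). Consistency replaces \(h(Y^a)\) by \(h(Y)\) on \(\{A=a\}\). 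Together these give
\[
\int h\,dP_{a,X}=\mathbb E[h(Y^a)\mid X]=\frac{\mathbb E[\mathbf 1\{A=a\}h(Y)\mid X]}{e_a(X)}=\int h\,Q_a(X,dy)
\]
\(P_X\)-almost surely, exactly as in the proof of Theorem~\ref{thm:identification-outcome-tate} with a scalar outcome.

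The main obstacle is that the null set depends on \(h\). To remove this dependence, I would use that \(\mathcal Y\) is standard Borel, so \(\Sigma_{\mathcal Y}\) is generated by a countable \(\pi\)-system \(\mathcal C\). Applying the previous step with \(h=\mathbf 1_C\) for each \(C\in\mathcal C\) and taking the countable union of exceptional sets gives a single \(P_X\)-null set \(N\). For \(x\notin N\), the two probability measures \(P_{a,x}\) and \(Q_a(x,\cdot)\) agree on \(\mathcal C\), and hence on \(\Sigma_{\mathcal Y}\) by Dynkin's \(\pi\)-\(\lambda\) theorem. Doing this for \(a=0,1\) and taking the union of the two null sets proves the first claim. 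Regular versions of both kernels exist because the spaces are standard Borel (Subsection~\ref{subsec:causal-setup}).

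Given this, \(T_{\mathrm{dist}}(P_{a,x})=T_{\mathrm{dist}}(Q_a(x,\cdot))\) for \(P_X\)-almost every \(x\). This uses the hypothesis of Definition~\ref{def:tau-dist} that \(P_{a,x}\in\mathcal P^\star(\mathcal Y)\), so the \(Q_a(x,\cdot)\) lie in the domain off a null set. The integrands in \(\tau_{\mathrm{dist}}\) therefore coincide \(P_X\)-almost everywhere. The \(Q\)-version is measurable and integrable, since it agrees almost everywhere with the integrand assumed measurable and integrable (after completing \(P_X\), if needed). Hence the expectations agree.

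Finally, \(P_X\) and the kernels \(Q_a\), defined \(P_X\)-almost everywhere under positivity, are functionals of the observational law, and \(T_{\mathrm{dist}}\) and \(d_{\mathrm{dist}}\) are known. So \(\tau_{\mathrm{dist}}\) is determined by the observational law. As with Corollary~\ref{cor:identification-distribution-tcda}, no continuity of \(T_{\mathrm{dist}}\) is needed.
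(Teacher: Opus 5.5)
Your proposal is correct and takes the same approach as the paper. The paper gives no argument beyond calling the result the conditional version of the \(g\)-formula. Your proof supplies the details that remark leaves implicit: the bounded-test-function computation (as in Theorem~\ref{thm:identification-outcome-tate}), the countable \(\pi\)-system step that makes the exceptional null set independent of \(h\), and the almost-everywhere substitution into the integrand of \(\tau_{\mathrm{dist}}\).
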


This is the conditional version of the \(g\)-formula. Unlike the marginal interventional law, \(\tau_{\mathrm{dist}}\) depends on the covariates used to define the strata. Those covariates are therefore part of the scientific specification of the estimand.

\begin{remark}[Mix first or transform first]
\label{rem:mix-first}
Suppose that the target is a separable Banach space and that the map \(x \longmapsto T_{\mathrm{dist}}(P_{1,x})
- T_{\mathrm{dist}}(P_{0,x})\) is Bochner integrable. Define the signed within-stratum contrast
\[
\Delta_{\mathrm{strat}}
=
\int_{\mathcal X}
\left\{
T_{\mathrm{dist}}(P_{1,x})
-
T_{\mathrm{dist}}(P_{0,x})
\right\}
\,dP_X(x).
\]
In contrast, because \(P_Y^a = \int_{\mathcal X}P_{a,x}\,dP_X(x), \ a\in\{0,1\}\), the marginal distribution-level contrast is
\[
\Delta_{\mathrm{dist}}
=
T_{\mathrm{dist}}\!\left(
\int_{\mathcal X}P_{1,x}\,dP_X(x)
\right)
-
T_{\mathrm{dist}}\!\left(
\int_{\mathcal X}P_{0,x}\,dP_X(x)
\right).
\]
Under conditional exchangeability, the causal conditional laws
\(P_{a,x}\) in these formulas may be replaced by the observed kernels \(Q_a(x,\cdot)\).

The contrasts \(\Delta_{\mathrm{strat}}\) and
\(\Delta_{\mathrm{dist}}\) agree whenever \(T_{\mathrm{dist}}\) commutes with the two \(P_X\)-mixtures. This holds, for example, when \(T_{\mathrm{dist}}\) is barycentrically affine. Ordinary finite affinity also suffices under continuity and integrability conditions
that permit passage from finite mixtures to the corresponding
integrals. Moreover,
\(\|\Delta_{\mathrm{strat}}\|_{\mathcal B} \leq \tau_{\mathrm{dist}}\) by the Bochner--Jensen inequality. The inequality may be strict because signed within-stratum contrasts can cancel, whereas their norms cannot.
\end{remark}

\subsection{Inverse-probability representation and plug-in estimation}
\label{subsec:ipw-interventional-laws}

\begin{proposition}[Inverse-probability representation]
\label{prop:ipw-law}
For each \(a\in\{0,1\}\), under consistency, conditional
exchangeability, and strict positivity,
\[
\int_{\mathcal Y}h(y)\,dP_Y^a(y)
=
\mathbb E\left[
\frac{\mathbf 1\{A=a\}}{e_a(X)}h(Y)
\right]
\]
for every bounded measurable \(h:\mathcal Y\to\mathbb R\), where \(e_a(X)=\mathbb P(A=a\mid X)\).
\end{proposition}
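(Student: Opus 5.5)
The plan is to reduce the statement to the standardization identity of Theorem~\ref{thm:identification-interventional-law} by conditioning on \(X\). Fix \(a\in\{0,1\}\) and a bounded measurable \(h:\mathcal Y\to\mathbb R\). The first step is integrability of the weighted variable \(W_h:=\mathbf 1\{A=a\}h(Y)/e_a(X)\). Since \(|h|\le\|h\|_\infty\), it suffices to bound \(\mathbb E[\mathbf 1\{A=a\}/e_a(X)]\). By the tower property this equals \(\mathbb E[e_a(X)/e_a(X)]=1\), where strict positivity guarantees that the ratio is almost surely well defined. Hence \(\mathbb E|W_h|\le\|h\|_\infty<\infty\), and no uniform bound \(\eta\) is needed. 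This mirrors the integrability step in the proof of Theorem~\ref{thm:identification-outcome-tate}.

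Next I compute the conditional expectation given \(X\). Because \(1/e_a(X)\) is \(X\)-measurable and almost surely finite, it can be pulled out:
\[
\mathbb E[W_h\mid X]=\frac{1}{e_a(X)}\,\mathbb E[\mathbf 1\{A=a\}h(Y)\mid X].
\]
By consistency, \(\mathbf 1\{A=a\}h(Y)=\mathbf 1\{A=a\}h(Y^a)\) almost surely. Conditional exchangeability gives \(h(Y^a)\perp A\mid X\), since measurable functions preserve conditional independence. Therefore the conditional expectation factorizes as
\[
\mathbb E[\mathbf 1\{A=a\}h(Y)\mid X]=e_a(X)\,\mathbb E[h(Y^a)\mid X],
\]
so \(\mathbb E[W_h\mid X]=\mathbb E[h(Y^a)\mid X]\) almost surely.

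Finally, taking expectations gives \(\mathbb E[W_h]=\mathbb E[h(Y^a)]=\int h\,dP_Y^a\), which is the claim. As a cross-check against Theorem~\ref{thm:identification-interventional-law}, one can also write
\[
\mathbb E[\mathbf 1\{A=a\}h(Y)\mid X]=e_a(X)\,\mathbb E\{h(Y)\mid A=a,X\},
\]
which reproduces the \(g\)-formula directly.

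The only delicate step is the factorization in the second paragraph. It is a version-of-conditional-expectation issue: \(\mathbb E\{h(Y)\mid A=a,X\}\) is a priori defined only \(P_{X\mid A=a}\)-almost everywhere. Strict positivity makes \(P_X\) and \(P_{X\mid A=a}\) mutually absolutely continuous, which resolves this. I would handle it by working with \(\mathbb E[\mathbf 1\{A=a\}h(Y^a)\mid X]\) and conditional independence directly, rather than with the kernel \(Q_a\). That route avoids any choice of version.
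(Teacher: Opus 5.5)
Your proposal is correct and follows essentially the paper's route. The paper states this as the standard IPW identity without a separate proof, and the inverse-probability step in its proof of Theorem~\ref{thm:identification-outcome-tate} is your argument specialized to the bounded scalar outcome \(h(Y)\): condition on \(X\), pull out \(1/e_a(X)\), use consistency and exchangeability to get \(\mathbb E[\mathbf 1\{A=a\}h(Y^a)\mid X]=e_a(X)\,\mathbb E[h(Y^a)\mid X]\), then take expectations. Your integrability check via \(\mathbb E[\mathbf 1\{A=a\}/e_a(X)]=1\) is the scalar analogue of the paper's Bochner-integrability step there, and your observation that strict positivity makes \(P_X\) and \(P_{X\mid A=a}\) mutually absolutely continuous is the same point the paper uses to regard \(Q_a\) as defined \(P_X\)-almost everywhere.
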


This is the usual IPW identity applied to all bounded measurable test functions, and therefore characterizes the entire interventional law.

Given independent and identically distributed observations
\((X_i,A_i,Y_i)_{i=1}^n\), suppose that the fitted treatment probabilities satisfy \(0<\widehat e_a(X_i)<1\). The unnormalized Horvitz--Thompson measure
\[
\widehat\nu_{a,n}^{\mathrm{HT}}
=
\frac1n
\sum_{i=1}^n
\frac{\mathbf 1\{A_i=a\}}
     {\widehat e_a(X_i)}
\delta_{Y_i}
\]
has nonnegative weights but need not have total mass one. Define
\[
S_{a,n}
=
\sum_{i=1}^n
\frac{\mathbf 1\{A_i=a\}}
     {\widehat e_a(X_i)}.
\]
On the event \(\{S_{a,n}>0\}\), the normalized Hájek empirical law is
\[
\widetilde P_{Y,n}^{a,\mathrm{IPW}}
=
\frac{1}{S_{a,n}}
\sum_{i=1}^n
\frac{\mathbf 1\{A_i=a\}}
     {\widehat e_a(X_i)}
\delta_{Y_i}.
\]
Because strict positivity implies \(P(A=a)>0\), the probability of observing no units in treatment arm \(a\), and hence of \(S_{a,n}=0\), converges to zero.

If \(\mathcal Y\) is Polish and the true propensity score is used, the weighted strong law, applied to a countable convergence-determining class of bounded continuous functions, gives
\[
\widetilde P_{Y,n}^{a,\mathrm{IPW}}
\Longrightarrow
P_Y^a
\qquad\text{almost surely}.
\]
With estimated propensity scores, the analogous conclusion requires additional conditions ensuring convergence of the weighted empirical process, such as suitable consistency, boundedness, and sample-splitting or cross-fitting conditions. It does not follow from pointwise consistency of the propensity estimator alone.

More generally, let
\(\widehat P_{Y,n}^a\in\mathcal P^\star(\mathcal Y)\) be any probability-valued estimator of \(P_Y^a\). When the target is a Banach space, define
\(\widehat\Delta_{\mathrm{dist}} = T_{\mathrm{dist}}(\widehat P_{Y,n}^1) - T_{\mathrm{dist}}(\widehat P_{Y,n}^0)\). If, for \(a=0,1\), \(d_{\mathcal P}(\widehat P_{Y,n}^a,P_Y^a) \xrightarrow{p}0 \) and \(T_{\mathrm{dist}}\) is continuous from \((\mathcal P^\star(\mathcal Y),d_{\mathcal P})\) into
\((\mathcal B,\|\cdot\|_{\mathcal B})\), then the continuous mapping theorem gives \(\widehat\Delta_{\mathrm{dist}} \xrightarrow{p} \Delta_{\mathrm{dist}}\). For a metric-valued representation, the analogous plug-in estimator is
\[
\widehat\delta_{\mathrm{dist}}
=
d_{\mathrm{dist}}\!\left(
T_{\mathrm{dist}}(\widehat P_{Y,n}^1),
T_{\mathrm{dist}}(\widehat P_{Y,n}^0)
\right),
\]
and the same continuity argument gives
\(\widehat\delta_{\mathrm{dist}}\xrightarrow{p}\delta_{\mathrm{dist}}\).

Weak convergence alone may be insufficient. The DTM stability argument used in this paper requires \(W_2\)-consistency; support persistence requires support recovery; and density-level persistence requires convergence of the corresponding density estimator. Likewise, an augmented estimator of a distribution may be a signed measure and must be projected or otherwise constrained
before it can be inserted into \(T_{\mathrm{dist}}\).

\subsection{Non-reducibility and non-commutation}
\label{subsec:noncommutation}

Define the mean outcome-level functional \(\mathcal A(P) =\int_{\mathcal Y}T_{\mathrm{out}}(y)\,dP(y)\), on a convex class \(\mathcal Q\) of laws under which the integral exists. By linearity of integration, \(\mathcal A\) is affine:
\[
\mathcal A(\lambda P+(1-\lambda)Q)
=
\lambda\mathcal A(P)+(1-\lambda)\mathcal A(Q).
\]
Moreover, \(\mathbb E[T_{\mathrm{out}}(Y^a)] = \mathcal A(P_Y^a)\).

For a nonempty space \(K\) with finite zeroth Betti number, write \(\widetilde\beta_0(K)=\beta_0(K)-1\) for its reduced zeroth Betti number.

By contrast, a distribution-level map need not be affine. For example, if
\(T_{\mathrm{dist}}(P) = \widetilde\beta_0(\operatorname{supp}P)\), then \(T_{\mathrm{dist}}(\delta_{-1})=
T_{\mathrm{dist}}(\delta_1)=0\), whereas
\(T_{\mathrm{dist}}\left( \frac12\delta_{-1}+\frac12\delta_1
\right)=1\). Thus this support-based representation does not commute with mixing.

\begin{theorem}[Characterization of agreement between the two levels]
\label{thm:noncommutation-characterization}
Let
\(\mathcal A,T_{\mathrm{dist}}:\mathcal Q\to\mathcal B\), where \(\mathcal Q\) is nonempty. Then
\[
\mathcal A(P^1)-\mathcal A(P^0)
=
T_{\mathrm{dist}}(P^1)-T_{\mathrm{dist}}(P^0)
\]
for every \(P^0,P^1\in\mathcal Q\) if and only if
\(T_{\mathrm{dist}}-\mathcal A\) is constant on \(\mathcal Q\).

Consequently, if \(\mathcal Q\) is convex, \(\mathcal A\) is the mean functional above, and \(T_{\mathrm{dist}}\) is non-affine, then there exist \(P^0,P^1\in\mathcal Q\) for which the two contrasts differ.
\end{theorem}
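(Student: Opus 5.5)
Set \(D:=T_{\mathrm{dist}}-\mathcal A:\mathcal Q\to\mathcal B\). The identity \(\mathcal A(P^1)-\mathcal A(P^0)=T_{\mathrm{dist}}(P^1)-T_{\mathrm{dist}}(P^0)\) can be rearranged, using only the vector-space structure of \(\mathcal B\), into \(D(P^1)=D(P^0)\). So the first claim says exactly that \(D\) takes the same value at every pair of points of \(\mathcal Q\).

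For the forward direction, assume the contrasts agree for all pairs. Fix any \(P_\ast\in\mathcal Q\), which exists because \(\mathcal Q\) is nonempty. Taking \(P^0=P_\ast\) gives \(D(P)=D(P_\ast)=:c\) for every \(P\in\mathcal Q\), so \(D\) is constant. For the converse, if \(D\equiv c\), then \(T_{\mathrm{dist}}(P)=\mathcal A(P)+c\). Subtracting this identity at \(P^1\) and at \(P^0\) cancels \(c\) and yields the contrast identity.

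For the consequence, argue by contraposition. Suppose the two contrasts agree for every pair in \(\mathcal Q\). By the first part, \(T_{\mathrm{dist}}=\mathcal A+c\) on \(\mathcal Q\). Since \(\mathcal Q\) is convex, \(\lambda P+(1-\lambda)Q\in\mathcal Q\) for all \(P,Q\in\mathcal Q\) and \(\lambda\in[0,1]\). The affinity of the mean functional, which follows from linearity of Bochner integration as recorded before the theorem, then gives
\[
T_{\mathrm{dist}}(\lambda P+(1-\lambda)Q)
=\lambda\mathcal A(P)+(1-\lambda)\mathcal A(Q)+c
=\lambda T_{\mathrm{dist}}(P)+(1-\lambda)T_{\mathrm{dist}}(Q),
\]
where the last step uses \(c=\lambda c+(1-\lambda)c\). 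Hence \(T_{\mathrm{dist}}\) is affine on \(\mathcal Q\), contradicting the hypothesis. Therefore some pair \(P^0,P^1\in\mathcal Q\) has differing contrasts.

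There is no substantive obstacle here. The only point needing care is to use the same notion of affinity (finite two-point mixtures) in both the hypothesis on \(T_{\mathrm{dist}}\) and the affinity property of \(\mathcal A\), and to note that the constant \(c\) is preserved under convex combinations.
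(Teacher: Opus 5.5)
Your proposal is correct and follows essentially the same route as the paper. You rearrange the contrast identity into constancy of \(T_{\mathrm{dist}}-\mathcal A\), and you obtain the consequence from the fact that \(\mathcal A+c\) is affine whenever \(\mathcal A\) is; the base point \(P_\ast\) and the identity \(c=\lambda c+(1-\lambda)c\) simply make explicit details the paper leaves implicit.
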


\begin{proof}
Equality of the two contrasts is equivalent to
\[
T_{\mathrm{dist}}(P^1)-\mathcal A(P^1)
=
T_{\mathrm{dist}}(P^0)-\mathcal A(P^0).
\]
This holds for every pair \(P^0,P^1\) precisely when
\(T_{\mathrm{dist}}-\mathcal A\) is constant.

For the final statement, suppose that the contrasts agreed for every pair. Then
\[
T_{\mathrm{dist}}(P)=\mathcal A(P)+c
\]
for some fixed \(c\in\mathcal B\) and every \(P\in\mathcal Q\). Because \(\mathcal A\) is affine, so is \(\mathcal A+c\), contradicting the assumed non-affineness of \(T_{\mathrm{dist}}\).
\end{proof}

The theorem separates two issues. On a convex class of probability laws, failure to commute with mixtures is equivalent to non-affineness of \(T_{\mathrm{dist}}\). Failure of the outcome- and distribution-level contrasts to agree is characterized
more generally by nonconstancy of \(T_{\mathrm{dist}}-\mathcal A\). Two distinct affine maps can therefore also give different contrasts.

\begin{example}[Non-degenerate divergence of the two levels]
\label{ex:nondegenerate-noncommutation}
Let
\[
K=\{(1,0),(0,1),(-1,0),(0,-1)\},
\qquad
r_0=1,\quad r_1=2,
\]
and choose \(R>2\). For \(\theta\in[0,2\pi)\), let \(\rho_\theta\) denote counterclockwise rotation of \(\mathbb R^2\) through angle \(\theta\). Let \(C\) be uniform on the circle of radius \(R\), let \(\Theta\) be independently uniform on \([0,2\pi)\), and define \(Y^a=C+r_a\rho_\Theta K\). Thus each potential outcome is a randomly translated and rotated square.

Let \(T_{\mathrm{out}}(S)\) be the total degree-one persistence of the Vietoris--Rips filtration of \(S\). Under the diameter-threshold convention for the Vietoris--Rips filtration, the square \(K\) has one degree-one class born at \(\sqrt2\) and dying at \(2\), so \(T_{\mathrm{out}}(Y^a) = r_a(2-\sqrt2)\). Therefore
\[
\operatorname{TATE}_{\mathrm{out}}
=
(r_1-r_0)(2-\sqrt2)
=
2-\sqrt2.
\]

For a four-point set \(S\), let \(\kappa_S=\frac14\sum_{z\in S}\delta_z\), and for a law \(P\) on such sets define
\[
\mu_P=\int\kappa_S\,dP(S),
\qquad
T_{\mathrm{dist}}(P)
=
\beta_1(\operatorname{supp}\mu_P).
\]
For \(P=P_Y^a\), the support of \(\mu_P\) is the annulus
\(\{x\in\mathbb R^2:R-r_a\leq\|x\|\leq R+r_a\}\). It therefore has first Betti number one for both treatment levels. Hence \(T_{\mathrm{dist}}(P_Y^0) = T_{\mathrm{dist}}(P_Y^1) = 1,\ \Delta_{\mathrm{dist}}=0\). Both representations are nontrivial and take values in \(\mathbb R\), yet their causal contrasts differ.
\end{example}

\subsection{A zero classical ATE with nonzero distributional topology}
\label{subsec:zero-ate-nonzero-topology}

\begin{example}[Zero mean effect but nonzero density topology]
\label{ex:zero-ate-density-topology}
Fix \(\rho>0\) and let
\[
Y^0
\sim
\frac12N((-1,0),\sigma^2I_2)
+
\frac12N((1,0),\sigma^2I_2),
\qquad
Y^1\sim N((0,0),\rho^2I_2).
\]
Both laws have mean zero, so \(\mathbb E[Y^1]-\mathbb E[Y^0]=0\).

Let \(f_0,f_1\) be their densities, and write
\[
s_\sigma=f_0(0,0),
\qquad
m_\sigma=f_0(1,0),
\qquad
M_\rho=f_1(0,0).
\]
As \(\sigma\downarrow0\), \(s_\sigma\longrightarrow0, \ m_\sigma\longrightarrow\infty\). Hence, for sufficiently small \(\sigma\), one may choose \(s_\sigma<\lambda<\min\{m_\sigma,M_\rho\}\).

The treatment superlevel set
\[
L_\lambda(P_Y^1)
=
\{y:f_1(y)\geq\lambda\}
\]
is a nonempty closed disk and is therefore connected. For the control law,
\[
f_0(0,x_2)
=
s_\sigma
\exp\left(-\frac{x_2^2}{2\sigma^2}\right)
<
\lambda
\]
for every \(x_2\), whereas \(f_0(-1,0)=f_0(1,0)=m_\sigma>\lambda\). Thus the control superlevel set contains points on both sides of the
line \(x_1=0\) but does not intersect that line, and is disconnected.

Consequently,
\(T_{\mathrm{dist},\lambda}(P) = \widetilde\beta_0(L_\lambda(P))\)
satisfies
\(T_{\mathrm{dist},\lambda}(P_Y^1)=0,
\  T_{\mathrm{dist},\lambda}(P_Y^0)\geq1\). The classical ATE is therefore zero while the distribution-level topological contrast is nonzero.
\end{example}

Both Gaussian laws have support \(\mathbb R^2\), so this distinction cannot be detected by support topology. It arises from density superlevel sets and therefore illustrates why the filtration must be chosen according to the effective geometry of scientific interest.

\subsection{Distance-to-measure distribution-level TCDA}
\label{subsec:dtm-distribution-tcda}

A robust alternative to support topology is provided by the
distance-to-measure construction \citep{chazalcohensteinermrigot2011,buchetetal2016}. Let
\((\mathcal Y,d_{\mathcal Y})\) be a Polish metric space and \(P\) a Borel probability measure. For \(\ell\in(0,1)\), define
\[
\delta_{P,\ell}(y)
=
\inf\left\{
r>0:P(\overline B(y,r))>\ell
\right\}.
\]
For \(m\in(0,1)\), the distance-to-measure function is
\(d_{P,m}(y) = \left(
\frac1m \int_0^m \delta_{P,\ell}(y)^2\,d\ell
\right)^{1/2}\). It is \(1\)-Lipschitz in \(y\) and incorporates local probability mass rather than only the full support.

The DTM sublevel filtration is
\(\mathcal F_{\mathrm{DTM}}^m(P) = \bigl\{
\{y:d_{P,m}(y)\leq t\} \bigr\}_{t\geq0}\). Whenever its degree-\(k\) persistence module is \(q\)-tame, write
\(D_{k,m}^{\mathrm{DTM}}(P) = \operatorname{Dgm}_k(
\mathcal F_{\mathrm{DTM}}^m(P))\). For a vectorization \(\Phi\), define \(T_{\mathrm{DTM}}^{m,k,\Phi}(P) =
\Phi(D_{k,m}^{\mathrm{DTM}}(P))\), and the corresponding causal effect by
\[
\Delta_{\mathrm{DTM}}^{m,k,\Phi}
=
T_{\mathrm{DTM}}^{m,k,\Phi}(P_Y^1)
-
T_{\mathrm{DTM}}^{m,k,\Phi}(P_Y^0).
\]

For \(P,Q\in\mathcal P_2(\mathcal Y)\), the standard DTM stability inequality is
\[
\|d_{P,m}-d_{Q,m}\|_\infty
\leq
m^{-1/2}W_2(P,Q).
\]
Under the standing persistence regularity conditions,
\[
d_B\left(
D_{k,m}^{\mathrm{DTM}}(P),
D_{k,m}^{\mathrm{DTM}}(Q)
\right)
\leq
m^{-1/2}W_2(P,Q).
\]
If \(\Phi\) is \(L_\Phi\)-Lipschitz in bottleneck distance, then
\[
\left\|
T_{\mathrm{DTM}}^{m,k,\Phi}(P)
-
T_{\mathrm{DTM}}^{m,k,\Phi}(Q)
\right\|_{\mathcal B}
\leq
\frac{L_\Phi}{\sqrt m}W_2(P,Q).
\]
The general transfer of this stability bound to causal contrasts is developed in Section~\ref{sec:stability}.

The mass parameter \(m\) is part of the estimand. Smaller values retain more local structure but increase the stability constant \(m^{-1/2}\); larger values provide greater smoothing. The outcome metric \(d_{\mathcal Y}\), homological degree \(k\), and vectorization \(\Phi\) are likewise substantive modelling choices.

Empirically, \(P_Y^a\) is replaced by a probability-valued estimator
\(\widehat P_{Y,n}^a\in\mathcal P_2(\mathcal Y)\), such as a normalized Hájek law or a probability-valued \(g\)-formula estimator. An unnormalized or signed measure cannot be used directly. Moreover, weak convergence alone does not guarantee DTM consistency: the stability bound requires
\(W_2(\widehat P_{Y,n}^a,P_Y^a)\xrightarrow{p}0\). Under this condition, the DTM functions, persistence diagrams, and stable vectorizations converge in their respective metrics.

Outcome-level TCDA therefore averages topological representations of individual potential outcomes, whereas distribution-level TCDA transforms the interventional laws themselves. Neither construction generally determines the other, and the order of covariate mixing and topological transformation is part of the causal question.

%%%%%%%%%%%%%%%%%%%%%%%%%%%%%%%%%%%%%%%%%%%%%%%%%%%%%%%%
%%%%%%%%%%%%%%%%%%%%%%%%%%%%%%%%%%%%%%%%%%%%%%%%%%%%%%%%

\section{Stability of topological causal estimands}
\label{sec:stability}

Identification and stability answer different questions. Identification asks whether a causal estimand is determined by the observational law under specified causal assumptions. Stability asks how much that estimand changes when its potential outcomes, interventional laws, or topological representations are perturbed.

The principle of this section is simple: if the topological
representation is Lipschitz, then the causal contrast constructed from it inherits the same stability, up to contributions from the two treatment arms. These results concern perturbations of the causal objects themselves; they do not provide robustness to violations of exchangeability, positivity, or consistency.

\subsection{Abstract stability transfer}
\label{subsec:abstract-stability-transfer}

Let \((\mathcal Y,d_{\mathcal Y})\) be a metric outcome space and \(T_{\mathrm{out}}:\mathcal Y\to\mathcal B\) an outcome-level representation into a separable Banach space.

\begin{assumption}[Lipschitz outcome-level representation]
\label{ass:lipschitz-outcome-T}
There exists \(L_{\mathrm{out}}\geq0\) such that
\[
\|T_{\mathrm{out}}(y)-T_{\mathrm{out}}(y')\|_{\mathcal B}
\leq
L_{\mathrm{out}}d_{\mathcal Y}(y,y')
\]
for all \(y,y'\in\mathcal Y\).
\end{assumption}

\begin{theorem}[Coupling stability of the outcome-level effect]
\label{thm:stability-outcome-tate}
Let \(M\) and \(M'\) be two causal models, and for each
\(a\in\{0,1\}\) let \((Y^a,\widetilde Y^a)\) be a coupling of their treatment-\(a\) potential-outcome laws. Under
Assumption~\ref{ass:lipschitz-outcome-T}, if the represented outcomes are Bochner integrable and
\(\mathbb E[d_{\mathcal Y}(Y^a,\widetilde Y^a)]<\infty\), then
\[
\left\|
\operatorname{TATE}_{\mathrm{out}}(M)
-
\operatorname{TATE}_{\mathrm{out}}(M')
\right\|_{\mathcal B}
\leq
L_{\mathrm{out}}
\sum_{a\in\{0,1\}}
\mathbb E[
d_{\mathcal Y}(Y^a,\widetilde Y^a)
].
\]
\end{theorem}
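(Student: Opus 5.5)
The plan is to reduce the statement to two arm-specific bounds and then combine them with the triangle inequality. By Definition~\ref{def:outcome-level-tate}, $\operatorname{TATE}_{\mathrm{out}}(M)-\operatorname{TATE}_{\mathrm{out}}(M')$ equals
\[
\bigl\{\mathbb E[T_{\mathrm{out}}(Y^1)]-\mathbb E[T_{\mathrm{out}}(\widetilde Y^1)]\bigr\}
-
\bigl\{\mathbb E[T_{\mathrm{out}}(Y^0)]-\mathbb E[T_{\mathrm{out}}(\widetilde Y^0)]\bigr\}.
\]
The expectations are computed under the two models. However, $\mathbb E[T_{\mathrm{out}}(Y^a)]$ depends only on the law of $Y^a$, and $\mathbb E[T_{\mathrm{out}}(\widetilde Y^a)]$ depends only on the law of $\widetilde Y^a$. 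We may therefore realise both on the common probability space carrying the coupling $(Y^a,\widetilde Y^a)$ without changing either Bochner expectation. This point should be stated explicitly: the Bochner integral of $T_{\mathrm{out}}(W)$ is the integral of $T_{\mathrm{out}}$ against $\mathcal L(W)$, and that law is preserved by any coupling.

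For fixed $a$, both $T_{\mathrm{out}}(Y^a)$ and $T_{\mathrm{out}}(\widetilde Y^a)$ are Bochner integrable on the coupling space by hypothesis. Linearity of the Bochner integral then gives
\[
\mathbb E[T_{\mathrm{out}}(Y^a)]-\mathbb E[T_{\mathrm{out}}(\widetilde Y^a)]=\mathbb E[T_{\mathrm{out}}(Y^a)-T_{\mathrm{out}}(\widetilde Y^a)].
\]
The inequality $\|\mathbb E U\|_{\mathcal B}\le\mathbb E\|U\|_{\mathcal B}$, recorded in Subsection~\ref{subsec:pipeline-regularity}, bounds the norm of this difference by $\mathbb E\|T_{\mathrm{out}}(Y^a)-T_{\mathrm{out}}(\widetilde Y^a)\|_{\mathcal B}$. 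Assumption~\ref{ass:lipschitz-outcome-T} applied pointwise bounds this in turn by $L_{\mathrm{out}}\,\mathbb E[d_{\mathcal Y}(Y^a,\widetilde Y^a)]$, which is finite by assumption.

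The triangle inequality in $\mathcal B$ applied to the displayed difference of the two arm contrasts then yields the sum over $a\in\{0,1\}$, which completes the argument.

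The only delicate step is measurability. The difference $T_{\mathrm{out}}(Y^a)-T_{\mathrm{out}}(\widetilde Y^a)$ must be strongly measurable so that the Bochner inequality applies, and $d_{\mathcal Y}(Y^a,\widetilde Y^a)$ must be a random variable. The first follows because each term is strongly measurable, since $\mathcal B$ is separable and $T_{\mathrm{out}}$ is continuous, being Lipschitz. The second follows from continuity of $d_{\mathcal Y}$, provided $\mathcal Y$ is separable so that the Borel $\sigma$-algebra of $\mathcal Y\times\mathcal Y$ equals the product $\sigma$-algebra. I would note this standing separability, which is implicit in the standard Borel setting, rather than develop it further. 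No causal assumptions such as consistency or exchangeability are needed, since the bound concerns only the potential-outcome laws.
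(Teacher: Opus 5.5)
Your proposal is correct and follows essentially the same argument as the paper: you split the difference by treatment arm, use linearity of the Bochner integral and Bochner--Jensen within each arm, apply the Lipschitz bound pointwise, and combine the two arms with the triangle inequality. Your remarks on realising both expectations on the coupling space and on measurability make explicit details that the paper's proof leaves implicit.
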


\begin{proof}
Linearity of the Bochner integral, the triangle inequality, and Bochner--Jensen give
\[
\left\|
\operatorname{TATE}_{\mathrm{out}}(M)
-
\operatorname{TATE}_{\mathrm{out}}(M')
\right\|_{\mathcal B}
\leq
\sum_{a\in\{0,1\}}
\mathbb E\left[
\|T_{\mathrm{out}}(Y^a)
-
T_{\mathrm{out}}(\widetilde Y^a)\|_{\mathcal B}
\right].
\]
Applying Assumption~\ref{ass:lipschitz-outcome-T} to each term proves the result.
\end{proof}

\begin{corollary}[Wasserstein stability]
\label{cor:wasserstein-outcome-stability}
Let \(P_{Y,M}^a\) and \(P_{Y,M'}^a\) be the treatment-\(a\) interventional outcome laws, and assume that they belong to
\(\mathcal P_1(\mathcal Y)\). Then
\[
\left\|
\operatorname{TATE}_{\mathrm{out}}(M)
-
\operatorname{TATE}_{\mathrm{out}}(M')
\right\|_{\mathcal B}
\leq
L_{\mathrm{out}}
\sum_{a\in\{0,1\}}
W_1(P_{Y,M}^a, P_{Y,M'}^a).
\]
\end{corollary}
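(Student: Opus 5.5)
The plan is to reduce the corollary to Theorem~\ref{thm:stability-outcome-tate} by choosing, for each arm, an optimal coupling of the two interventional laws. Fix \(a\in\{0,1\}\). Theorem~\ref{thm:stability-outcome-tate} places no restriction on how \(Y^a\) and \(\widetilde Y^a\) are jointly distributed; it only requires a coupling with the correct marginals. This matters because \(\operatorname{TATE}_{\mathrm{out}}(M)\) depends on \(M\) only through the marginal laws \(P_{Y,M}^a\), via \(\mathbb E[T_{\mathrm{out}}(Y^a)]=\mathcal A(P_{Y,M}^a)\). Replacing the model's own potential outcomes by any random elements with the same laws therefore leaves each arm's Bochner mean unchanged. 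So we may choose the couplings freely, and separately for each arm; no joint coupling of \((Y^0,Y^1)\) is needed.

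For each arm we take a coupling \(\pi_a\) of \(P_{Y,M}^a\) and \(P_{Y,M'}^a\) that attains, or nearly attains, \(W_1(P_{Y,M}^a,P_{Y,M'}^a)=\inf_\pi\int d_{\mathcal Y}\,d\pi\). If \(\mathcal Y\) is Polish, an optimal coupling exists by the standard existence theorem of optimal transport (tightness of the coupling set and lower semicontinuity of the cost). Without a Polish assumption we instead use an \(\epsilon\)-optimal coupling, for which \(\int d_{\mathcal Y}\,d\pi_a\le W_1+\epsilon\). Either coupling has finite expected distance, because the laws lie in \(\mathcal P_1(\mathcal Y)\). The triangle inequality through a base point \(y_0\) gives \(\int d_{\mathcal Y}\,d\pi\le\int d(y,y_0)\,dP+\int d(y_0,y')\,dP'<\infty\).

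Next we check Bochner integrability of the represented outcomes. This follows from the sufficient condition recorded in Subsection~\ref{subsec:pipeline-regularity}: Lipschitz continuity of \(T_{\mathrm{out}}\) together with a finite first moment implies \(\mathbb E\|T_{\mathrm{out}}(Y^a)\|_{\mathcal B}<\infty\), and the same holds for \(\widetilde Y^a\). Applying Theorem~\ref{thm:stability-outcome-tate} with these couplings gives a bound of \(L_{\mathrm{out}}\sum_a(W_1(P_{Y,M}^a,P_{Y,M'}^a)+\epsilon)\). Letting \(\epsilon\downarrow0\) yields the claim.

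The only delicate point is the legitimacy of swapping the model's potential outcomes for the coupled random elements. That swap rests on the identity \(\mathbb E[T_{\mathrm{out}}(Y^a)]=\mathcal A(P^a_{Y})\), which holds because Bochner expectations depend only on the image law. Once this is stated, the rest is routine, and the \(\epsilon\)-argument avoids any need for an existence theorem for optimal couplings.
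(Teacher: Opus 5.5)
Your proposal is correct and follows the paper's argument, which simply takes the infimum over treatment-specific couplings in Theorem~\ref{thm:stability-outcome-tate}; your $\epsilon$-optimal couplings are exactly what makes that infimum step rigorous. The checks you add (finite coupling cost from membership in $\mathcal P_1(\mathcal Y)$, Bochner integrability via the Lipschitz sufficient condition, and dependence of the Bochner means only on the marginal laws) are details the paper leaves implicit, not a different route.
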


This follows by taking the infimum in
Theorem~\ref{thm:stability-outcome-tate} over all treatment-specific couplings.

\smallbreak 

For distribution-level effects, let
\(T_{\mathrm{dist}}:
(\mathcal P_\star(\mathcal Y),d_{\mathcal P})
\longrightarrow
(\mathcal Z_{\mathrm{dist}},d_{\mathrm{dist}})
\) be a representation of probability laws.

\begin{assumption}[Lipschitz distribution-level representation]
\label{ass:lipschitz-distribution-T}
There exists \(L_{\mathrm{dist}}\geq0\) such that
\[
d_{\mathrm{dist}}\left(
T_{\mathrm{dist}}(P),
T_{\mathrm{dist}}(Q)
\right)
\leq
L_{\mathrm{dist}}d_{\mathcal P}(P,Q)
\]
for all \(P,Q\in\mathcal P_\star(\mathcal Y)\).
\end{assumption}

\begin{theorem}[Distribution-level stability]
\label{thm:distribution-stability}
Let \(M\) and \(M'\) have interventional outcome laws in
\(\mathcal P_\star(\mathcal Y)\). Define
\(\delta_{\mathrm{dist}}(M) = d_{\mathrm{dist}}\left(
T_{\mathrm{dist}}(P_{Y,M}^1), T_{\mathrm{dist}}(P_{Y,M}^0)
\right)\). Then
\[
\left|
\delta_{\mathrm{dist}}(M)
-
\delta_{\mathrm{dist}}(M')
\right|
\leq
L_{\mathrm{dist}}
\sum_{a\in\{0,1\}}
d_{\mathcal P}(P_{Y,M}^a,P_{Y,M'}^a).
\]

If \(\mathcal Z_{\mathrm{dist}}=\mathcal B\) is a Banach space
equipped with its norm-induced metric, \(d_{\mathrm{dist}}(u,v)=\|u-v\|_{\mathcal B}\), define
\(\Delta_{\mathrm{dist}}(M) = T_{\mathrm{dist}}(P_{Y,M}^1)
- T_{\mathrm{dist}}(P_{Y,M}^0)\). Then
\[
\left\|
\Delta_{\mathrm{dist}}(M)
-
\Delta_{\mathrm{dist}}(M')
\right\|_{\mathcal B}
\leq
L_{\mathrm{dist}}
\sum_{a\in\{0,1\}}
d_{\mathcal P}(P_{Y,M}^a,P_{Y,M'}^a).
\]
\end{theorem}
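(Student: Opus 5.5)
The argument is a direct triangle-inequality computation, with one reduction for the metric case. Abbreviate \(u_a=T_{\mathrm{dist}}(P_{Y,M}^a)\) and \(u_a'=T_{\mathrm{dist}}(P_{Y,M'}^a)\) for \(a\in\{0,1\}\). These are well defined because both models have interventional laws in \(\mathcal P_\star(\mathcal Y)\).

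For the scalar discrepancy, I will use the elementary fact that a metric is jointly \(1\)-Lipschitz for the sum metric:
\[
|d(u_1,u_0)-d(u_1',u_0')|\leq d(u_1,u_1')+d(u_0,u_0').
\]
This follows from two applications of the triangle inequality. First, \(d(u_1,u_0)\leq d(u_1,u_1')+d(u_1',u_0')+d(u_0',u_0)\). The same bound holds with the roles of the primed and unprimed points exchanged, which gives the absolute value. I then apply Assumption~\ref{ass:lipschitz-distribution-T} to each of the two terms on the right, bounding \(d_{\mathrm{dist}}(u_a,u_a')\leq L_{\mathrm{dist}}\,d_{\mathcal P}(P_{Y,M}^a,P_{Y,M'}^a)\). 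Summing over \(a\) yields the first inequality.

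For the Banach-space case, I rearrange the difference of contrasts as \((u_1-u_1')-(u_0-u_0')\). The norm triangle inequality bounds its norm by \(\|u_1-u_1'\|_{\mathcal B}+\|u_0-u_0'\|_{\mathcal B}\). Because \(d_{\mathrm{dist}}\) is the norm-induced metric, each of these terms equals \(d_{\mathrm{dist}}(u_a,u_a')\), and the same Lipschitz assumption gives the bound.

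No step is a real obstacle. The only point needing care is to check that the Lipschitz assumption is stated for the same metrics \(d_{\mathcal P}\) and \(d_{\mathrm{dist}}\) that appear in the conclusion. In the Banach case this is exactly the stipulation that \(d_{\mathrm{dist}}\) is norm-induced. No measurability or integrability issues arise, because all objects are deterministic functionals of the laws.
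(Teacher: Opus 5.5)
Your proposal is correct and matches the paper's proof. The paper likewise uses the four-point inequality \(|d_{\mathrm{dist}}(u_1,u_0)-d_{\mathrm{dist}}(u_1',u_0')|\le d_{\mathrm{dist}}(u_1,u_1')+d_{\mathrm{dist}}(u_0,u_0')\) followed by the Lipschitz assumption for the metric case, and the ordinary triangle inequality on the two treatment-specific differences for the Banach case; your derivation of the four-point inequality from two triangle inequalities just makes explicit a step the paper states without proof.
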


\begin{proof}
For the metric-valued effect, the four-point inequality gives
\[
\left|
d_{\mathrm{dist}}(\Theta_{1,M},\Theta_{0,M})
-
d_{\mathrm{dist}}(\Theta_{1,M'},\Theta_{0,M'})
\right|
\leq
d_{\mathrm{dist}}(\Theta_{1,M},\Theta_{1,M'})
+
d_{\mathrm{dist}}(\Theta_{0,M},\Theta_{0,M'}),
\]
where
\(\Theta_{a,M}=T_{\mathrm{dist}}(P_{Y,M}^a)\).
Assumption~\ref{ass:lipschitz-distribution-T} gives the first claim. The Banach-space statement follows from the ordinary triangle inequality applied to the two treatment-specific differences.
\end{proof}

\subsection{Matching the diagram metric to the vectorization}
\label{subsec:diagram-metric}

A persistence pipeline has two stability steps:
\(u \longmapsto D(u) \longmapsto \Phi(D(u))\). The metric controlling the diagram map must match the metric in which \(\Phi\) is stable.

\begin{definition}[Diagram-stable vectorization]
\label{def:diagram-stable-summary}
Let \(d_{\mathsf D}\) be a metric on a diagram class \(\mathcal D\). A map \(\Phi:\mathcal D\to\mathcal B\) is
\(d_{\mathsf D}\)-stable with constant \(L_\Phi\) if
\[
\|\Phi(D)-\Phi(D')\|_{\mathcal B}
\leq
L_\Phi d_{\mathsf D}(D,D')
\]
for all \(D,D'\in\mathcal D\).
\end{definition}

If \(d_{\mathsf D}(D(u),D(u')) \leq C_Dd_{\mathcal U}(u,u')\), then \(T=\Phi\circ D\) is \(L_\Phi C_D\)-Lipschitz.

When diagram stability is known in bottleneck distance but
\(\Phi\) is stable in \(W_p\), the following finite-cardinality comparison is useful. Throughout this comparison, \(d_B\) and \(W_p\) are defined using the same ground metric on the birth--death plane, taken here to be the \(\ell^\infty\) metric.

\begin{lemma}[Bottleneck--Wasserstein comparison]
\label{lem:bottleneck-wasserstein}
If \(D\) and \(D'\) each have at most \(N\) off-diagonal points, then
\[
W_p(D,D')
\leq
(2N)^{1/p}d_B(D,D'),
\qquad 1\leq p<\infty.
\]
\end{lemma}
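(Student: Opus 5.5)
The plan is to bound the Wasserstein cost of a single, well-chosen partial matching, and then use that \(W_p\) is an infimum over all matchings. Recall that \(W_p(D,D')^p=\inf_\gamma\sum_{u}\|u-\gamma(u)\|_\infty^p\). Here \(\gamma\) ranges over bijections between \(D\cup\Delta\) and \(D'\cup\Delta\), with \(\Delta\) the diagonal taken with infinite multiplicity, and pairs of diagonal points contribute zero. Similarly, \(d_B(D,D')=\inf_\gamma\sup_u\|u-\gamma(u)\|_\infty\), with the same ground metric.

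First, fix \(\varepsilon>0\) and choose a bijection \(\gamma\) whose bottleneck cost is at most \(d_B(D,D')+\varepsilon\). If the infimum in \(d_B\) is attained, which is standard for finite diagrams, take \(\varepsilon=0\).

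Second, count the pairs \((u,\gamma(u))\) that contribute a nonzero term to the \(W_p\) cost. A pair contributes only if at least one of its two entries is an off-diagonal point. Each off-diagonal point of \(D\) appears in exactly one pair, and so does each off-diagonal point of \(D'\). Hence at most \(N+N=2N\) pairs have positive cost; all remaining pairs match diagonal to diagonal and cost zero. Summing gives
\[
\sum_u\|u-\gamma(u)\|_\infty^p\le 2N\,\bigl(d_B(D,D')+\varepsilon\bigr)^p .
\]

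Third, since \(\gamma\) is admissible for \(W_p\), we get \(W_p(D,D')\le(2N)^{1/p}(d_B(D,D')+\varepsilon)\). Letting \(\varepsilon\downarrow0\) gives the claim.

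The only point needing care is the bookkeeping in the second step. One must check that matching off-diagonal points to the diagonal does not create extra nonzero terms beyond these \(2N\) pairs. One must also check that both metrics use the same \(\ell^\infty\) ground distance, including the distance from a point to the diagonal, namely its orthogonal \(\ell^\infty\) projection cost \((d-b)/2\). If both conventions are identical, the constant \(2N\) needs no further argument. Counting more finely gives the sharper constant \(|D|+|D'|\le 2N\), but that refinement is unnecessary.
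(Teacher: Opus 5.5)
Your proof is correct and follows essentially the same route as the paper: take a (near-)optimal bottleneck matching, note that at most \(|D|+|D'|\le 2N\) of its pairs have nonzero cost, bound each by the matching cost raised to the power \(p\), and let the slack tend to zero. Your remark that both metrics must use the same \(\ell^\infty\) ground distance, including the distance to the diagonal, matches the convention the paper fixes just before the lemma.
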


\begin{proof}
For every \(\varepsilon>d_B(D,D')\), choose a bottleneck matching of cost at most \(\varepsilon\). It has at most
\(|D|+|D'|\leq2N\) nonzero contributions, each at most
\(\varepsilon^p\). Hence \(W_p(D,D')^p\leq2N\varepsilon^p\). Letting \(\varepsilon\downarrow d_B(D,D')\) proves the claim.
\end{proof}

Therefore, if \(\Phi\) is \(L_{\Phi,p}\)-Lipschitz in \(W_p\), then on this diagram class it is bottleneck-Lipschitz with constant \(\kappa_\Phi = L_{\Phi,p}(2N)^{1/p}\). A bounded-cardinality assumption should not be replaced merely by bounded total persistence. Direct Wasserstein stability theorems may be used instead, but their geometric and total-persistence assumptions must be checked explicitly
\citep{cohensteineredelsbrunnerharermileyko2010,skrabaturner2020}.

Persistence landscapes are directly stable from bottleneck distance to the appropriate supremum norm. Betti curves, persistence images, and other summaries are commonly stable in Wasserstein-type metrics under summary-specific target norms and regularity conditions. Thus a statement that a vectorization is ``stable'' is incomplete unless both
the diagram metric and the target norm are specified.

\paragraph{Power-weighted silhouettes.}
Kim and Lee~\citep[Theorem~5.3]{kimlee2026} establish a bound
\[
\|\phi_D-\phi_{D'}\|_\infty
\leq
C_{\mathrm{KL}}W_1(D,D')
\]
for power-weighted silhouettes under their boundedness and
normalization conditions. We retain their constant in the source form \(C_{\mathrm{KL}}\), since its precise expression depends on the conditions and auxiliary constant used in their theorem. For \(r=1\), their bound gives \(C_{\mathrm{KL}}=3\); for \(r>1\), additional
control preventing arbitrarily small positive persistences is required.

\begin{proposition}[Silhouette metric conversion]
\label{prop:silhouette-constant}
If, in addition, every diagram has at most \(N\)
off-diagonal points, then \(\|\phi_D-\phi_{D'}\|_\infty \leq 2NC_{\mathrm{KL}}\,d_B(D,D')\).
\end{proposition}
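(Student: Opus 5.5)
The plan is to chain the imported Kim--Lee bound with Lemma~\ref{lem:bottleneck-wasserstein} for \(p=1\). Fix diagrams \(D,D'\) in the admissible class. These diagrams satisfy the boundedness and normalization conditions under which Kim and Lee~\citep[Theorem~5.3]{kimlee2026} hold, in particular \(S_r(D),S_r(D')>0\). Assume also that each has at most \(N\) off-diagonal points. The Kim--Lee bound then gives
\[
\|\phi_D-\phi_{D'}\|_\infty\leq C_{\mathrm{KL}}W_1(D,D').
\]

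Next, I apply Lemma~\ref{lem:bottleneck-wasserstein} with \(p=1\). Since both diagrams have at most \(N\) off-diagonal points, this yields \(W_1(D,D')\leq 2N\,d_B(D,D')\). Combining the two inequalities gives the stated bound \(\|\phi_D-\phi_{D'}\|_\infty\leq 2NC_{\mathrm{KL}}\,d_B(D,D')\). Equivalently, this is the remark following Lemma~\ref{lem:bottleneck-wasserstein}, with \(L_{\Phi,1}=C_{\mathrm{KL}}\) and \(\kappa_\Phi=C_{\mathrm{KL}}(2N)^{1/1}\).

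The only real obstacle is checking that the two metrics are compatible. The \(W_1\) appearing in Kim and Lee's theorem must be defined with the same ground metric on the birth--death plane as the one used in the lemma, which is \(\ell^\infty\). If Kim and Lee instead use a different ground norm, such as \(\ell^1\) or \(\ell^2\), I would insert the norm-equivalence factor. For instance, \(\|\cdot\|_1\leq 2\|\cdot\|_\infty\) on \(\mathbb R^2\), and this factor can be absorbed into \(C_{\mathrm{KL}}\). Since \(C_{\mathrm{KL}}\) is already kept in the source form stated with their theorem, I would note this convention explicitly rather than change the constant.

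I would also note that the finiteness of \(N\) is exactly what makes the lemma applicable; bounded total persistence alone would not suffice, as remarked after the lemma.
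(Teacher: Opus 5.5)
Your proposal is correct and matches the paper's proof: apply the Kim--Lee bound \(\|\phi_D-\phi_{D'}\|_\infty\leq C_{\mathrm{KL}}W_1(D,D')\) and then Lemma~\ref{lem:bottleneck-wasserstein} with \(p=1\) to get \(W_1(D,D')\leq 2N\,d_B(D,D')\). Your remark about the ground metric on the birth--death plane is a sensible extra caution that the paper handles only implicitly through its \(\ell^\infty\) convention. Note, though, that absorbing a norm-equivalence factor would change the meaning of \(C_{\mathrm{KL}}\) in the stated constant.
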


\begin{proof}
Lemma~\ref{lem:bottleneck-wasserstein}, with \(p=1\), gives
\(W_1(D,D')\leq2N\,d_B(D,D')\). Combining this with the
Kim--Lee bound
\(\|\phi_D-\phi_{D'}\|_\infty
\leq C_{\mathrm{KL}}W_1(D,D')\)
proves the result.
\end{proof}

This is a metric conversion of the Kim--Lee result, not a new
silhouette-stability theorem.

For reference, Table~\ref{tab:stability-transfer} summarizes the classical diagram-level bounds and the resulting TCDA Lipschitz constants.

\begin{table}[ht]
\centering
\begin{tabular}{@{}llll@{}}
\hline
Filtration
& Input distance
& \(C_D\)
& \(L_T=\kappa_\Phi C_D\) \\
\hline
Sublevel sets
& \(\|f-g\|_\infty\)
& \(1\)
& \(\kappa_\Phi\) \\

Vietoris--Rips
& \(d_{GH}(K,L)\)
& \(2\)
& \(2\kappa_\Phi\) \\

Distance-to-measure
& \(W_2(P,Q)\)
& \(m^{-1/2}\)
& \(\kappa_\Phi m^{-1/2}\) \\
\hline
\end{tabular}
\caption{Stability constants, where
\(d_B(D(u),D(u'))\leq C_Dd_{\mathcal U}(u,u')\) and
\(T=\Phi\circ D\) is \(L_T\)-Lipschitz.}
\label{tab:stability-transfer}
\end{table}

\subsection{Sublevel-set persistence}
\label{subsec:sublevel-stability}

Suppose that each outcome \(y\) determines a tame function
\(f_y:\Omega\to\mathbb R\) on a fixed triangulable space, and let \(D_k(y)\) be the persistence diagram of its sublevel-set filtration.

\begin{lemma}[Classical sublevel-set stability]
\label{lem:sublevel-diagram-stability}
Under the usual tameness assumptions,
\[
d_B(D_k(y),D_k(y'))
\leq
\|f_y-f_{y'}\|_\infty.
\]
\end{lemma}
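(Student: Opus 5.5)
This is the classical stability theorem of Cohen-Steiner, Edelsbrunner and Harer, and the paper cites it rather than reproving it. The plan is the standard interleaving argument: reduce the bottleneck bound to an interleaving of the two sublevel-set persistence modules, then invoke the algebraic stability (isometry) theorem for $q$-tame modules~\citep{chazaldesilvaglisseoudot2016,oudot2015}.

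First, set $\varepsilon=\|f_y-f_{y'}\|_\infty$ and write $K_t=f_y^{-1}(-\infty,t]$ and $K'_t=f_{y'}^{-1}(-\infty,t]$. The pointwise bound $f_{y'}\leq f_y+\varepsilon$ gives $K_t\subseteq K'_{t+\varepsilon}$. Symmetrically, $f_y\leq f_{y'}+\varepsilon$ gives $K'_t\subseteq K_{t+\varepsilon}$.

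Next, apply $H_k(\,\cdot\,;\Bbbk)$. The inclusions induce linear maps
\[
\alpha_t:H_k(K_t)\to H_k(K'_{t+\varepsilon}),
\qquad
\beta_t:H_k(K'_t)\to H_k(K_{t+\varepsilon}).
\]
All the relevant diagrams consist of inclusions of subsets of $\Omega$, so they commute. Functoriality then gives:
\begin{itemize}
\item naturality in $t$ for $\alpha$ and for $\beta$;
\item $\beta_{t+\varepsilon}\circ\alpha_t=\iota_{t,t+2\varepsilon}$;
\item $\alpha_{t+\varepsilon}\circ\beta_t=\iota'_{t,t+2\varepsilon}$.
\end{itemize}
Hence $\operatorname{PH}_k(\mathcal F(y))$ and $\operatorname{PH}_k(\mathcal F(y'))$ are $\varepsilon$-interleaved.

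The tameness hypothesis on $f_y$ and $f_{y'}$ over a triangulable space makes both modules $q$-tame (Assumption~\ref{ass:q-tameness}), so their diagrams are defined. The algebraic stability theorem states that $\varepsilon$-interleaved $q$-tame modules have diagrams at bottleneck distance at most $\varepsilon$. This yields $d_B(D_k(y),D_k(y'))\leq\|f_y-f_{y'}\|_\infty$.

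The only delicate step is the algebraic stability theorem itself, which I would cite rather than reprove. If a self-contained argument were wanted, I would follow the original tame-function proof: use the box lemma to compare persistence counts in $\varepsilon$-shrunk quadrants, then apply Hall's marriage theorem to build the matching. Two conventions also need checking. Points on the diagonal are allowed as matching partners. Essential classes follow the finite-part convention of Subsection~\ref{subsec:topological-representations}; they are matched among themselves by the same interleaving, since the counts of essential classes agree.
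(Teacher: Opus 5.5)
Your proof is correct, but it takes a different route from the paper. The paper gives no argument and simply invokes the Cohen--Steiner--Edelsbrunner--Harer bottleneck stability theorem for continuous tame functions on a triangulable space \citep{cohensteineredelsbrunnerharer2007}. You instead build an explicit $\varepsilon$-interleaving from the inclusions $K_t\subseteq K'_{t+\varepsilon}$ and $K'_t\subseteq K_{t+\varepsilon}$, and then apply the algebraic stability theorem for $q$-tame modules.

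The citation matches the lemma's hypotheses exactly and needs nothing further. Your reduction buys generality: it uses only $q$-tameness of the two sublevel-set modules, not compactness, triangulability, or finitely many homological critical values. The paper actually relies on that generality later. Lemma~\ref{lem:dtm-stability} applies ``sublevel-set stability'' to $d_{P,m}$ and $d_{Q,m}$ on a general metric space $E$ under Assumption~\ref{ass:q-tameness} alone. The classical tame-function statement does not directly cover that setting, but your interleaving argument goes through verbatim.

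On essential classes, the cleanest justification is different from the one you give. A matching of finite cost must pair points at $+\infty$ with points at $+\infty$. Restricting an $\varepsilon$-matching of the extended diagrams to their finite parts therefore gives an $\varepsilon$-matching of the finite parts.

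One caveat concerns only your optional self-contained sketch: box-lemma counts followed by Hall's theorem do not by themselves produce the matching. Take $\varepsilon=1$, $D=\{(0,100),(2,102),(4,100)\}$ and $D'=\{(1,101),(3,99),(5,101)\}$, and let $R_1$ denote the closed $1$-thickening of a box $R$.
\begin{itemize}
\item For every box $R$ above the diagonal, $\#(D\cap R)\le\#(D'\cap R_1)$ and $\#(D'\cap R)\le\#(D\cap R_1)$.
\item Yet $(0,100)$ and $(2,102)$ are both within $\ell^\infty$-distance $1$ only of $(1,101)$, so Hall's condition fails.
\item In fact $d_B(D,D')=3$.
\end{itemize}
The classical argument avoids this by interpolating between the two functions, $h_\lambda=(1-\lambda)f_y+\lambda f_{y'}$. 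It uses the counting argument only for perturbations that are small relative to the separation of the diagram points; the module-level analogue is the interpolation lemma. Since your main argument cites the algebraic stability theorem rather than reproving it, this does not affect its validity.
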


This is the classical bottleneck stability theorem
\citep{cohensteineredelsbrunnerharer2007} and no proof is repeated here.

\begin{proposition}[Sublevel-set TCDA stability]
\label{prop:sublevel-causal-stability}
Let \(\Phi\) be bottleneck-Lipschitz with constant \(\kappa_\Phi\), and define \(T_{\mathrm{out}}(y)=\Phi(D_k(y))\). Then
\[
\|T_{\mathrm{out}}(y)-T_{\mathrm{out}}(y')\|_{\mathcal B}
\leq
\kappa_\Phi\|f_y-f_{y'}\|_\infty.
\]
Consequently, for causal models \(M,M'\) and
treatment-specific couplings
\((Y^a,\widetilde Y^a)\), \(a\in\{0,1\}\), satisfying the
integrability conditions of
Theorem~\ref{thm:stability-outcome-tate},
\[
\left\|
\operatorname{TATE}_{\mathrm{out}}(M)
-
\operatorname{TATE}_{\mathrm{out}}(M')
\right\|_{\mathcal B}
\leq
\kappa_\Phi
\sum_{a\in\{0,1\}}
\mathbb E\|f_{Y^a}-f_{\widetilde Y^a}\|_\infty.
\]
\end{proposition}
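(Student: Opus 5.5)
The plan is to compose the two Lipschitz steps and then invoke the abstract coupling bound. For the first display, fix $y,y'$. By the definition of a bottleneck-stable vectorization (Definition~\ref{def:diagram-stable-summary} with $d_{\mathsf D}=d_B$), we have $\|\Phi(D_k(y))-\Phi(D_k(y'))\|_{\mathcal B}\le\kappa_\Phi\,d_B(D_k(y),D_k(y'))$. Lemma~\ref{lem:sublevel-diagram-stability} bounds the bottleneck distance by $\|f_y-f_{y'}\|_\infty$. Chaining the two gives the pointwise bound. This is exactly the composition remark following Definition~\ref{def:diagram-stable-summary}, with $C_D=1$, and it matches the first row of Table~\ref{tab:stability-transfer}.

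For the causal consequence, I will not apply Theorem~\ref{thm:stability-outcome-tate} literally. Its Assumption~\ref{ass:lipschitz-outcome-T} is phrased via a metric $d_{\mathcal Y}$ on outcomes, and $(y,y')\mapsto\|f_y-f_{y'}\|_\infty$ is only a pseudometric, which is infinite if the $f_y$ are unbounded in difference. Instead I will repeat its two-line proof.

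1. By linearity of the Bochner integral, $\operatorname{TATE}_{\mathrm{out}}(M)-\operatorname{TATE}_{\mathrm{out}}(M')=\sum_a \pm\,\mathbb E[T_{\mathrm{out}}(Y^a)-T_{\mathrm{out}}(\widetilde Y^a)]$. This requires the represented outcomes to be Bochner integrable, which is one of the imposed integrability conditions. The identity also uses the fact that the coupling marginals are the interventional laws, so that $\mathbb E[T_{\mathrm{out}}(Y^a)]$ computed under the coupling equals the model expectation.
2. The triangle inequality and Bochner--Jensen bound the norm of this difference by $\sum_a\mathbb E\|T_{\mathrm{out}}(Y^a)-T_{\mathrm{out}}(\widetilde Y^a)\|_{\mathcal B}$.
3. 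Applying the pointwise bound inside each expectation yields the claim.

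Alternatively, if one prefers invoking the theorem directly, one may set $d_{\mathcal Y}(y,y'):=\|f_y-f_{y'}\|_\infty$, possibly extended-valued. Then Assumption~\ref{ass:lipschitz-outcome-T} holds with $L_{\mathrm{out}}=\kappa_\Phi$, and the argument of Theorem~\ref{thm:stability-outcome-tate} goes through verbatim, since it never uses more than the triangle inequality in $\mathcal B$.

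The main point needing care is measurability of $\omega\mapsto\|f_{Y^a}-f_{\widetilde Y^a}\|_\infty$, so that the right-hand expectation makes sense. I will handle this by noting that the integrand $\|T_{\mathrm{out}}(Y^a)-T_{\mathrm{out}}(\widetilde Y^a)\|_{\mathcal B}$ is measurable, since $T_{\mathrm{out}}$ is measurable and $\mathcal B$ is separable. The bound on the right may then be read as an outer expectation, or measurability may be assumed as part of the integrability conditions. If the right-hand side is infinite, the inequality is trivial. A secondary caveat is that Lemma~\ref{lem:sublevel-diagram-stability} requires tameness of $f_y$ for all relevant $y$, which is given in the setup.
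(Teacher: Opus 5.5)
Your proposal is correct and matches the paper's proof: compose Lemma~\ref{lem:sublevel-diagram-stability} with the $\kappa_\Phi$-Lipschitz map $\Phi$, then rerun the triangle-inequality and Bochner--Jensen argument of Theorem~\ref{thm:stability-outcome-tate} in each treatment arm rather than invoking that theorem literally. Your remarks that $\|f_y-f_{y'}\|_\infty$ is only an (extended) pseudometric, and on measurability of the right-hand integrand, are sensible refinements that the paper leaves implicit.
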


\begin{proof}
The first bound follows by composing
Lemma~\ref{lem:sublevel-diagram-stability} with the
\(\kappa_\Phi\)-Lipschitz map \(\Phi\). Applying the triangle
inequality and Bochner--Jensen to the two treatment arms, as in Theorem~\ref{thm:stability-outcome-tate}, and then using the first bound gives the second inequality.
\end{proof}

\subsection{Vietoris--Rips persistence}
\label{subsec:vr-stability}

Suppose that each outcome \(y\) determines a nonempty compact
metric space \(K_y\), and let \(D_k^{\mathrm{VR}}(K_y)\) denote its Vietoris--Rips persistence diagram. Under the diameter-threshold convention, totally bounded metric spaces have \(q\)-tame Vietoris--Rips persistence, and
\(d_B\left( D_k^{\mathrm{VR}}(K), D_k^{\mathrm{VR}}(L) \right) \leq 2d_{GH}(K,L)\) \citep{chazaldesilvaoudot2014}. For the Hausdorff formulation
below, assume additionally that all \(K_y\) are compact subsets of a common metric space \((E,d_E)\); then
\(d_{GH}(K,L)\leq d_H^E(K,L)\).

\begin{lemma}[Vietoris--Rips stability]
\label{lem:vr-stability}
For compact subsets \(K,L\subseteq E\),
\[
d_B\left(
D_k^{\mathrm{VR}}(K),
D_k^{\mathrm{VR}}(L)
\right)
\leq
2d_H^E(K,L).
\]
\end{lemma}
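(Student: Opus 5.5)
The plan is to combine the Gromov--Hausdorff stability theorem quoted just before the statement with the elementary comparison \(d_{GH}(K,L)\leq d_H^E(K,L)\) for subsets of a common ambient space. The first step is to check that the Vietoris--Rips stability result of Chazal, de Silva and Oudot applies to \(K\) and \(L\). Compact subsets of \((E,d_E)\), with the restricted metric, are nonempty compact metric spaces and hence totally bounded. Under the diameter-threshold convention their Vietoris--Rips persistence modules are therefore \(q\)-tame, so \(D_k^{\mathrm{VR}}(K)\) and \(D_k^{\mathrm{VR}}(L)\) are well defined. The theorem then gives
\[
d_B\bigl(D_k^{\mathrm{VR}}(K),D_k^{\mathrm{VR}}(L)\bigr)\leq 2\,d_{GH}(K,L).
\]

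The second step proves \(d_{GH}(K,L)\leq d_H^E(K,L)\). Recall that \(d_{GH}(K,L)\) is the infimum of \(d_H^Z(\iota(K),\jmath(L))\) over all metric spaces \(Z\) and isometric embeddings \(\iota:K\to Z\), \(\jmath:L\to Z\). Taking \(Z=E\) with \(\iota,\jmath\) the inclusions is one admissible choice, so the infimum is at most \(d_H^E(K,L)\).

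If a proof is wanted that avoids the embedding formulation, I would use the correspondence characterization \(d_{GH}=\tfrac12\inf_R\operatorname{dis}(R)\). Fix \(\varepsilon>d_H^E(K,L)\). Every point of \(K\) then lies within \(\varepsilon\) of some point of \(L\), and conversely. Let
\[
R=\{(x,y)\in K\times L: d_E(x,y)\leq\varepsilon\}.
\]
This set is a correspondence. By the triangle inequality, \(|d_E(x,x')-d_E(y,y')|\leq2\varepsilon\) for all \((x,y),(x',y')\in R\), so \(\operatorname{dis}(R)\leq 2\varepsilon\) and hence \(d_{GH}(K,L)\leq\varepsilon\). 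Letting \(\varepsilon\downarrow d_H^E(K,L)\) gives the inequality.

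Chaining the two displays yields the claim. The only point needing care is the first step: the bottleneck distance must be computed on diagrams that exist, which is where compactness (and therefore total boundedness) enters. The factor \(2\) comes entirely from the convention used in the cited stability theorem. I expect no genuine obstacle, since the statement is a direct corollary of the quoted result.
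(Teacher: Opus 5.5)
Your proposal is correct and follows the same route as the paper: the lemma is obtained from the Chazal--de Silva--Oudot bound \(d_B\le 2\,d_{GH}\) for totally bounded spaces under the diameter-threshold convention, combined with the comparison \(d_{GH}(K,L)\le d_H^E(K,L)\) for subsets of a common ambient space. Your explicit check of that comparison fills in a step the paper only asserts, either via the inclusion embeddings or via the correspondence \(\{(x,y): d_E(x,y)\le\varepsilon\}\), whose distortion is at most \(2\varepsilon\).
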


The constant \(2\) depends on the filtration convention.

\begin{proposition}[Vietoris--Rips TCDA stability]
\label{prop:vr-causal-stability}
Let \(\Phi\) be bottleneck-Lipschitz with constant \(\kappa_\Phi\), and define
\(T_{\mathrm{out}}(y) = \Phi(D_k^{\mathrm{VR}}(K_y))\). Then
\[
\|T_{\mathrm{out}}(y)-T_{\mathrm{out}}(y')\|_{\mathcal B}
\leq
2\kappa_\Phi d_H^E(K_y,K_{y'}).
\]
Consequently, for causal models \(M,M'\) and
treatment-specific couplings
\((Y^a,\widetilde Y^a)\), \(a\in\{0,1\}\), satisfying the
integrability conditions of
Theorem~\ref{thm:stability-outcome-tate},
\[
\left\|
\operatorname{TATE}_{\mathrm{out}}(M)
-
\operatorname{TATE}_{\mathrm{out}}(M')
\right\|_{\mathcal B}
\leq
2\kappa_\Phi
\sum_{a\in\{0,1\}}
\mathbb E[
d_H^E(K_{Y^a},K_{\widetilde Y^a})
].
\]
\end{proposition}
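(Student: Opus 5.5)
The plan mirrors the proof of Proposition~\ref{prop:sublevel-causal-stability}, with the sublevel-set stability theorem replaced by the Vietoris--Rips stability of Lemma~\ref{lem:vr-stability}. For fixed outcomes \(y,y'\), the first step is to write
\[
\|T_{\mathrm{out}}(y)-T_{\mathrm{out}}(y')\|_{\mathcal B}
=
\bigl\|\Phi(D_k^{\mathrm{VR}}(K_y))-\Phi(D_k^{\mathrm{VR}}(K_{y'}))\bigr\|_{\mathcal B}
\leq
\kappa_\Phi\, d_B\bigl(D_k^{\mathrm{VR}}(K_y),D_k^{\mathrm{VR}}(K_{y'})\bigr).
\]
This uses only the bottleneck-Lipschitz property of \(\Phi\). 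Applying Lemma~\ref{lem:vr-stability} to the compact subsets \(K_y,K_{y'}\subseteq E\) then gives the pointwise bound with constant \(2\kappa_\Phi\).

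For this composition to make sense, the diagrams must lie in the domain of \(\Phi\). We need them to be \(q\)-tame diagrams whose finite parts belong to the specified bottleneck metric subspace. \(q\)-tameness holds because compact sets are totally bounded, so the diagram-threshold convention cited from \citep{chazaldesilvaoudot2014} applies. Membership in the admissible class is part of the standing specification of \(\Phi\). The Hausdorff form of Lemma~\ref{lem:vr-stability} follows from the Gromov--Hausdorff bound via \(d_{GH}\leq d_H^E\).

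For the causal statement, we proceed as in Theorem~\ref{thm:stability-outcome-tate}. Linearity of the Bochner integral gives
\[
\operatorname{TATE}_{\mathrm{out}}(M)-\operatorname{TATE}_{\mathrm{out}}(M')
=
\sum_{a}\pm\,\mathbb E\bigl[T_{\mathrm{out}}(Y^a)-T_{\mathrm{out}}(\widetilde Y^a)\bigr].
\]
This uses the fact that each expectation depends only on the marginal law, so it can be computed under the coupling. The triangle inequality and Bochner--Jensen bound the norm by \(\sum_a\mathbb E\|T_{\mathrm{out}}(Y^a)-T_{\mathrm{out}}(\widetilde Y^a)\|_{\mathcal B}\). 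Inserting the pointwise bound inside the expectation finishes the argument.

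The only step needing care is measurability of \(\omega\mapsto d_H^E(K_{Y^a},K_{\widetilde Y^a})\), so that the right-hand expectation is meaningful. I would handle it by taking \(y\mapsto K_y\) measurable into the hyperspace of compact subsets of \(E\) with the Hausdorff metric, in which \(d_H^E\) is continuous. If the expectation is infinite, the bound is trivial. Otherwise all terms are finite, and the inequality follows from monotonicity of the (Lebesgue) expectation applied to the nonnegative scalar integrand.
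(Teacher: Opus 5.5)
Your proposal is correct and follows the paper's proof. The pointwise bound comes from composing the \(\kappa_\Phi\)-Lipschitz property of \(\Phi\) with Lemma~\ref{lem:vr-stability}, and the causal bound comes from applying it to the coupled potential outcomes in each arm with linearity, the triangle inequality, and Bochner--Jensen, exactly as in Theorem~\ref{thm:stability-outcome-tate}. Your remarks on measurability of \(\omega\mapsto d_H^E(K_{Y^a},K_{\widetilde Y^a})\) and on the diagrams lying in the domain of \(\Phi\) spell out details the paper leaves implicit.
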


\begin{proof}
The first bound follows by composing
Lemma~\ref{lem:vr-stability} with the
\(\kappa_\Phi\)-Lipschitz map \(\Phi\). Applying this bound to
the coupled potential outcomes in each treatment arm and using
the triangle inequality and Bochner--Jensen gives the second
inequality.
\end{proof}

For arbitrary compact metric spaces, \(d_H^E\) may be replaced by \(d_{GH}\). If \(K_y\) is itself estimated, its Hausdorff or Gromov--Hausdorff convergence requires a separate sampling argument. Literal support topology may also be uninformative under unbounded noise, motivating the DTM construction below.

\subsection{Distance-to-measure persistence}
\label{subsec:dtm-stability}

Recall the DTM construction of
Section~\ref{subsec:dtm-distribution-tcda}. Let
\(P,Q\in\mathcal P_2(E)\), fix \(m\in(0,1)\), and assume that the corresponding degree-\(k\) DTM persistence modules are \(q\)-tame.

\begin{lemma}[DTM stability]
\label{lem:dtm-stability}
\[
d_B\left(
D_{k,m}^{\mathrm{DTM}}(P),
D_{k,m}^{\mathrm{DTM}}(Q)
\right)
\leq
\|d_{P,m}-d_{Q,m}\|_\infty
\leq
m^{-1/2}W_2(P,Q).
\]
\end{lemma}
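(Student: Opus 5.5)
The statement is the composition of two inequalities, and I will prove them separately and then chain them.

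\emph{First inequality.} I will invoke the classical sublevel-set bottleneck stability theorem, Lemma~\ref{lem:sublevel-diagram-stability}, in its \(q\)-tame form (Chazal--de Silva--Glisse--Oudot). Both filtrations \(\mathcal F_{\mathrm{DTM}}^m(P)\) and \(\mathcal F_{\mathrm{DTM}}^m(Q)\) are sublevel filtrations of real-valued functions on the same space \(E\). Set \(\varepsilon=\|d_{P,m}-d_{Q,m}\|_\infty\), assumed finite; if it is infinite there is nothing to prove. Then
\[
\{d_{P,m}\le t\}\subseteq\{d_{Q,m}\le t+\varepsilon\}
\]
and symmetrically, so the two persistence modules are \(\varepsilon\)-interleaved. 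The isometry (stability) theorem for \(q\)-tame modules then gives \(d_B\le\varepsilon\). The only care needed here is that the standing \(q\)-tameness assumption is exactly what licenses this step. No triangulability is required, because the algebraic stability theorem works directly from the interleaving.

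\emph{Second inequality.} This is the standard Chazal--Cohen-Steiner--Mérigot estimate, and I will reprove it from the definition.

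\begin{enumerate}
\item Write \(d_{P,m}(y)^2=\frac1m\int_0^m\delta_{P,\ell}(y)^2\,d\ell\). The function \(\ell\mapsto\delta_{P,\ell}(y)\) is (up to the right-continuity convention) the quantile function of the pushforward law of \(d_{\mathcal Y}(y,Y)\) with \(Y\sim P\).
\item Let \(\pi\) be an optimal \(W_2\) coupling of \(P\) and \(Q\). A mass-\(m\) submeasure of \(P\) near \(y\) is transported by \(\pi\) to a mass-\(m\) submeasure of \(Q\).
\item This yields the variational formula
\[
d_{P,m}(y)=\inf\Bigl\{\bigl(\tfrac1m\textstyle\int d(y,z)^2\,d\mu(z)\bigr)^{1/2}:\ \mu\le P,\ \mu(E)=m\Bigr\},
\]
the infimum being attained by the restriction of \(P\) to the closest mass \(m\).
\item Take the optimal \(\mu\le P\) and let \(\nu\) be its image under \(\pi\), i.e.\ the second marginal of \(\pi\) restricted to \(\mu\times E\). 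Then \(\nu\le Q\) and \(\nu(E)=m\).
\item Minkowski's inequality in \(L^2(\pi|_{\mu})/m\) gives
\[
d_{Q,m}(y)\le d_{P,m}(y)+\Bigl(\tfrac1m\int d(z,w)^2\,d\pi\Bigr)^{1/2}\le d_{P,m}(y)+m^{-1/2}W_2(P,Q).
\]
\item Symmetry in \(P\) and \(Q\) and a supremum over \(y\) finish the argument.
\end{enumerate}

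\emph{Main obstacle.} The delicate step is the variational characterization, i.e.\ identifying the quantile integral with the best mass-\(m\) submeasure when atoms of \(P\) sit on the sphere \(\partial\overline B(y,\delta_{P,m}(y))\). This requires splitting an atom fractionally, so submeasures rather than sets must be used. I would handle it by the layer-cake identity
\[
\int_0^m\delta_{P,\ell}^2\,d\ell=\inf_{\mu}\int d(y,\cdot)^2\,d\mu,
\]
which is a rearrangement (Hardy--Littlewood type) argument on the real line applied to the pushforward of \(P\) under \(d(y,\cdot)\). Finiteness of all quantities uses \(P,Q\in\mathcal P_2(E)\). Alternatively, the paper may simply cite \citep{chazalcohensteinermrigot2011} for this inequality, as it did for the earlier DTM display.
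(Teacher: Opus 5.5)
Your proposal is correct and follows the paper's route. The paper obtains the first inequality from sublevel-set stability. You phrase this as an \(\varepsilon\)-interleaving plus the algebraic stability theorem for \(q\)-tame modules, which is the version actually needed here, since Lemma~\ref{lem:sublevel-diagram-stability} is stated for tame functions on a triangulable space. For the second inequality the paper simply cites the Chazal--Cohen-Steiner--M\'erigot bound, and you reprove it by the same submeasure-transport argument used in that reference. One detail to make precise is step 4: since \(\mu\le P\) need not be a restriction to a set, take \(\pi_\mu(dz,dw)=\frac{d\mu}{dP}(z)\,\pi(dz,dw)\). Its first marginal is \(\mu\), and its second marginal is at most \(Q\) because \(d\mu/dP\le1\).
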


The first inequality is sublevel-set stability, and the second is the standard DTM stability bound
\citep{chazalcohensteinermrigot2011,buchetetal2016}.

\begin{proposition}[DTM distribution-level stability]
\label{prop:dtm-causal-stability}
Let \(\Phi\) be bottleneck-Lipschitz with constant \(\kappa_\Phi\), and define
\(T_{\mathrm{DTM}}^{m,k,\Phi}(P) = \Phi(D_{k,m}^{\mathrm{DTM}}(P))\). Then
\[
\left\|
T_{\mathrm{DTM}}^{m,k,\Phi}(P)
-
T_{\mathrm{DTM}}^{m,k,\Phi}(Q)
\right\|_{\mathcal B}
\leq
\kappa_\Phi m^{-1/2}W_2(P,Q).
\]
For a causal model \(M\), define
\(\Delta_{\mathrm{DTM}}^{m,k,\Phi}(M) = T_{\mathrm{DTM}}^{m,k,\Phi}(P_{Y,M}^1) - T_{\mathrm{DTM}}^{m,k,\Phi}(P_{Y,M}^0)\). Then, for any two causal models \(M,M'\) whose interventional laws belong to \(\mathcal P_2(E)\),
\[
\left\|
\Delta_{\mathrm{DTM}}^{m,k,\Phi}(M)
-
\Delta_{\mathrm{DTM}}^{m,k,\Phi}(M')
\right\|_{\mathcal B}
\leq
\kappa_\Phi m^{-1/2}
\sum_{a\in\{0,1\}}
W_2(P_{Y,M}^a,P_{Y,M'}^a).
\]
\end{proposition}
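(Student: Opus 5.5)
The plan is to compose the two stability steps of Lemma~\ref{lem:dtm-stability} with the Lipschitz property of \(\Phi\), and then transfer the resulting law-level Lipschitz bound to causal contrasts via Theorem~\ref{thm:distribution-stability}. No new analytic input should be required; the work is in checking that the hypotheses of the invoked results are met.

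\emph{First inequality.} Fix \(P,Q\in\mathcal P_2(E)\) whose degree-\(k\) DTM persistence modules are \(q\)-tame, so that \(D_{k,m}^{\mathrm{DTM}}(P)\) and \(D_{k,m}^{\mathrm{DTM}}(Q)\) are well-defined diagrams in the domain of \(\Phi\). Lemma~\ref{lem:dtm-stability} gives
\[
d_B\bigl(D_{k,m}^{\mathrm{DTM}}(P),D_{k,m}^{\mathrm{DTM}}(Q)\bigr)\leq m^{-1/2}W_2(P,Q).
\]
Since \(\Phi\) is \(\kappa_\Phi\)-Lipschitz from bottleneck distance to \(\|\cdot\|_{\mathcal B}\) (Definition~\ref{def:diagram-stable-summary} with \(d_{\mathsf D}=d_B\)), we obtain
\[
\|T_{\mathrm{DTM}}^{m,k,\Phi}(P)-T_{\mathrm{DTM}}^{m,k,\Phi}(Q)\|_{\mathcal B}\leq\kappa_\Phi m^{-1/2}W_2(P,Q).
\]
This is exactly the composition rule stated after Definition~\ref{def:diagram-stable-summary}, with \(C_D=m^{-1/2}\), and it matches the third row of Table~\ref{tab:stability-transfer}.

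\emph{Second inequality.} The first step verifies Assumption~\ref{ass:lipschitz-distribution-T} with
\[
\mathcal P_\star(\mathcal Y)=\{P\in\mathcal P_2(E):\text{the DTM modules are }q\text{-tame}\},\qquad d_{\mathcal P}=W_2,\qquad L_{\mathrm{dist}}=\kappa_\Phi m^{-1/2},
\]
and with the norm-induced metric on \(\mathcal B\). The Banach-space part of Theorem~\ref{thm:distribution-stability} then yields the claim. One can also argue directly: write
\[
\Delta_{\mathrm{DTM}}(M)-\Delta_{\mathrm{DTM}}(M')=\bigl[T(P_{Y,M}^1)-T(P_{Y,M'}^1)\bigr]-\bigl[T(P_{Y,M}^0)-T(P_{Y,M'}^0)\bigr],
\]
apply the triangle inequality, and bound each bracket by the first inequality.

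\emph{Main obstacle.} The only delicate point is bookkeeping of hypotheses. The diagrams must exist, which requires \(q\)-tameness of the DTM sublevel modules at all four interventional laws; the paper stresses that this does not follow from finite second moments, so it must be part of the standing assumptions. The first inequality of Lemma~\ref{lem:dtm-stability} requires sublevel-set stability for the continuous \(1\)-Lipschitz DTM functions, and the domain of \(\Phi\) must contain the resulting diagrams with finite bottleneck distance. The DTM bound needs \(W_2(P,Q)<\infty\), which holds because both laws lie in \(\mathcal P_2(E)\). With these assumptions in place, the argument is a direct composition of the two earlier results.
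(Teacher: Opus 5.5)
Your proposal is correct and matches the paper's proof. The first bound is the composition of Lemma~\ref{lem:dtm-stability} with the $\kappa_\Phi$-Lipschitz map $\Phi$, and the causal-effect bound follows by applying it to the two treatment-specific pairs and using the triangle inequality, which is equivalent to the Banach-space case of Theorem~\ref{thm:distribution-stability} with $L_{\mathrm{dist}}=\kappa_\Phi m^{-1/2}$.
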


\begin{proof}
The first bound follows by composing
Lemma~\ref{lem:dtm-stability} with the
\(\kappa_\Phi\)-Lipschitz map \(\Phi\). Applying that bound to
the treatment-specific pairs
\((P_{Y,M}^a,P_{Y,M'}^a)\), \(a=0,1\), and using the triangle
inequality proves the causal-effect bound.
\end{proof}

The parameter \(m\) is part of the estimand. Small values retain more local information but worsen the worst-case constant \(m^{-1/2}\). Weak convergence of laws is not sufficient for this bound: \(W_2\)-convergence also requires appropriate second-moment control.

\subsection{End-to-end plug-in bounds}
\label{subsec:end-to-end-distribution}

The abstract distribution-level bound immediately gives an
error-propagation result for any probability-valued estimator of the interventional laws.

\begin{proposition}[End-to-end plug-in bound]
\label{prop:end-to-end-distribution}
Suppose that
\(\mathcal Z_{\mathrm{dist}}=\mathcal B\) is a Banach space
equipped with its norm-induced metric and that
\(T_{\mathrm{dist}}\) satisfies
Assumption~\ref{ass:lipschitz-distribution-T}. Let
\(\widehat P^0,\widehat P^1\) be probability-valued estimators
taking values almost surely in
\(\mathcal P_\star(\mathcal Y)\), and define
\(\widehat\Delta_{\mathrm{dist}} = T_{\mathrm{dist}}(\widehat P^1) - T_{\mathrm{dist}}(\widehat P^0)\). Then
\[
\left\|
\widehat\Delta_{\mathrm{dist}}
-
\Delta_{\mathrm{dist}}
\right\|_{\mathcal B}
\leq
L_{\mathrm{dist}}
\sum_{a\in\{0,1\}}
d_{\mathcal P}(\widehat P^a,P_Y^a).
\]

For the DTM representation, assume additionally that
\(\widehat P^a\in\mathcal P_2(E)\) almost surely and that the
DTM diagrams of \(\widehat P^a\) satisfy the standing
\(q\)-tameness and diagram-class conditions. Define
\(\widehat\Delta_{\mathrm{DTM}}^{m,k,\Phi} = T_{\mathrm{DTM}}^{m,k,\Phi}(\widehat P^1) - T_{\mathrm{DTM}}^{m,k,\Phi}(\widehat P^0)\). Then
\[
\left\|
\widehat\Delta_{\mathrm{DTM}}^{m,k,\Phi}
-
\Delta_{\mathrm{DTM}}^{m,k,\Phi}
\right\|_{\mathcal B}
\leq
\kappa_\Phi m^{-1/2}
\sum_{a\in\{0,1\}}
W_2(\widehat P^a,P_Y^a).
\]
\end{proposition}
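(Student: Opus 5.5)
The plan is to apply the distribution-level stability result pointwise in the sample space, with the estimated laws playing the role of a second causal model.

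For the abstract bound, fix an outcome \(\omega\) in the almost-sure event where \(\widehat P^0(\omega),\widehat P^1(\omega)\in\mathcal P_\star(\mathcal Y)\). Write the error as
\[
\widehat\Delta_{\mathrm{dist}}-\Delta_{\mathrm{dist}}
=
\bigl\{T_{\mathrm{dist}}(\widehat P^1)-T_{\mathrm{dist}}(P_Y^1)\bigr\}
-
\bigl\{T_{\mathrm{dist}}(\widehat P^0)-T_{\mathrm{dist}}(P_Y^0)\bigr\}.
\]
The triangle inequality in \(\mathcal B\) bounds its norm by the sum of the two arm-specific norms. Assumption~\ref{ass:lipschitz-distribution-T}, applied with the norm-induced metric \(d_{\mathrm{dist}}(u,v)=\|u-v\|_{\mathcal B}\), bounds each arm by \(L_{\mathrm{dist}}d_{\mathcal P}(\widehat P^a,P_Y^a)\). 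This is exactly the Banach-space half of Theorem~\ref{thm:distribution-stability}, with \(M\) the true model and the pair \((\widehat P^0(\omega),\widehat P^1(\omega))\) in place of the laws of \(M'\). The theorem only uses membership of the laws in \(\mathcal P_\star(\mathcal Y)\), so no genuine causal model \(M'\) needs to be constructed. The inequality therefore holds for every such \(\omega\), hence almost surely.

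For the DTM bound, I would repeat the same argument with the Lipschitz estimate from the first inequality of Proposition~\ref{prop:dtm-causal-stability}. That estimate is
\[
\|T_{\mathrm{DTM}}^{m,k,\Phi}(P)-T_{\mathrm{DTM}}^{m,k,\Phi}(Q)\|_{\mathcal B}\le\kappa_\Phi m^{-1/2}W_2(P,Q),
\]
and it comes from Lemma~\ref{lem:dtm-stability} composed with the \(\kappa_\Phi\)-Lipschitz \(\Phi\). To apply it with \(P=\widehat P^a(\omega)\) and \(Q=P_Y^a\), I need three facts:
\begin{enumerate}
\item Both laws lie in \(\mathcal P_2(E)\). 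For \(\widehat P^a\) this is assumed almost surely; for \(P_Y^a\) it is the standing assumption under which \(\Delta_{\mathrm{DTM}}^{m,k,\Phi}\) is defined.
\item Both DTM modules are \(q\)-tame, so the bottleneck stability in Lemma~\ref{lem:dtm-stability} applies.
\item Both diagrams lie in the class on which \(\Phi\) is bottleneck-Lipschitz.
\end{enumerate}
The hypotheses of the statement grant these for \(\widehat P^a\), and the standing conditions grant them for \(P_Y^a\). Summing the two arms with the triangle inequality gives the claimed bound with constant \(\kappa_\Phi m^{-1/2}\).

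The only real subtlety is bookkeeping rather than analysis. Everything must hold on a single almost-sure event, so I would intersect the finitely many events (one per arm) on which the estimators take admissible values. I would also note that no measurability of \(\widehat\Delta\) is needed for a pointwise inequality. The inequality does, however, need measurability to become a statement about convergence in probability, for example when \(W_2(\widehat P^a,P_Y^a)\xrightarrow{p}0\) is assumed. I would remark that, when this is needed, it follows from the measurability conditions of Subsection~\ref{subsec:pipeline-regularity}.
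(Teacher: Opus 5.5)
Your proposal is correct and follows the paper's proof: split the error arm by arm, apply the triangle inequality in \(\mathcal B\), and bound each arm by the Lipschitz assumption. The DTM case is then the same argument with the constant \(\kappa_\Phi m^{-1/2}\) from Proposition~\ref{prop:dtm-causal-stability}. Your extra bookkeeping (a single almost-sure event, and checking \(q\)-tameness and diagram-class membership for both \(\widehat P^a\) and \(P_Y^a\)) makes explicit what the paper's one-line substitution \(L_{\mathrm{dist}}=\kappa_\Phi m^{-1/2}\) leaves implicit.
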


\begin{proof}
By the triangle inequality and Lipschitz continuity,
\[
\|\widehat\Delta_{\mathrm{dist}}
-\Delta_{\mathrm{dist}}\|_{\mathcal B}
\leq
\sum_{a\in\{0,1\}}
\|T_{\mathrm{dist}}(\widehat P^a)
-
T_{\mathrm{dist}}(P_Y^a)\|_{\mathcal B}
\leq
L_{\mathrm{dist}}
\sum_{a\in\{0,1\}}
d_{\mathcal P}(\widehat P^a,P_Y^a).
\]
The DTM statement follows by substituting \(L_{\mathrm{dist}}=\kappa_\Phi m^{-1/2}\).
\end{proof}

\begin{corollary}[Consistency and rate transfer]
\label{cor:consistency-distribution}
If \(d_{\mathcal P}(\widehat P^a,P_Y^a)
\xrightarrow{p}0,\ a=0,1\), then
\(\widehat\Delta_{\mathrm{dist}} \xrightarrow{p}
\Delta_{\mathrm{dist}}\). More generally, if
\(d_{\mathcal P}(\widehat P^a,P_Y^a)=O_p(r_{a,n})\), then
\[
\|\widehat\Delta_{\mathrm{dist}}
-
\Delta_{\mathrm{dist}}\|_{\mathcal B}
=
O_p\!\left(
L_{\mathrm{dist}}(r_{0,n}+r_{1,n})
\right).
\]
\end{corollary}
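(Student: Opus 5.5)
The corollary follows directly from the deterministic inequality in Proposition~\ref{prop:end-to-end-distribution}. Write \(E_n := \|\widehat\Delta_{\mathrm{dist}}-\Delta_{\mathrm{dist}}\|_{\mathcal B}\) and \(R_n := d_{\mathcal P}(\widehat P^1,P_Y^1)+d_{\mathcal P}(\widehat P^0,P_Y^0)\). The proposition gives \(E_n\le L_{\mathrm{dist}}R_n\) almost surely. This holds because \(\widehat P^a\in\mathcal P_\star(\mathcal Y)\) almost surely and the Lipschitz bound applies pathwise. I would first note that \(E_n\) is a random variable, or else work with outer probabilities. Measurability follows because \(T_{\mathrm{dist}}\) is Lipschitz, hence continuous, and the estimators are random elements of \((\mathcal P_\star(\mathcal Y),d_{\mathcal P})\).

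\emph{Consistency.} Fix \(\varepsilon>0\). If \(L_{\mathrm{dist}}=0\), then \(E_n=0\) almost surely and there is nothing to prove. Otherwise I use the inclusion of events
\[
\{E_n>\varepsilon\}\subseteq\{R_n>\varepsilon/L_{\mathrm{dist}}\}
\subseteq\bigcup_{a\in\{0,1\}}\{d_{\mathcal P}(\widehat P^a,P_Y^a)>\varepsilon/(2L_{\mathrm{dist}})\},
\]
valid up to a null set. A union bound then gives
\[
\mathbb P(E_n>\varepsilon)\le\sum_a\mathbb P\bigl(d_{\mathcal P}(\widehat P^a,P_Y^a)>\varepsilon/(2L_{\mathrm{dist}})\bigr)\to0.
\]

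\emph{Rate transfer.} I take \(r_{a,n}>0\) and interpret \(O_p(L_{\mathrm{dist}}(r_{0,n}+r_{1,n}))\) as \(O_p(r_{0,n}+r_{1,n})\) scaled by the fixed constant \(L_{\mathrm{dist}}\). Given \(\eta>0\), choose \(C_a\) and \(n_0\) such that \(\mathbb P\bigl(d_{\mathcal P}(\widehat P^a,P_Y^a)>C_ar_{a,n}\bigr)<\eta/2\) for all \(n\ge n_0\). Set \(C=\max(C_0,C_1)\). Then \(R_n\le C(r_{0,n}+r_{1,n})\) outside an event of probability less than \(\eta\). On that same good event, \(E_n\le C\,L_{\mathrm{dist}}(r_{0,n}+r_{1,n})\), which is the stated \(O_p\) bound.

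The argument is just monotonicity of probability combined with a union bound. The only potential snag is the bookkeeping for the degenerate case \(L_{\mathrm{dist}}=0\) and the measurability of \(E_n\), and both are handled above. The DTM specialization follows in the same way with \(L_{\mathrm{dist}}=\kappa_\Phi m^{-1/2}\) and \(d_{\mathcal P}=W_2\).
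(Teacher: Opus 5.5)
Your proposal is correct and follows the paper's route: the paper gives no separate proof and treats the corollary as an immediate consequence of the pathwise bound in Proposition~\ref{prop:end-to-end-distribution}. Your union-bound and \(O_p\) bookkeeping, including the remarks on \(L_{\mathrm{dist}}=0\) and on the measurability of \(E_n\), simply makes that implicit step explicit.
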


The corollary above assumes a fixed representation. If the
representation varies with \(n\), write its Lipschitz constant
as \(L_{\mathrm{dist},n}\) and its corresponding target as
\(\Delta_{\mathrm{dist},n}\). The same argument yields
plug-in consistency for this moving target provided
\(L_{\mathrm{dist},n} \sum_{a\in\{0,1\}}
d_{\mathcal P}(\widehat P^a,P_Y^a) \xrightarrow{p}0\). Convergence to a fixed limiting estimand additionally requires
\(\Delta_{\mathrm{dist},n}\to\Delta_{\mathrm{dist}}\). This
distinction matters when a cardinality bound \(N=N_n\) grows
or when the DTM parameter \(m=m_n\) tends to zero.

Finally, this proposition transfers an already established
\(d_{\mathcal P}\)-error bound. It does not prove that a particular \(g\)-formula, Hájek-weighted, smoothed, or projected estimator converges in that metric.

%%%%%%%%%%%%%%%%%%%%%%%%%%%%%%%%%%%%%%%%%%%%%%%%%%%%%%%%
%%%%%%%%%%%%%%%%%%%%%%%%%%%%%%%%%%%%%%%%%%%%%%%%%%%%%%%%

\section{Topology-assisted causal discovery: scope and limits}
\label{sec:topological-identifiability}

The preceding sections use topology to define causal effects after the causal target has been specified. A different question is whether topological information can help recover causal structure itself. This requires substantially stronger assumptions: geometric or topological structure in an observational distribution is not, by itself, causal
structure~\citep{spirtes2000,pearl2009}.

Topology-assisted discovery is nevertheless possible on restricted model classes. For example, a recent preprint uses a degree-zero persistent-homology functional of cross-fitted
regressor--residual clouds to infer direction in bivariate
additive-noise models~\citep{elbouchattaoui2026}. Our aim here is not to propose another general discovery algorithm, but to formalize the separation property that any such method requires. This use of topological summaries of data is distinct from the learning-theoretic use of topologies on spaces of structural causal models studied by
\citet{ibelingicard2021}.

Let \((\mathcal B,d_{\mathcal B})\) be a metric space of
topological summaries, let
\(\mathcal C\subseteq\bigcup_G\mathfrak M(G)\) be a class of
causal models, and let \(\sim\) denote the causal equivalence
relation of interest. Let \(T_{\mathrm{obs}}\) be a map from
the relevant observational laws into \(\mathcal B\), and write
\(\theta_T(M) = T_{\mathrm{obs}}(P_M^{\mathrm{obs}})
\in\mathcal B, \ M\in\mathcal C\). For an equivalence class \(A\in\mathcal C/{\sim}\), define its attainable summary set by
\(\Theta_T(A) =\{\theta_T(M):M\in A\}\).

\begin{definition}[\(T_{\mathrm{obs}}\)-identifiability]
\label{def:T-identifiability}
The equivalence classes of \(\mathcal C/{\sim}\) are
\emph{\(T_{\mathrm{obs}}\)-identifiable} if
\[
\Theta_T(A)\cap\Theta_T(B)=\varnothing
\qquad
\text{for every }A\neq B.
\]
\end{definition}

Thus \(T_{\mathrm{obs}}\)-identifiability means that the topological summary determines the relevant equivalence class, although it need not be constant within a class. If
\(\Theta_T(A)\cap\Theta_T(B)\neq\varnothing\) for two distinct classes, no procedure using only \(T_{\mathrm{obs}}\) can distinguish them uniformly. For statistical recovery, disjointness alone is insufficient.

\begin{definition}[Topological separation]
\label{def:topological-separation}
Two equivalence classes
\(A,B\in\mathcal C/{\sim}\) are \emph{topologically separated with margin \(\eta>0\)} if
\[
d_{\mathcal B}\bigl(\Theta_T(A),\Theta_T(B)\bigr)
:=
\inf_{\substack{M\in A\\M'\in B}}
d_{\mathcal B}\bigl(\theta_T(M),\theta_T(M')\bigr)
\geq\eta.
\]
\end{definition}

The following example illustrates both the separating power and the limitations of observational topology.

\begin{example}[Functional graph bands versus a latent circular mechanism]
\label{ex:functional-latent-circle}
Let the observed variables be \((X,Y)\in\mathbb R^2\), and
define \(T_{\mathrm{obs}}(P^{\mathrm{obs}}) = \beta_1\bigl(\operatorname{supp}P^{\mathrm{obs}}\bigr)\), with homology over a fixed field. We regard this summary as
taking values in \(\mathbb N_0\), equipped with the metric
\(d_{\mathcal B}(m,n)=|m-n|\). Fix \(a<b\) and
\(\delta>0\), and consider the following model families:

\begin{itemize}
\item \(\mathcal F_\to\): \(X\) has support \([a,b]\),
\(\varepsilon\perp X\) has support \([-\delta,\delta]\), and
\(Y=f(X)+\varepsilon\) for a continuous function \(f\);

\item \(\mathcal F_\leftarrow\): \(Y\) has support
\([a,b]\), \(\varepsilon'\perp Y\) has support
\([-\delta,\delta]\), and
\(X=g(Y)+\varepsilon'\) for a continuous function \(g\);

\item \(\mathcal F_\circ: U\sim\operatorname{Unif}[0,2\pi),
\ E=(E_X,E_Y)\perp U\), where \(E\) has support equal to the closed disk of radius \(\delta<r\), and
\(X=r\cos U+E_X, \ Y=r\sin U+E_Y\).
\end{itemize}

The notation \(\mathcal F_\circ\) refers only to the circular
geometry of the observational support. Causally, this is an
acyclic latent-variable model. In an expanded representation,
both \(U\) and \(E\) are latent common parents:
\(U\longrightarrow X, \ \ U\longrightarrow Y, \ \ E\longrightarrow X, \ \ E\longrightarrow Y\). See Figure \ref{fig:dag-topological-separation} for the visual representation of this example.

\begin{figure}[ht]
\centering
\begin{tikzpicture}[
    >={Stealth[length=2.2mm,width=1.6mm]},
    obs/.style={circle, draw=black, thick,
                minimum size=8mm, inner sep=0pt},
    lat/.style={circle, draw=black, dashed, thick,
                minimum size=8mm, inner sep=0pt},
    ed/.style={->, thick, shorten <=1pt, shorten >=1.2pt},
    lab/.style={font=\small, align=center},
    ax/.style={->, gray!55, line width=0.4pt}
]
 
%% ============================ (a) F_to ============================
\begin{scope}[shift={(0,0)}]
  \node[obs] (X1) at (0,0)     {$X$};
  \node[obs] (Y1) at (1.7,0)   {$Y$};
  \node[lat] (e1) at (1.7,1.3) {$\varepsilon$};
  \draw[ed] (X1) -- (Y1);
  \draw[ed] (e1) -- (Y1);
 
  \begin{scope}[shift={(0.85,-2.5)}]   % support: band about graph f
    \draw[ax] (-1.35,0) -- (1.45,0) node[below, font=\tiny, black] {$X$};
    \draw[ax] (0,-1.25) -- (0,1.35) node[left,  font=\tiny, black] {$Y$};
    \draw[line width=7pt, gray!35, line cap=round]
         (-1.05,-0.45) .. controls (-0.35,0.75) and (0.35,-0.75) .. (1.05,0.45);
    \draw[line width=0.5pt, gray!70, line cap=round]
         (-1.05,-0.45) .. controls (-0.35,0.75) and (0.35,-0.75) .. (1.05,0.45);
  \end{scope}
 
  \node[lab] at (0.85,-4.75)
    {\textbf{(a)} $\mathcal F_\to$:\; $X\to Y$\\[1pt] $\beta_1=0$};
\end{scope}
 
%% ============================ (b) F_leftarrow ============================
\begin{scope}[shift={(4.5,0)}]
  \node[obs] (X2) at (0,0)     {$X$};
  \node[obs] (Y2) at (1.7,0)   {$Y$};
  \node[lat] (e2) at (0,1.3)   {$\varepsilon'$};
  \draw[ed] (Y2) -- (X2);
  \draw[ed] (e2) -- (X2);
 
  \begin{scope}[shift={(0.85,-2.5)}]   % support: band about graph g
    \draw[ax] (-1.35,0) -- (1.45,0) node[below, font=\tiny, black] {$X$};
    \draw[ax] (0,-1.25) -- (0,1.35) node[left,  font=\tiny, black] {$Y$};
    \draw[line width=7pt, gray!35, line cap=round]
         (-0.45,-1.05) .. controls (0.75,-0.35) and (-0.75,0.35) .. (0.45,1.05);
    \draw[line width=0.5pt, gray!70, line cap=round]
         (-0.45,-1.05) .. controls (0.75,-0.35) and (-0.75,0.35) .. (0.45,1.05);
  \end{scope}
 
  \node[lab] at (0.85,-4.75)
    {\textbf{(b)} $\mathcal F_\leftarrow$:\; $X\leftarrow Y$\\[1pt] $\beta_1=0$};
\end{scope}
 
%% ============================ (c) F_circ ============================
\begin{scope}[shift={(9.0,0)}]
  \node[obs] (X3) at (0,0)      {$X$};
  \node[obs] (Y3) at (1.7,0)    {$Y$};
  \node[lat] (U)  at (0.85,1.3) {$U$};
  \node[lat] (E)  at (0.85,-1.1){$E$};
  \draw[ed] (U) -- (X3);
  \draw[ed] (U) -- (Y3);
  \draw[ed] (E) -- (X3);
  \draw[ed] (E) -- (Y3);
  \begin{scope}[shift={(0.85,-3.0)}]   % support: annulus
    \draw[ax] (-1.45,0) -- (1.55,0) node[below, font=\tiny, black] {$X$};
    \draw[ax] (0,-1.35) -- (0,1.45) node[left,  font=\tiny, black] {$Y$};
    \fill[gray!35, even odd rule] (0,0) circle (1.15) (0,0) circle (0.45);
    \draw[gray!55, line width=0.4pt] (0,0) circle (1.15);
    \draw[gray!55, line width=0.4pt] (0,0) circle (0.45);
    \draw[dashed, gray!80, line width=0.5pt] (0,0) circle (0.8);
    \draw[<->, gray!80, line width=0.4pt] (0,0) -- (56:0.8)
         node[midway, above left=-2pt, font=\tiny, black] {$r$};
  \end{scope}
  \node[lab] at (0.85,-4.75)
    {\textbf{(c)} $\mathcal F_\circ$:\; latent circle\\[1pt] $\beta_1=1$};
\end{scope}
\end{tikzpicture}
\caption{Causal structure (top) and observational support (bottom) for the three model families of Example~\ref{ex:functional-latent-circle}. Dashed
nodes are unobserved; solid nodes are observed. In (c) the exogenous noise $E=(E_X,E_Y)$ is a single \emph{bivariate} latent parent of both $X$ and $Y$: its support is the closed disk of radius $\delta$, so $E_X$ and $E_Y$
are dependent and must not be drawn as two separate noise nodes. The supports in (a) and (b) are homeomorphic to a rectangle, hence $\beta_1=0$, so $\mathcal F_\to$ and $\mathcal F_\leftarrow$ are not separated by $T_{\mathrm{obs}}$ even though they are causally distinct; the
support in (c) is the annulus of radii $r\pm\delta$, hence $\beta_1=1$, and $\mathcal F_\circ$ is topologically separated from both with margin $\eta=1$.}
\label{fig:dag-topological-separation}
\end{figure}

\end{example}

\begin{proposition}[Topological separation and its limit]
\label{prop:functional-circle-separation}
For the model families of Example~\ref{ex:functional-latent-circle},
\[
T_{\mathrm{obs}}(P_M^{\mathrm{obs}})
=
\begin{cases}
0,
&
M\in\mathcal F_\to\cup\mathcal F_\leftarrow,
\\
1,
&
M\in\mathcal F_\circ.
\end{cases}
\]
Consequently, \(T_{\mathrm{obs}}\) identifies the coarser
partition with classes
\(\mathcal F_\to\cup\mathcal F_\leftarrow\) and
\(\mathcal F_\circ\), with margin \(1\), but it does not
identify the finer partition that separates
\(\mathcal F_\to\) from \(\mathcal F_\leftarrow\).
\end{proposition}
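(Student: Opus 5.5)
The plan is to compute the support of the observational law in each family, identify its homotopy type, and then read off the separation claims from Definitions~\ref{def:T-identifiability} and~\ref{def:topological-separation}.

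\emph{Supports in the additive-noise families.} For \(M\in\mathcal F_\to\), independence of \(X\) and \(\varepsilon\) gives that \((X,\varepsilon)\) has support \([a,b]\times[-\delta,\delta]\), since the support of a product measure is the product of supports. The map \(h(x,e)=(x,f(x)+e)\) is a homeomorphism of \(\mathbb R^2\), with inverse \((x,y)\mapsto(x,y-f(x))\). The support of a pushforward under a homeomorphism is the image of the support, so
\[
\operatorname{supp}P_M^{\mathrm{obs}}=\{(x,y):x\in[a,b],\ |y-f(x)|\leq\delta\}.
\]
This set is homeomorphic to a rectangle and hence contractible, so \(\beta_1=0\). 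The case \(M\in\mathcal F_\leftarrow\) is the same argument with coordinates swapped and \(g\) in place of \(f\).

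\emph{Support in the circular family.} For \(M\in\mathcal F_\circ\), the pair \((U,E)\) has support \(S^1\times\overline B(0,\delta)\), after identifying \(U\) with the point \((\cos U,\sin U)\) of the circle. The observation is the image of \((U,E)\) under the continuous map \((u,e)\mapsto ru+e\). Because the domain is compact, this image is closed. Using continuity together with compactness, I will check that the support of the pushforward equals this image: every point of the image has neighbourhoods whose preimages contain open sets of positive mass. The image is exactly the closed annulus
\[
\{z:r-\delta\leq\|z\|\leq r+\delta\}.
\]
One inclusion is the triangle inequality. For the other, take \(z\) in the annulus, set \(u=z/\|z\|\) and \(e=z-ru\); then \(\|e\|=|\|z\|-r|\leq\delta\). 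Since \(\delta<r\), the annulus deformation retracts onto the circle of radius \(r\), so \(\beta_1=1\) over any field.

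\emph{Conclusion.} The attainable sets are therefore \(\{0\}\) for \(\mathcal F_\to\cup\mathcal F_\leftarrow\) and \(\{1\}\) for \(\mathcal F_\circ\). They are disjoint and at distance \(|1-0|=1\), which gives the coarse identifiability with margin \(1\). For the finer partition, \(\Theta_T(\mathcal F_\to)\cap\Theta_T(\mathcal F_\leftarrow)=\{0\}\neq\varnothing\), so Definition~\ref{def:T-identifiability} fails. Both families are nonempty; for example, take uniform \(X\) and \(\varepsilon\) with \(f=0\). The main step needing care is the support-of-pushforward identity in the circular case, which is why I rely on compactness and continuity there. In the additive cases the homeomorphism argument makes the identity immediate.
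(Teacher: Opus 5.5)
Your proposal is correct and follows the paper's proof. In both, each support is computed as the continuous image of a product support, and $\beta_1$ is then read off from the homotopy type: a contractible band in the functional families, and an annulus retracting onto $C_r$ in the circular family. The differences are minor. You obtain contractibility of the band from the homeomorphism $(x,e)\mapsto(x,f(x)+e)$ with the rectangle, rather than from the paper's straight-line retraction onto the graph of $f$. You also verify explicitly the support-of-pushforward step, the annulus identity, and the nonemptiness of $\mathcal F_\to$ and $\mathcal F_\leftarrow$, which the paper leaves implicit.
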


\begin{proof}
For \(M\in\mathcal F_\to\), independence gives
\(\operatorname{supp}\mathcal L(X,\varepsilon) = [a,b]\times[-\delta,\delta]\). The continuous map \((x,e)\mapsto(x,f(x)+e)\) sends this compact set onto the
support of \((X,Y)\). Hence
\[
\operatorname{supp}P_M^{\mathrm{obs}}
=
\{(x,y):x\in[a,b],\ |y-f(x)|\leq\delta\}.
\]
This graph band deformation retracts onto the graph of \(f\) through
\[
H_s(x,y)
=
(x,(1-s)y+s f(x)).
\]
The homotopy remains in the band because
\[
|(1-s)y+s f(x)-f(x)|
=
(1-s)|y-f(x)|
\leq\delta.
\]
The graph is homeomorphic to \([a,b]\), so the support is contractible and has \(\beta_1=0\). The reverse family is identical after exchanging the coordinates.

Since \(u\mapsto r(\cos u,\sin u)\) maps the support of \(U\) onto \(C_r\), independence and compactness give
\(\operatorname{supp}P_M^{\mathrm{obs}} =C_r\oplus D_\delta\). Applying the continuous structural map shows that the observational support is the Minkowski sum of
the circle \(C_r=\{z:\|z\|_2=r\}\) and the disk \(D_\delta\):
\[
\operatorname{supp}P_M^{\mathrm{obs}}
=
C_r\oplus D_\delta
=
\{z:r-\delta\leq\|z\|_2\leq r+\delta\}.
\]
Since \(\delta<r\), this is an annulus. The radial homotopy
\[
H_s(z)
=
\left(
(1-s)+s\frac{r}{\|z\|_2}
\right)z
\]
deformation retracts the annulus onto \(C_r\simeq S^1\), so its first Betti number is \(1\). The stated separation and the failure to orient the two functional models follow.
\end{proof}

The separation margin converts stable topological estimation into restricted model-class recovery.

\begin{proposition}[Separation implies robust distinguishability]
\label{prop:stability-separation-discovery}
Suppose the classes in \(\mathcal C/{\sim}\) are pairwise
separated by a common margin \(\eta>0\). Assume that
\(\mathcal C/{\sim}\) is finite or, more generally, that a
measurable nearest-class selection with a fixed tie-breaking
rule is available.

Let \(M\in A\) be the true model and let
\(\widehat\theta_n\in\mathcal B\) estimate \(\theta_T(M)\).
For \(x\in\mathcal B\) and \(S\subseteq\mathcal B\), write
\(d_{\mathcal B}(x,S) = \inf_{\theta\in S}d_{\mathcal B}(x,\theta)\). Define
\(\widehat A_n \in \arg\min_{B\in\mathcal C/{\sim}}
d_{\mathcal B}\bigl(\widehat\theta_n,\Theta_T(B)\bigr)\). If
\(d_{\mathcal B}(\widehat\theta_n,\theta_T(M)) < \frac{\eta}{2}\), then the minimizer is unique and
\(\widehat A_n=A\). Consequently, if \(d_{\mathcal B}(\widehat\theta_n,\theta_T(M)) = o_p(1)\), then
\(\mathbb P(\widehat A_n=A)\longrightarrow1\).
\end{proposition}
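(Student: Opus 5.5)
The argument is a triangle-inequality comparison of distances to the attainable summary sets, followed by a probability bound. Fix the true model \(M\in A\) and write \(\varepsilon_n=d_{\mathcal B}(\widehat\theta_n,\theta_T(M))\), assuming \(\varepsilon_n<\eta/2\).

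First, since \(\theta_T(M)\in\Theta_T(A)\), the definition of the point-to-set distance gives
\[
d_{\mathcal B}\bigl(\widehat\theta_n,\Theta_T(A)\bigr)\leq\varepsilon_n<\eta/2 .
\]

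Second, fix any class \(B\neq A\) and any \(M'\in B\). The triangle inequality and the margin condition of Definition~\ref{def:topological-separation} give
\[
d_{\mathcal B}\bigl(\widehat\theta_n,\theta_T(M')\bigr)\geq d_{\mathcal B}\bigl(\theta_T(M),\theta_T(M')\bigr)-\varepsilon_n\geq\eta-\varepsilon_n .
\]
Taking the infimum over \(M'\in B\) yields
\[
d_{\mathcal B}\bigl(\widehat\theta_n,\Theta_T(B)\bigr)\geq\eta-\varepsilon_n>\eta/2>\varepsilon_n .
\]
The point to watch is that the infimum is taken after a uniform lower bound is in hand. Otherwise one only gets a non-strict inequality, and strictness is what the uniqueness claim needs. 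Here the uniform bound \(\eta-\varepsilon_n\) does not depend on \(M'\), so the strict gap \(\eta-\varepsilon_n>\varepsilon_n\) survives the infimum.

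Combining the two steps, class \(A\) attains a value strictly below \(\varepsilon_n\), while every other class has value at least \(\eta-\varepsilon_n>\varepsilon_n\). Hence \(A\) is the unique minimizer. The tie-breaking rule is therefore irrelevant on this event, and \(\widehat A_n=A\). This holds even when \(\mathcal C/{\sim}\) is infinite, because the separation bound is uniform over \(B\neq A\). The assumption of finiteness, or of a measurable nearest-class selection, is used only to make \(\widehat A_n\) a well-defined random variable, so that the probability statement makes sense.

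Finally, the event inclusion \(\{\varepsilon_n<\eta/2\}\subseteq\{\widehat A_n=A\}\) gives
\[
\mathbb P(\widehat A_n=A)\geq\mathbb P(\varepsilon_n<\eta/2).
\]
The right-hand side tends to one when \(\varepsilon_n=o_p(1)\).
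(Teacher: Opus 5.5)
Your proof is correct and is essentially the paper's argument: you bound \(d_{\mathcal B}(\widehat\theta_n,\Theta_T(A))\leq\varepsilon_n\), use the margin and the triangle inequality to get \(d_{\mathcal B}(\widehat\theta_n,\Theta_T(B))\geq\eta-\varepsilon_n\) uniformly in \(B\neq A\), and then pass to probabilities via the inclusion \(\{\varepsilon_n<\eta/2\}\subseteq\{\widehat A_n=A\}\). One wording slip: class \(A\) attains a value at most \(\varepsilon_n\), not strictly below it, but this still suffices because \(\varepsilon_n<\eta-\varepsilon_n\).
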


\begin{proof}
Set \(e_n = d_{\mathcal B}(\widehat\theta_n,\theta_T(M))\). Because \(\theta_T(M)\in\Theta_T(A)\),
\(d_{\mathcal B}\bigl(\widehat\theta_n,\Theta_T(A)\bigr)
\leq e_n\). For every \(B\neq A\) and every
\(\theta'\in\Theta_T(B)\), class separation gives
\(d_{\mathcal B}(\theta_T(M),\theta')\geq\eta\). Therefore, by the triangle inequality,
\(d_{\mathcal B}(\widehat\theta_n,\theta') \geq
\eta-e_n\). Taking the infimum over \(\theta'\in\Theta_T(B)\) yields \(d_{\mathcal B}\bigl(\widehat\theta_n,\Theta_T(B)\bigr)
\geq \eta-e_n\). If \(e_n<\eta/2\), then
\(d_{\mathcal B}\bigl(\widehat\theta_n,\Theta_T(A)\bigr) <
\frac{\eta}{2} < d_{\mathcal B}\bigl(\widehat\theta_n,\Theta_T(B)\bigr)\) for every \(B\neq A\). Hence \(A\) is the unique nearest
class. Finally, \(e_n=o_p(1)\) implies
\(\mathbb P(e_n<\eta/2)\to1\), which proves the consistency
claim.
\end{proof}

This result identifies the appropriate scope of topology-assisted causal discovery. Topology can discriminate among restricted mechanism classes when their attainable summaries are separated by more than the estimation error. It does not, without additional assumptions, orient causal edges or identify interventions. Such claims still require
information such as conditional independences, independent-noise asymmetry, temporal order, interventions, or invariance across environments~\citep{spirtes2000,pearl2009}.

\begin{remark}
The support-based summary in
Example~\ref{ex:functional-latent-circle} is intended as a transparent population illustration. Literal support topology may be unstable under contamination or unbounded noise. Practical implementations would require a stable density, high-probability-region, or distance-to-measure construction together with a consistency result for its empirical estimator.
\end{remark}

%%%%%%%%%%%%%%%%%%%%%%%%%%%%%%%%%%%%%%%%%%%%%%%%%%%%

\section{Identification under topological ignorability}
\label{sec:topological-ignorability}

All previous identification results were stated under joint conditional exchangeability, \((Y^0,Y^1)\perp A\mid X\). For identification of treatment-specific marginal laws, the weaker arm-specific condition
\[
Y^a\perp A\mid X,
\qquad
a\in\{0,1\},
\]
is sufficient; we call this \emph{weak conditional exchangeability}.

For a fixed, possibly non-injective topological representation, the covariate-standardized effect \(\tau_{\mathrm{dist}}\) may instead be identified under the still more target-specific condition of \emph{topological ignorability}, introduced and developed by Saki and Faghihi~\citep{sakifaghihi2026}. This is not a general robustness guarantee against hidden confounding: it is an alternative identifying assumption concerning only the selected topological summary.

Fix versions of the relevant regular conditional laws and write
\[
P_{a,x}
:=
\mathcal L(Y^a\mid X=x),
\qquad
P^{\mathrm{fact}}_{a,x}
:=
\mathcal L(Y^a\mid A=a,X=x).
\]
The first is the treatment-\(a\) potential-outcome law in stratum \(x\); the second is its law among units factually assigned to treatment \(a\). Recall also the observed conditional kernel
\[
Q_a(x,\cdot)
=
\mathcal L(Y\mid A=a,X=x).
\]
Consistency gives \(P^{\mathrm{fact}}_{a,x} = Q_a(x,\cdot)\) for \(P_{X\mid A=a}\)-almost every \(x\). Under strict positivity, this equality holds for \(P_X\)-almost every \(x\).

\begin{definition}[Conditional topological ignorability]
\label{def:topological-ignorability}
Let
\(T_{\mathrm{dist}}: \mathcal P_\star(\mathcal Y)\to\mathcal B
\) be fixed, and suppose that \(P_{a,x}\) and
\(P^{\mathrm{fact}}_{a,x}\) belong to its domain for
\(a=0,1\) and \(P_X\)-almost every \(x\).
Conditional topological ignorability relative to
\(T_{\mathrm{dist}}\) holds if
\[
T_{\mathrm{dist}}(P_{a,x})
=
T_{\mathrm{dist}}(P^{\mathrm{fact}}_{a,x}),
\qquad
a\in\{0,1\},
\]
for \(P_X\)-almost every \(x\).
\end{definition}

The following proposition restates results of Saki and
Faghihi~\citep{sakifaghihi2026}.

\begin{proposition}[Identification under topological ignorability]
\label{prop:tau-dist-ignorability}
Assume consistency and strict positivity.

\begin{enumerate}
\item Weak conditional exchangeability implies conditional topological ignorability for every \(T_{\mathrm{dist}}\). Hence the joint exchangeability assumption used in the preceding sections also implies conditional topological ignorability.

\item Suppose treatment is binary and \(T_{\mathrm{dist}}\) is injective on a class containing \(P_{a,x}\) and
\(P^{\mathrm{fact}}_{a,x}\) for \(a=0,1\) and
\(P_X\)-almost every \(x\). Then conditional topological ignorability is equivalent to weak conditional exchangeability: \(Y^a\perp A\mid X, \ a\in\{0,1\}\).

\item Suppose conditional topological ignorability holds and
\(x\longmapsto \left\| T_{\mathrm{dist}}(P_{1,x})
- T_{\mathrm{dist}}(P_{0,x}) \right\|_{\mathcal B}
\) is measurable and integrable. Then
\[
\tau_{\mathrm{dist}}
=
\mathbb E_X\!\left[
\left\|
T_{\mathrm{dist}}\bigl(Q_1(X,\cdot)\bigr)
-
T_{\mathrm{dist}}\bigl(Q_0(X,\cdot)\bigr)
\right\|_{\mathcal B}
\right].
\]
Consequently, \(\tau_{\mathrm{dist}}\) is identified from the observational law.
\end{enumerate}
\end{proposition}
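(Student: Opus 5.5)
The argument proceeds item by item. Everything reduces to one observation: under weak conditional exchangeability, the causal stratum law and the factual stratum law coincide, and under consistency the factual law equals the observed kernel.

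\emph{Item 1.} Fix \(a\). Weak conditional exchangeability \(Y^a\perp A\mid X\) implies that, for \(P_X\)-almost every \(x\) and every event \(B\in\Sigma_{\mathcal Y}\),
\(\mathbb P(Y^a\in B\mid A=a,X=x)=\mathbb P(Y^a\in B\mid X=x)\).
Strict positivity makes the left-hand kernel well defined \(P_X\)-almost everywhere. Since \(\Sigma_{\mathcal Y}\) is countably generated (\(\mathcal Y\) is standard Borel), we can work with a countable \(\pi\)-system. On it the two kernels agree outside a single \(P_X\)-null set, and hence \(P_{a,x}=P^{\mathrm{fact}}_{a,x}\) as measures for almost every \(x\). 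Applying \(T_{\mathrm{dist}}\) to both sides gives topological ignorability for any representation. The second sentence of item 1 is immediate, because joint independence \((Y^0,Y^1)\perp A\mid X\) implies each marginal independence.

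\emph{Item 2.} One direction is item 1. For the converse, injectivity of \(T_{\mathrm{dist}}\) on the stated class turns \(T_{\mathrm{dist}}(P_{a,x})=T_{\mathrm{dist}}(P^{\mathrm{fact}}_{a,x})\) into \(P_{a,x}=P^{\mathrm{fact}}_{a,x}\) for almost every \(x\). It remains to show that this equality implies \(Y^a\perp A\mid X\). This is the one place where binarity of \(A\) is used.
\begin{itemize}
\item Write the causal conditional law as a mixture, \(P_{a,x}=e_a(x)P^{\mathrm{fact}}_{a,x}+e_{1-a}(x)\mathcal L(Y^a\mid A=1-a,X=x)\).
\item Substituting \(P_{a,x}=P^{\mathrm{fact}}_{a,x}\) and using strict positivity, \(e_{1-a}(x)>0\), we get \(\mathcal L(Y^a\mid A=1-a,X=x)=P^{\mathrm{fact}}_{a,x}\).
\item So the conditional law of \(Y^a\) given \((A,X)\) does not depend on \(A\), which is exactly \(Y^a\perp A\mid X\).
\end{itemize}
The only care needed is with version issues. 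The mixture identity requires \(\mathcal L(Y^a\mid A=1-a,X=x)\) to be defined \(P_X\)-almost everywhere, which again uses positivity. I expect this bookkeeping to be the main obstacle, rather than any real difficulty.

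\emph{Item 3.} Topological ignorability gives \(T_{\mathrm{dist}}(P_{a,x})=T_{\mathrm{dist}}(P^{\mathrm{fact}}_{a,x})\) for almost every \(x\). As noted before the definition, consistency and strict positivity give \(P^{\mathrm{fact}}_{a,x}=Q_a(x,\cdot)\) for \(P_X\)-almost every \(x\). Hence \(T_{\mathrm{dist}}(P_{a,x})=T_{\mathrm{dist}}(Q_a(x,\cdot))\) outside a null set, for both \(a\). The integrand defining \(\tau_{\mathrm{dist}}\) therefore equals \(\|T_{\mathrm{dist}}(Q_1(x,\cdot))-T_{\mathrm{dist}}(Q_0(x,\cdot))\|_{\mathcal B}\) almost everywhere. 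By the assumed measurability and integrability, the two integrals coincide. The right-hand side depends only on \(P_X\) and the observed kernels, so \(\tau_{\mathrm{dist}}\) is identified. This mirrors Proposition~\ref{prop:identification-tau-dist}, with equality of laws replaced by equality of their representations.
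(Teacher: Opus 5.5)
Your proof is correct. The paper does not actually prove this proposition. It says explicitly that it restates results of Saki and Faghihi~\citep{sakifaghihi2026}, and supplies only the preceding remark that consistency and strict positivity give $P^{\mathrm{fact}}_{a,x}=Q_a(x,\cdot)$ for $P_X$-almost every $x$.

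Your argument is therefore a self-contained verification, built from the ingredients the paper sets up:
\begin{itemize}
\item Item~1 is the standard conversion of $Y^a\perp A\mid X$ into equality of the causal and factual stratum laws. The countably generated $\sigma$-algebra supplies a single null set.
\item Item~3 is Proposition~\ref{prop:identification-tau-dist}, with equality of laws weakened to equality of their representations.
\item The converse in item~2 is the two-component mixture argument.
\end{itemize}

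The citation route buys brevity and attribution. Your route buys transparency about where each hypothesis enters.
\begin{itemize}
\item Binarity is used only in the converse of item~2, through $e_a+e_{1-a}=1$. For a multi-valued treatment, $P_{a,x}=P^{\mathrm{fact}}_{a,x}$ would only force the pooled law of $Y^a$ over all arms other than $a$ to match $P^{\mathrm{fact}}_{a,x}$. That yields $Y^a\perp \mathbf 1\{A=a\}\mid X$, not $Y^a\perp A\mid X$.
\item Strict positivity serves to define both the factual and the counterfactual kernels $P_X$-almost everywhere.
\end{itemize}

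Two cosmetic points, neither of which affects the conclusion:
\begin{itemize}
\item In item~1, the first sentence should read ``for every $B$, for $P_X$-almost every $x$.'' As written, it already asserts the uniform statement that your $\pi$-system step is meant to establish.
\item In item~3, the observed integrand is only known to agree $P_X$-almost everywhere with the measurable function in the hypothesis. Its expectation should therefore be read with respect to the $P_X$-completion, or with suitably chosen measurable versions of $Q_a$.
\end{itemize}
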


Non-injectivity is necessary but not sufficient for conditional topological ignorability to be strictly weaker than weak exchangeability: the model class must contain distinct relevant laws in the same fiber of \(T_{\mathrm{dist}}\).

This result identifies only the covariate-standardized within-stratum contrast. In general, it does not identify the marginal effect
\(\Delta_{\mathrm{dist}} = T_{\mathrm{dist}}\bigl(\mathcal L(Y^1)\bigr) - T_{\mathrm{dist}}\bigl(\mathcal L(Y^0)\bigr)\), because \(T_{\mathrm{dist}}\) need not commute with mixing over \(X\); see Remark~\ref{rem:mix-first} and
Theorem~\ref{thm:noncommutation-characterization}.

\subsection{Structural interpretation}
\label{subsec:filtered-chambers}

Saki and Faghihi~\citep{sakifaghihi2026} give structural sufficient conditions for topological ignorability through latent reweighting. Suppose that an unobserved variable \(U\) suffices to control treatment selection:
\(Y^a\perp A\mid(X,U)\). Write
\[
\kappa_{a,x,u}
:=
\mathcal L(Y^a\mid X=x,U=u),
\qquad
\Pi_x
:=
\mathcal L(U\mid X=x),
\]
and
\(\Pi^{\mathrm{obs}}_{a,x} := \mathcal L(U\mid A=a,X=x)\). Then, for \(P_X\)-almost every \(x\),
\[
P_{a,x}
=
\int
\kappa_{a,x,u}\,d\Pi_x(u),
\qquad
P^{\mathrm{fact}}_{a,x}
=
\int
\kappa_{a,x,u}\,
d\Pi^{\mathrm{obs}}_{a,x}(u).
\]

Thus hidden confounding changes the mixing law, although both mixtures use the same conditional kernels.

Assume additionally that the relevant conditional laws admit specified density representatives with respect to a common reference measure, so that the superlevel-set construction is a well-defined functional of the law. For a finite threshold set \(\Lambda\), suppose the interventional and factual superlevel filtrations are related by homotopy equivalences that are compatible, up to homotopy, with the inclusions across thresholds. The induced persistence modules over \(\Lambda\) are then isomorphic. Consequently, any $T_{dist}$ that factors through these modules takes the same value on \(P_{a,x}\) and \(P^{\mathrm{fact}}_{a,x}\), which implies conditional topological ignorability~\citep{sakifaghihi2026}.

A stronger sufficient condition applies when the latent-regime densities are piecewise constant on a fixed finite cell complex: no cell may change its superlevel-set membership at any threshold in \(\Lambda\) under the latent reweighting. This gives identical finite filtrations. If, in addition,
\(U\in\{1,\ldots,r\}\), the cell values are affine functions of the mixing weights, and the threshold equalities define hyperplanes in the finite-dimensional probability simplex. Their sign patterns partition the simplex into polyhedral cells. The no-switching criterion is sufficient for the two mixing laws to lie in the same filtered topological chamber, but it is not necessary for their persistence modules or summaries to agree~\citep{sakifaghihi2026}.

Topological ignorability remains an untestable,
representation-specific causal assumption. It permits identification of the selected coarse topological target without identifying the full interventional laws or the ordinary mean effect. It should therefore be accompanied by a clear scientific justification and, where possible, sensitivity analysis.

%%%%%%%%%%%%%%%%%%%%%%%%%%%%%%%%%%%%%%%%%%%%%%%%%%%%

\section{Illustrative examples}
\label{sec:minimal-examples}

We briefly illustrate the three roles that topology may play in the framework: as a representation of individual potential outcomes, as a diagnostic of residual structure, and as a representation of interventional laws.

\paragraph{Shape-valued treatment effects.}
Suppose that \(Y^a\) is a voxelized or triangulated tumour shape in \(\mathbb R^3\) (see Figure \ref{fig:shape-tate}). For a chosen filtration \(\mathcal F\), set
\(Z^a = \Phi\bigl( \operatorname{Dgm}_k(\mathcal F(Y^a))
\bigr)\), where \(\Phi\) takes values in a Banach space \(\mathcal B\), and assume that \(Z^0\) and \(Z^1\) are Bochner integrable. The outcome-level effect is
\[
\operatorname{TATE}_{\Phi,k}
=
\mathbb E[Z^1]-\mathbb E[Z^0]
\in\mathcal B.
\]
For \(k=0,1,2\), it may respectively capture changes in fragmentation, tunnel-like structure, or enclosed cavities. If
\(\mathcal B=C(I)\) for a compact filtration interval \(I\), the effect is a continuous curve over filtration scales.

\begin{figure}[ht]
    \centering
    \includegraphics[width=\textwidth]
    {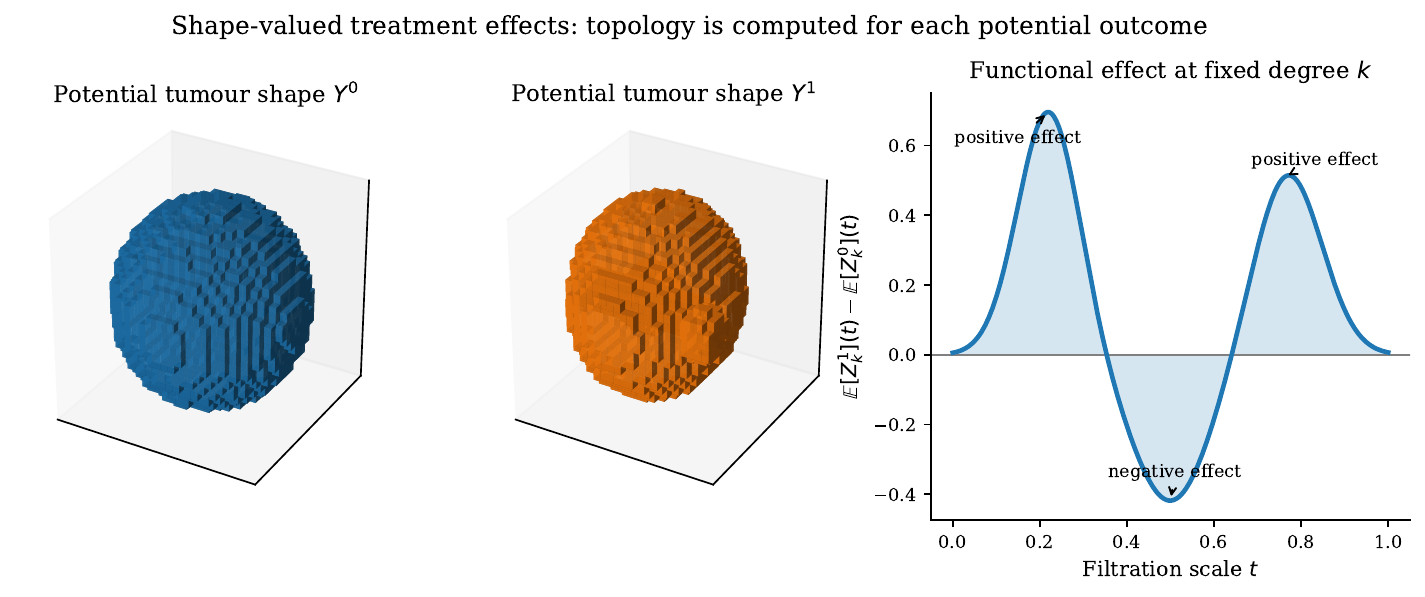}
    \caption{
    Illustration of a shape-valued treatment effect.
    The potential outcomes \(Y^0\) and \(Y^1\) are voxelized tumour shapes whose topological representations may differ in their connected components, tunnel-like structures, and enclosed cavities. When \(\mathcal B=C(I)\), their mean difference is a function of the filtration scale.
    }
    \label{fig:shape-tate}
\end{figure}

\paragraph{Residual topology as a diagnostic.}
Let \(U\sim\operatorname{Unif}(0,2\pi)\) and consider
\[
V=\cos U+\varepsilon_V,
\qquad
W=\sin U+\varepsilon_W,
\]
where the noise variables are centered, mutually independent, and independent of \(U\) as shown in Figure \ref{fig:residual-topology}. Then \(\operatorname{Cov}(V,W)=0\), although for sufficiently small noise the empirical cloud \(\{(V_i,W_i)\}\) may exhibit a persistent \(H_1\) feature. Likewise, persistent homology of a cloud formed from covariates and residuals after fitting a working model may reveal nonlinear structure or model misspecification.

This use is diagnostic only: persistent residual topology does not by itself identify a confounder, establish causal direction, or distinguish confounding from other forms of misspecification.

\begin{figure}[ht]
    \centering
    \includegraphics[width=\textwidth]
    {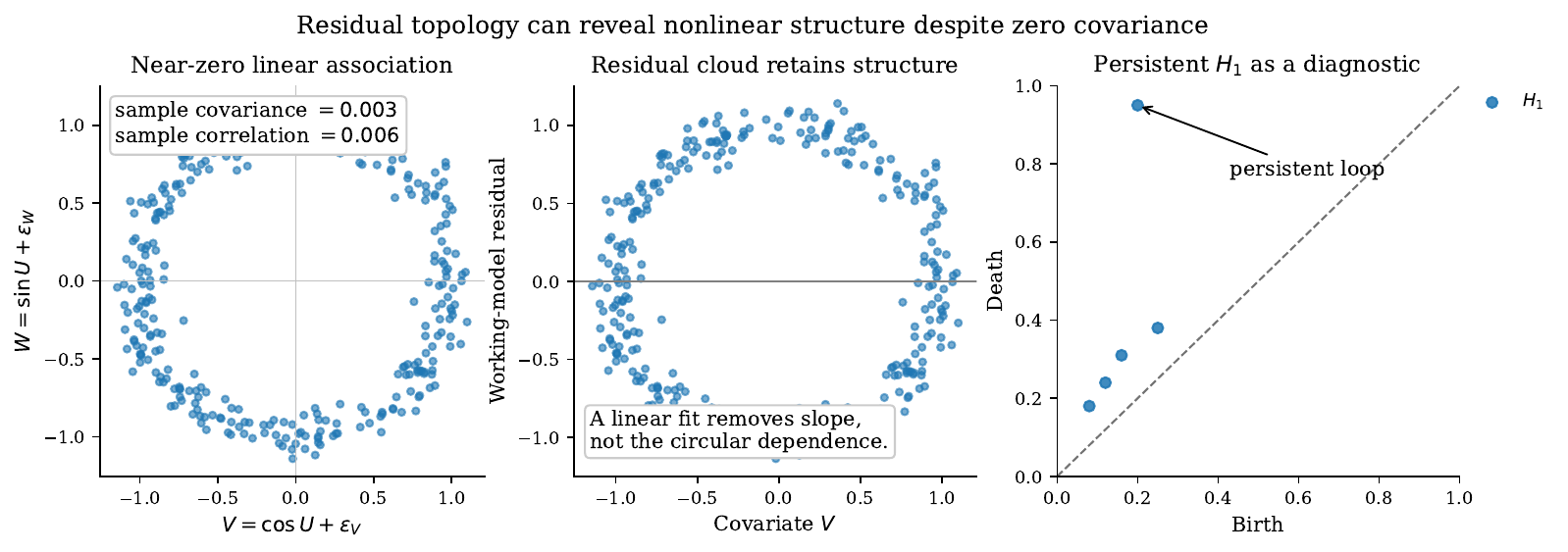}
    \caption{
    Residual topology as a model diagnostic.
    Although \(V\) and \(W\) have zero population covariance, their joint cloud has circular structure. Fitting a linear working model removes linear association but does not remove this nonlinear dependence, which appears as a persistent \(H_1\) feature. Such persistence indicates residual structure but does not, by itself, determine its causal origin.
    }
    \label{fig:residual-topology}
\end{figure}

\paragraph{Distribution-level treatment effects.}
Suppose that \(Y^a\in\mathbb R^d\) has finite second moment and that the scientific target is the geometry of its population law \(P_Y^a=\mathcal L(Y^a)\). For fixed \(m\in(0,1)\), let \(T_{\mathrm{dist}}(P_Y^a) = \Phi\bigl(
D_{k,m}^{\mathrm{DTM}}(P_Y^a) \bigr)\), where \(\Phi\) takes values in a Banach space. The resulting distribution-level effect is
\[
\Delta_{\mathrm{dist}}
=
T_{\mathrm{dist}}(P_Y^1)
-
T_{\mathrm{dist}}(P_Y^0).
\]
It can distinguish, for example, one persistent cluster from two separated persistent clusters even when
\(\mathbb E[Y^1]=\mathbb E[Y^0]\). Figure \ref{fig:distribution-effect} illustrates this example. Unlike the first example, topology is applied after forming the interventional law rather than separately to each potential outcome.

\begin{figure}[ht]
    \centering
    \includegraphics[width=\textwidth]
    {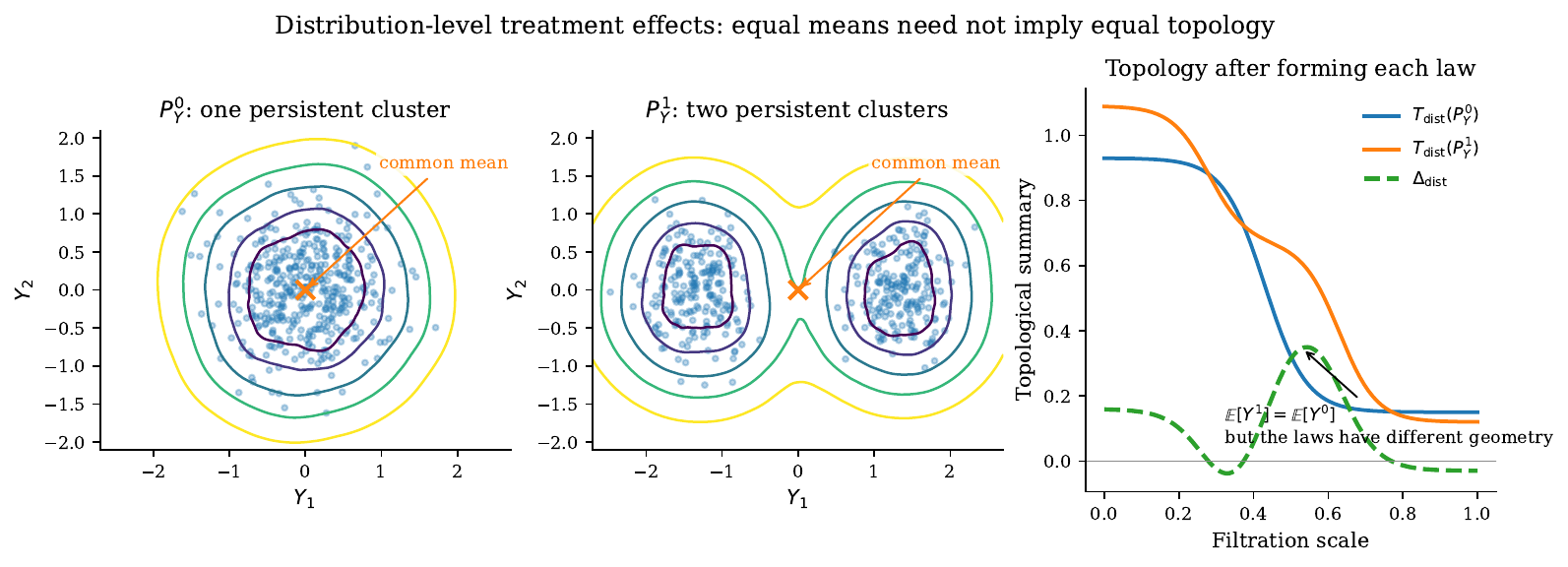}
    \caption{
    Illustration of a distribution-level treatment effect.
    The two interventional laws have the same mean but different geometric structure: \(P_Y^0\) contains one persistent cluster, whereas \(P_Y^1\) contains two separated persistent clusters. The contours show empirical distance-to-measure approximations.
    Here topology is applied to each interventional law after that law has been formed, rather than separately to individual potential outcomes.
    }
    \label{fig:distribution-effect}
\end{figure}

%%%%%%%%%%%%%%%%%%%%%%%%%%%%%%%%%%%%%%%%%%%%%%%%%%%%%

\section{Conclusion}
\label{sec:conclusion}

This paper formulated Topological Causal Data Analysis as a four-layer framework consisting of an observation space, a causal-model class, a topological representation, and a causal query. Keeping these layers separate makes clear that topology neither defines interventions nor replaces causal assumptions. Its role is to provide stable, shape-sensitive representations of structured outcomes and probability laws.

The framework distinguishes outcome-level TCDA, which averages topological representations of individual potential outcomes, from distribution-level TCDA, which applies topology to interventional outcome laws. These operations generally do not commute and therefore define different causal targets. The distinction matters when treatment changes clustering, connectivity, loops, cavities, or other population-level geometry without substantially changing ordinary mean outcomes.

At the outcome level, standard causal methods apply after the structured outcome is mapped into a suitable Banach space. The power-weighted silhouette estimand, its efficient influence function, functional weak convergence, testing procedure, and silhouette-stability results are due to Kim and Lee~\citep{kimlee2026}; vectorization-free Fr\'echet contrasts draw on the metric-space theory of Shin et al.~\citep{shinetal2024}. At the distribution level, causal identification must precede the topological transformation. The framework makes this order explicit, characterizes when outcome- and distribution-level constructions agree, and transfers stability and interventional-law estimation error to the resulting topological effects.

Following Saki and Faghihi~\citep{sakifaghihi2026}, we also considered conditional topological ignorability. On appropriate model classes with distinct laws in the same fiber of a non-injective representation, this target-specific assumption can be weaker than weak conditional exchangeability while still identifying the covariate-standardized topological effect. It does not generally identify the marginal interventional topology, the full interventional laws, or ordinary mean effects.

Finally, observational topology may reveal nonlinear dependence, residual structure, or latent geometry, but it cannot by itself orient causal relations or recover a causal graph. TCDA therefore places topology within the usual causal sequence of target definition, assumptions, identification, estimation, and sensitivity analysis. Future work may develop sharper statistical theory, computational tools, and applications involving images, shapes, networks, spatial fields, and other structured outcomes.
%%%%%%%%%%%%%%%%%%%%%%%%%%%%%%%%%%%%

\bibliographystyle{plainnat}
\bibliography{references}

\end{document}